\newtheorem{theorem}{Theorem}
\newtheorem{observation}[theorem]{Observation}
\long\def\ignore#1{}
\let\phi\varphi
\def\zet{{\mathbb Z}}
\let\subset\subseteq
\DeclareMathOperator\CSP{CSP}
\DeclareMathOperator\symdiff{\Delta}
\DeclareMathOperator\walk{walk}
\def\CSP{\operatorname{CSP}}
\DeclareMathOperator\Even{Even}
\DeclareMathOperator\Odd{Odd}
\newcommand\tinymod[1]{\operatorname{(mod\,#1)}}
\def\en{{\mathbb N}}
\def\CSP{\operatorname{CSP}}
\def\calC{{\mathcal C}}
\def\calE{{\mathcal E}}
\def\PlanarCSP{\operatorname{CSP}_{\textsc{planar}}}
\def\EdgeCSP{\operatorname{CSP}_{\textsc{edge}}}
\begin{document}
\title{Even Delta-Matroids and the Complexity of Planar Boolean CSPs
}

\author{Alexandr Kazda}
\orcid{0000-0002-7338-037X}
\affiliation{%
  \institution{IST Austria}
  \streetaddress{Am Campus 1}
  \city{Klosterneuburg}
  \postcode{3400}
  \country{Austria}}
\affiliation{%
  \institution{Charles University}
  \streetaddress{Sokolovsk\'a
83}
  \city{Prague}
  \postcode{186 75}
  \country{Czech Republic}}
\email{alex.kazda@gmail.com}
\author{Vladimir Kolmogorov}
\affiliation{%
  \institution{IST Austria}
  \streetaddress{Am Campus 1}
  \city{Klosterneuburg}
  \postcode{3400}
  \country{Austria}}
\email{vnk@ist.ac.at}
\author{Michal Rol\'inek}
\affiliation{%
  \institution{IST Austria}
  \streetaddress{Am Campus 1}
  \city{Klosterneuburg}
  \postcode{3400}
  \country{Austria}}
\affiliation{%
  \institution{Max Planck Institute for
  Intelligent Systems}
  \streetaddress{Max-Planck-Ring 4}
  \city{T\"ubingen}
  \postcode{72 076}
  \country{Germany}}
\email{michalrolinek@gmail.com}
\authorsaddresses{Alexandr Kazda (ORCID 0000-0002-7338-037X),
  Department of Algebra, Charles University, Sokolovsk\'a
83, 18675 Praha 8, Czech Republic; Vladimir Kolmogorov, IST Austria, Am Campus 1, 3400
  Klosterneuburg, Austria; Michal Rol\'inek, Max Planck Institute for
  Intelligent Systems, Max-Planck-Ring 4, 72076 T\"ubingen, Germany.}

\begin{abstract}
The main result of this paper is a generalization of the classical blossom
algorithm for finding perfect matchings. Our algorithm can efficiently solve
Boolean CSPs where each variable appears in
exactly two constraints (we call it edge CSP) and all constraints are \emph{even $\Delta$-matroid}
relations (represented by lists of tuples).
As a consequence of this, we settle the complexity classification
of planar Boolean CSPs started by Dvo\v r\'ak and Kupec.

Using a reduction to even $\Delta$-matroids, we then extend the tractability
result to larger classes of $\Delta$-matroids that we call {\em efficiently
coverable}. It properly includes classes that were known to be tractable
before, namely \emph{co-independent}, \emph{compact}, \emph{local},
\emph{linear} and \emph{binary}, with the following caveat: we represent
$\Delta$-matroids by lists of tuples, while the last two use a representation
by matrices. Since an $n\times n$ matrix can represent exponentially many
tuples, our tractability result is not strictly stronger than the known
algorithm for linear and binary $\Delta$-matroids. 
\end{abstract}
\begin{CCSXML}
  <ccs2012>
  <concept>
  <concept_id>10003752.10003809</concept_id>
  <concept_desc>Theory of computation~Design and analysis of
  algorithms</concept_desc>
  <concept_significance>500</concept_significance>
  </concept>
  <concept>
  <concept_id>10002950.10003624.10003633</concept_id>
  <concept_desc>Mathematics of computing~Graph theory</concept_desc>
  <concept_significance>100</concept_significance>
  </concept>
  </ccs2012>
\end{CCSXML}

\ccsdesc[500]{Theory of computation~Design and analysis of algorithms}
\ccsdesc[100]{Mathematics of computing~Graph theory}

\keywords{Constraint satisfaction problem, delta-matroid, blossom
algorithm}

\maketitle

\section{Introduction}
The constraint satisfaction problem (CSP) has been a classical topic in
computer science for decades. Aside from its indisputable practical importance,
it has also heavily influenced theoretical research. The uncovered connections
between CSP and areas such as graph theory, logic, group theory, universal
algebra, or submodular functions provide some striking examples of the
interplay between CSP theory and practice.

We can exhibit such connections especially if we narrow our interest down to
\emph{fixed-template CSPs}, that is, to sets of constraint satisfaction
instances in which the constraints come from a fixed set of relations $\Gamma$.
For any fixed $\Gamma$ the set of instances $\CSP(\Gamma)$ forms a
decision problem; the question if $\CSP(\Gamma)$ is always either
polynomial-time solvable or NP-complete (in other words it avoids intermediate
complexities assuming $P \neq NP$) is known as the CSP dichotomy conjecture
\cite{feder-vardi}. After 20 years of effort by mathematicians and computer
scientists, the dichotomy conjecture seems to be finally
proved~\cite{bulatov-dichotomy}, \cite{zhuk-dichotomy}.

In this work we address two special structural restrictions for CSPs with
Boolean variables. One is limiting to at most two constraints per variable and
the other requires the constraint network to have a planar representation. The
first type, introduced by Feder \cite{feder-delta-matroids-fanout}, has very
natural interpretation as CSPs in which edges play the role of variables and
nodes the role of constraints, which is why we choose to refer to it as
\emph{edge CSP}. It was Feder who showed the following hardness result: Assume
the constraint language $\Gamma$ contains both unary constant relations (that
is, constant 0 and constant 1). Then unless 
all relations in $\Gamma$ are $\Delta$-matroids, the edge CSP with constraint
language $\Gamma$ has the same complexity as the unrestricted CSP with constraint
language $\Gamma$. Since then, there has also been progress on the algorithmic side.
Several tractable (in the sense of being polynomial time solvable) classes of
$\Delta$-matroids were identified \cite{feder-delta-matroids-fanout,feder-ford-matroids,Istrate97lookingfor,Dalmau2003,Geelen2003377,dvorak-kupec-planar-csp}. A
recurring theme is the connection between $\Delta$-matroids and matching
problems.

Recently, a setting for planar CSPs was formalized by Dvo\v{r}\' ak and Kupec
\cite{dvorak-kupec-planar-csp}. In their work, they provide certain hardness
results together with a reduction of the remaining cases to Boolean edge CSP.
Dvo\v r\'ak and Kupec's results imply that completing the complexity classification of Boolean planar
CSPs is equivalent to establishing the complexity of (planar) Boolean edge CSP where all
the constraints are \emph{even $\Delta$-matroids}. In their paper, Dvo\v{r}\'
ak and Kupec provided a tractable subclass of even $\Delta$-matroids along with
computer-aided evidence that the subclass (matching realizable even
$\Delta$-matroids) covers all even $\Delta$-matroids of arity at most 5.
However, it turns out that there exist even $\Delta$-matroids of arity 6 that
are not matching realizable; we provide an example of such a $\Delta$-matroid
in Appendix~\ref{sec:appendix}.

The main result of our paper is a generalization of the classical Edmonds'
blossom-shrinking algorithm for matchings~\cite{Edmonds:65} that we use to efficiently solve edge CSPs with even $\Delta$-matroid
constraints. This settles the complexity classification of planar CSP. Moreover, we give an extension of the algorithm to cover a wider class of
$\Delta$-matroids. This extension subsumes (to our best knowledge) all previously known tractable
classes. This paper is the journal version of the conference
article~\cite{kazda-kolmogorov-rolinek-soda}.

One notable problem that our paper leaves open is how to generalize the
Tutte-Berge formula\cite{tutte-matchings,berge-graph-theory}. By this formula,
the size of a maximum matching in a graph $G$ is $1/2 \cdot \min_{U\subset
V(G)}(|V(G)|+|U|-\mathrm{odd}(G-U))$ where $\mathrm{odd}(G-U)$ counts the
number of odd components of the graph we obtain from $G$ by removing all
vertices in $U$. Since edge CSPs generalize graph matchings, it would be
satisfying to obtain a similar formula for, say, the minimal number of
inconsistent variables in a solution of an edge CSP with even $\Delta$-matroid
constraints. However, we believe that obtaining such a formula will require
expanding our toolbox (particularly when it comes to situations involving
several blossoms at once), so we leave generalization of Tutte-Berge formula to
even $\Delta$-matroids as an open problem.

The paper is organized as follows. In the introductory Sections
\ref{section:prelim}, \ref{sec:implications}, and \ref{sec:matchings} we
formalize the frameworks, discuss how dichotomy for Boolean planar CSP follows from
our main theorem and sharpen our intuition by highlighting similarities
between edge-CSPs and perfect matching problems, respectively. The algorithm
is described in Section~\ref{section:algorithm} and the proofs required for
showing its correctness are in Section~\ref{sec:proofs}. The extension of the
algorithm is discussed in Section~\ref{sec:extend} and Appendix~\ref{app:covers}.

\section{Preliminaries}  \label{section:prelim}

\begin{definition} A Boolean CSP \emph{instance} $I$ is a pair
  $(V,\mathcal C)$ where $V$ is the set of \emph{variables} and $\mathcal C$
  the set of \emph{constraints} of $I$. A $k$-ary 
  constraint $C\in\mathcal C$ is a pair $(\sigma, R_C)$ where
  $\sigma\subset V$ is a set of size $k$ (called the \emph{scope} of $C$) and $R_C\subset \{0,1\}^\sigma$ is a relation on $\{0,1\}$. 
  A solution to $I$ is a mapping $\hat f:V\rightarrow\{0,1\}$ such that for
  every constraint $C=(\sigma,R_C)\in\calC$, $\hat f$ restricted to $\sigma$ lies in $R_C$.
\end{definition}

\begin{definition}
If all constraint relations of $I$ come from a set of relations $\Gamma$
(called the \emph{constraint language}), we say that $I$ is a
$\Gamma$-instance. For $\Gamma$ fixed, we will denote the problem of deciding if a
$\Gamma$-instance given on input has a solution by $\CSP(\Gamma)$.
\end{definition}

Note that the above definition is not fully general in the sense that it
\emph{does not allow one variable to occur multiple times in a constraint}; we
have chosen to define Boolean CSP in this way to make our notation a bit
simpler.  This can be done without loss of generality as long as $\Gamma$
contains the equality constraint (i.e.~$\{(0,0),(1,1)\}$): If a variable, say
$v$, occurs in a constraint multiple times, we can add extra copies of $v$ to
our instance and join them together by the equality constraint to obtain a
slightly larger instance that satisfies our definition. 

For brevity of notation, we will often not distinguish a constraint $C\in\calC$ from its constraint relation $R_C$;
the exact meaning of $C$ will always be clear from the context.
Even though in principle different constraints can have the same constraint
relation, our notation would get cumbersome if we wrote $R_C$ everywhere.

The main point of interest is classifying the computational complexity of
$\CSP(\Gamma)$. Constraints of an instance are
specified by lists of tuples in the corresponding relations and thus those
lists are considered to be part of the input. We will say that $\Gamma$ 
contains the unary constant relations if $\{(0)\},\{(1)\}\in \Gamma$ (these relations allow us to
fix the value of a certain variable to 0 or 1).

For Boolean CSPs (where variables are assigned Boolean values), the complexity
classification of $\CSP(\Gamma)$ due to Schaefer has been known for a long
time~\cite{Schaefer78:complexity}. There was much progress since then,
including a full classification for the three-element
domain~\cite{bulatov-2006-three-csp} and for conservative
structures~\cite{Bulatov03:conservative}. Recently, two proofs of
classification in the general case were presented at the FOCS conference~\cite{bulatov-dichotomy},
\cite{zhuk-dichotomy}. However, in this work we concentrate on Boolean domains only.

Our main focus is on restricted forms of the CSP. In particular, we are
interested in structural restriction, i.e.~in restriction on the constraint
network. Once one starts to limit the
shape of instances, the Boolean case becomes complicated again. (As a side note,
we expect 
similar problems for larger domains to be very hard to classify. For example,
Dvo\v r\'ak and Kupec note that one can encode coloring planar graphs by four colors as a
class of planar CSPs that always have a solution for a highly nontrivial
reason, namely the four color theorem.)

A natural structural restriction would be to limit the number of constraints
in whose scope a variable can lie. When $k\geq 3$ and $\Gamma$ contains all unary constants, then $\CSP(\Gamma)$ with each variable in at most $k$ constraints is
polynomial time equivalent to unrestricted $\CSP(\Gamma)$, see~\cite[Theorem~2.3]{Dalmau2003}. This leaves instances with at most two occurrences per variable in the
spotlight. To make our arguments clearer, we will assume that each variable occurs
exactly in two constraints 
(following~\cite{feder-delta-matroids-fanout}, we can reduce decision CSP
instances with at most two appearances of each variable to instances with
exactly two appearances by taking two copies of the instance and identifying
both copies of $v$ whenever $v$ is a variable that originally appeared in only
a single constraint).

\begin{definition}[Edge CSP] Let $\Gamma$ be a constraint language. Then the
  problem $\EdgeCSP(\Gamma)$ is the restriction of $\CSP(\Gamma)$ to those instances 
  in which every variable is present in exactly two constraints.
\end{definition}

Perhaps a more natural way to look at an instance $I$ of an edge CSP is to
consider a graph whose edges correspond to variables of $I$ and nodes to
constraints of $I$. Constraints (nodes) are incident with variables (edges)
they interact with.  In this (multi)graph, we are looking for a satisfying
Boolean edge labeling.  Viewed like this, edge CSP becomes a counterpart to the
usual CSP where variables are typically identified with nodes and
constraints with (hyper)edges. The idea of ``switching'' the role of
(hyper)edges and vertices already appeared in the counting CSP community under
the name Holant problems \cite{Holant-problem-journal-version}.

This type of CSP is sometimes called ``binary CSP'' in the literature
\cite{dvorak-kupec-planar-csp}. However, this term is very commonly used for
CSPs whose all constraints have arity at most two \cite{books/daglib/0076790}.
In order to resolve this confusion (and for the reasons described in the
previous paragraph), we propose the term ``edge CSP''.

As we said above, we will only consider Boolean edge CSP, often omitting the word
``Boolean'' for space reasons. The following Boolean-specific definitions
will be useful to us:

\begin{definition}
  Let $f\colon V\to\{0,1\}$ (we will denote the set of all such
  mappings $f$ by $\{0,1\}^V$) and let $v\in V$. We will denote by $f\oplus v$ the
  mapping $V\to\{0,1\}$ that agrees with $f$ on $V\setminus\{v\}$ and has value
  $1-f(v)$ on $v$. For a set $S=\{s_1,\dots,  s_k\} \subset V$ we let $f \oplus S = f \oplus s_1 \oplus \dots \oplus s_k$. Also for $f,g \colon V\to\{0,1\}$ let $f \symdiff g \subset V$ be the set of variables $v$ for which $f(v) \neq g(v)$.
\end{definition}

\begin{definition}
  Let $V$ be a set. A nonempty subset $M$ of
  $\{0,1\}^V$ is called a \emph{$\Delta$-matroid} if whenever $f,g \in M$ and $v \in f \symdiff g$, then there exists $u \in f \symdiff  g$ such that
  $f \oplus \{u, v\} \in M$.
If moreover, the parity of the number of ones over all tuples of $M$ is
constant, we have an \emph{even $\Delta$-matroid} (note that in that case we
never have $u = v$ so $f \oplus \{u, v\}$ reduces to $f \oplus u \oplus v$).
\end{definition}

The \emph{$\Delta$-matroid parity} problem~\cite[Problem (23)]{jensen-korte} has as its input a
$\Delta$-matroid $M\subset \{0,1\}^E$ and a partition $P$ of $E$ into pairs.
The goal is to find $\alpha\in M$ such that $\alpha(u)\neq\alpha(v)$ for as few
pairs $\{u,v\}\in P$ as possible. This problem is easily equivalent to finding
an edge labeling that minimizes the number of inconsistent edges of the edge
CSP with edges (variables) $E$, binary equality constraints on all pairs in $P$
and one big constraint $M$ with the scope $E$ (see
Definition~\ref{def:Solutions} for an exact definition of what we mean by an
edge labeling and inconsistent edges).

A $\Delta$-matroid with all tuples containing exactly the same number of ones
is (the set of bases of) a matroid. There is a vast body of literature on the
properties of matroids; here we only mention two notions that are immediately
relevant to edge CSP: The \emph{matroid parity} problem is the $\Delta$-matroid parity
problem where $M$ is a matroid. In the literature, the matroid parity problem
is usually formulated in the equivalent way ``find $\alpha\in M$ so that
$\alpha(u)=\alpha(v)=1$ for as many pairs $\{u,v\}\in P$ as possible.''

A similar problem is the \emph{matroid matching} problem where we are given a
graph $G$ (with vertex set $V(G)$ and edge set $E(G)$) and a matroid $M$ on the variable set $V(G)$ and are looking for
$\alpha\in M$ such that the subgraph of $G$ induced by $\{v\in V(G)\colon
\alpha(v)=1\}$ contains as big a matching as possible. It is straightforward
to verify that this problem is equivalent to finding an edge labeling that minimizes the number
of inconsistent edges in the edge CSP instance with variable set
$V(G)\cup E(G)$, one big constraint $M$ on $V(G)$ and a constraint $C_v$ for
each $v\in V(G)$. The scope of $C_v$ consists of $v$ and all $e\in E(G)$
incident with $v$ in $G$. The relation of $C_v$ contains the all zero tuple
$(0,\dots,0)$ and the tuple $(0,\dots,0)\oplus v\oplus e$ for each $e\in E(G)$ incident
with $v$.

We note for future reference that (even) $\Delta$-matroids are
closed under gadget constructions, known as compositions in
$\Delta$-matroid theory: If $M\subset \{0,1\}^U$ and $N\subset
\{0,1\}^V$ are $\Delta$-matroids defined on two sets of variables
such that the set symmetric difference of $U$ and $V$, denoted by
$U\symdiff V$, is nonempty, we define the composition of $M$ and
$N$ to be the relation
\begin{align*}
	\{\gamma\in \{0,1\}^{U\symdiff V}\colon \exists \alpha \in M,\exists \beta\in
	N,\,\forall &u\in U\cap V,\alpha(u)=\beta(u),\\
	\forall &u\in U\setminus V,\,\gamma(u)=\alpha(u),\\
	\forall &v\in V\setminus U,\,\gamma(v)=\beta(v)\}.
\end{align*}
\begin{proposition}[\cite{bouchet-cunningham-1995}]\label{prop:composition}
	The composition of two $\Delta$-matroids is a $\Delta$-matroid.
\end{proposition}
Moreover, a quick parity argument gives us that the composition of
two even $\Delta$-matroids must be an even $\Delta$-matroid.

The strongest hardness result on Boolean edge CSP is from Feder.

\begin{theorem}[\cite{feder-delta-matroids-fanout}] If $\Gamma$ is a constraint language containing unary constant relations such that $\CSP(\Gamma)$ is NP-Hard and there is $R \in \Gamma$ which is not a $\Delta$-matroid, then $\EdgeCSP(\Gamma)$ is NP-Hard.
\end{theorem}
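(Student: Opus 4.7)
The plan is to polynomially reduce $\CSP(\Gamma)$ to $\EdgeCSP(\Gamma)$. Given an input $\Gamma$-instance $(V, \calC)$, I would replace every high-degree variable by a collection of copies linked together by a \emph{fan-out gadget}, so that each copy ends up in exactly two constraints. Concretely, a fan-out gadget would be a $\Gamma$-instance $G_k$ with $k$ distinguished terminals such that every non-terminal appears in exactly two constraints of $G_k$, every terminal appears in exactly one, every satisfying assignment to $G_k$ has all terminals equal, and both constant assignments to the terminals are extendable. Provided such gadgets $G_k$ can be built for every $k$ from $R$ and the unary constants, the reduction is immediate: for each variable $x$ with $k$ original occurrences, introduce $k$ copies $x_1, \dots, x_k$, use $x_i$ in the $i$-th occurrence, and glue them as terminals of $G_k$; the resulting instance is an $\EdgeCSP(\Gamma)$-instance equivalent to the original.

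To construct $G_k$, I would start from the witness to the failure of the $\Delta$-matroid axiom: fix $f, g \in R$ and $v \in f \symdiff g$ with $f \oplus \{u, v\} \notin R$ for all $u \in f \symdiff g$ (in particular, for $u = v$, so $f \oplus v \notin R$). Pin every coordinate of $R$ outside $f \symdiff g$ to its common value in $f$ and $g$ using $\{(0)\}$ and $\{(1)\}$; this yields a sub-relation $R^{*}$ on the coordinates of $f \symdiff g$ which still contains the restrictions of $f$ and $g$ and still forbids every single-$v$-flip of $f|_{f \symdiff g}$. I would then argue that by identifying pairs of coordinates of $R^{*}$, applying further pinnings, and combining several copies, one can implement a binary equality relation $\{(0,0),(1,1)\}$ in the edge-CSP sense, i.e., as a $\Gamma$-instance whose internal variables each have degree exactly two and whose two free coordinates are forced equal. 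Once binary equality is available, the gadget $G_k$ can be assembled as a tree in which interior degree-$3$ splitters are themselves built from three copies of the equality primitive sharing one internal variable of degree two; the terminals appear at the leaves.

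The main obstacle is the extraction of an equality-like primitive from $R^{*}$. One must show that appropriate pinnings and identifications always exist that leave the relation non-empty, preserve both $(0,0)$ and $(1,1)$ on the two free coordinates, and still exclude at least one of the off-diagonal tuples; doing so in full generality requires a finite case analysis controlled by which single-$v$-flips of $f|_{f \symdiff g}$ are missing from $R^{*}$, essentially the content of Feder's original construction. All remaining steps---composing the equality primitive into $G_k$, tracking degrees throughout the tree, and verifying that the overall reduction preserves satisfiability---are routine. Composed with the assumed NP-hardness of $\CSP(\Gamma)$, the reduction then yields NP-hardness of $\EdgeCSP(\Gamma)$.
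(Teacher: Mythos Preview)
The paper does not prove this theorem; it is stated as a citation from Feder~\cite{feder-delta-matroids-fanout} without argument, so there is no in-paper proof to compare against.

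Your overall strategy---reduce $\CSP(\Gamma)$ to $\EdgeCSP(\Gamma)$ by replacing each high-degree variable with copies joined by a fan-out gadget---is correct and is indeed Feder's approach. But your concrete construction has a gap at the tree-assembly step. You propose to build degree-$3$ splitters from ``three copies of the equality primitive sharing one internal variable of degree two''. In $\EdgeCSP$ this cannot work: if each binary equality gadget contributes at least one incidence to the shared variable, three copies force that variable to have degree at least~$3$. More generally, no edge-CSP gadget built only from the binary relation $\{(0,0),(1,1)\}$ together with unary constants can implement a $3$-terminal equality, because the graph of equality edges decomposes into paths and cycles, a single path has only two endpoints, and terminals lying in distinct components are independent. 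What is actually needed (and what Feder constructs) is a \emph{ternary} fan-out primitive extracted directly from the $\Delta$-matroid failure witness $f,g,v$; once a $3$-splitter is available, fan-out $k$ follows by chaining $3$-splitters. You also explicitly defer the extraction of that primitive from $R^{*}$ to ``essentially the content of Feder's original construction''---but that case analysis is precisely the substance of the theorem, so without it the proposal is a plan rather than a proof.
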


Tractability was shown for special classes of $\Delta$-matroids, namely 
binary~\cite{Geelen2003377,Dalmau2003},
linear~\cite{Geelen2003377}\footnote{The paper~\cite{Geelen2003377} actually
showed
tractability of the $\Delta$-matroid parity problem with linear or binary
constraints. However, given representations of constraints of an edge CSP by
matrices $M_1,M_2,\dots$, like in~\cite{Geelen2003377}, a block matrix with
$M_1,M_2,\dots$ on the diagonal and zeroes elsewhere represents a ``big''
$\Delta$-matroid parity problem (with a suitably chosen pairing) which, when
solved, gives a solution of the original edge CSP.},
co-independent~\cite{feder-delta-matroids-fanout}, compact~\cite{Istrate97lookingfor}, and
local~\cite{Dalmau2003} (see the definitions in the respective papers). 
All the proposed algorithms are based on variants of searching for
augmenting paths. In this work we propose a more general algorithm that
involves both augmentations and contractions. In particular, we prove the
following.

\begin{theorem}\label{thm:main} If $\Gamma$ contains only even $\Delta$-matroid relations, then $\EdgeCSP(\Gamma)$ can be solved in polynomial time.
\end{theorem}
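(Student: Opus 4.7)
The plan is to generalize Edmonds' blossom-shrinking algorithm. After initializing by picking an arbitrary tuple $f_C \in R_C$ for every constraint $C$, each variable $v$, lying in two constraints $C_1, C_2$, receives two candidate values $f_{C_1}(v)$ and $f_{C_2}(v)$; a global solution is precisely a configuration in which the set of \emph{discrepancy} variables $D = \{v : f_{C_1}(v) \neq f_{C_2}(v)\}$ is empty. The algorithm iteratively modifies the tuples $f_C$ so as to shrink $|D|$, and the principal task is to show that, whenever the current configuration is not a solution and the instance is satisfiable, such a modification can be found efficiently.

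The modifications are driven by the even $\Delta$-matroid exchange axiom. Because parity is preserved, no single-bit flip is admissible, so the natural object is an alternating \emph{walk} $v_0, C_0, v_1, C_1, \ldots, C_{k-1}, v_k$, where each constraint $C_i$ contains both $v_i$ and $v_{i+1}$. Applying the walk means replacing each $f_{C_i}$ by $f_{C_i} \oplus \{v_i, v_{i+1}\}$, and the exchange axiom is exactly what guarantees, step by step, that an admissible successor $v_{i+1}$ can be produced from a compatible witness in $R_{C_i}$. If $v_0, v_k \in D$ and the walk threads the axiom consistently from end to end, then applying it leaves the discrepancy status of every interior variable unchanged while toggling the two endpoints, so that $|D|$ decreases by two. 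Such augmenting walks are sought by growing an alternating tree rooted at a discrepancy, analogous to BFS in the matching case.

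The difficulty is that this search can be blocked by a \emph{blossom}: an odd-length substructure in which propagating the exchange axiom traps the walk in a cycle before reaching another discrepancy. Mirroring Edmonds, the remedy is to \emph{contract} such a blossom into a single super-constraint whose relation consists of all assignments on its boundary variables that are realizable by internal flips of the blossom constraints; the algorithm then recurses on the smaller contracted instance and, once a solution or a fresh augmenting walk is obtained there, reverses the contractions in the opposite order, extending the recovered values back through the blossom.

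The main obstacle I expect is proving that the contracted relation is again an even $\Delta$-matroid, so that the algorithm may be applied recursively without leaving its domain of guaranteed correctness; this is the nontrivial analogue of the classical fact that shrinking a blossom preserves matching structure. A second subtle point, specific to the even (as opposed to general) $\Delta$-matroid setting, is that one must rule out configurations in which every augmenting attempt leads to a blossom yet the parity of $|D|$ obstructs progress, by showing that whenever no augmenting walk exists in the fully contracted instance the lifted configuration is genuinely unsatisfiable. Polynomial running time then follows because each successful augmentation strictly decreases $|D|$, the depth of nested contractions is bounded by the number of variables, and each search over the alternating tree is polynomial in the total size of the listed tuples.
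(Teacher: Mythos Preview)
Your high-level plan matches the paper's: initialize an arbitrary valid labeling, search for augmenting walks from discrepancy variables, contract blossoms when the search stalls, and recurse. The two obstacles you name (that the contracted constraint remain an even $\Delta$-matroid, and that absence of an augmenting walk in the contracted instance really certifies optimality) are exactly the paper's Lemmas~\ref{lemma:Cib-is-DeltaMatroid} and~\ref{lemma:ItoIprime}/\ref{lemma:IprimetoI}.

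However, your sketch underestimates one genuine complication that is \emph{not} present in ordinary matching. You write that ``the exchange axiom is exactly what guarantees, step by step, that an admissible successor $v_{i+1}$ can be produced,'' and propose to grow an alternating tree by BFS as in Edmonds. In the matching case each constraint (vertex) is entered once; here a constraint $C$ may be entered several times along different walks in the tree, and each traversal $v_iCv_{i+1}$ \emph{changes} the current tuple $f(C)$. So when you later enter $C$ again from some $v_j$, the exchange axiom must be applied to the already-modified tuple, and a naive tree search can produce walks $p$ for which $f\oplus p$ is not valid. The paper's fix is to grow the forest under a strict discipline---when a constraint node $C^t$ is expanded, \emph{all} admissible children are added at once---and to prove that the resulting forest satisfies a ``no shortcuts'' invariant (Definition~\ref{def:DAG}(\ref{item:Shortcuts})). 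This invariant, together with the even $\Delta$-matroid axiom, is what drives Lemmas~\ref{lemma:shortening} and~\ref{lemma:shortening'}, which in turn justify every augmentation and every blossom contraction. Your plan does not anticipate this; without it the ``threads the axiom consistently from end to end'' claim is unsupported.

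A second, smaller discrepancy: your contraction (``the super-constraint whose relation consists of all assignments on its boundary variables that are realizable by internal flips'') is \emph{not} what the paper does. That gadget-style supernode is indeed an even $\Delta$-matroid (closure under products and identification), but listing its tuples may be exponential in the blossom size. The paper instead introduces one fresh variable $v_D$ per blossom constraint $D$, a new perfect-matching constraint $N$ on the $v_D$'s, and replaces each $D$ by a locally-defined $D^b$ (Section~\ref{section:contracting}); this keeps the instance size under control and makes the even-$\Delta$-matroid verification a short local calculation. You would need either to switch to this construction or to argue separately that your supernode can be represented compactly.
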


Our algorithm will in fact be able to solve even a certain optimization version of the edge CSP (corresponding to finding a maximum matching). This is discussed in detail in Section \ref{section:algorithm}.

In Section~\ref{sec:extend} we show that if a class of $\Delta$-matroids is
\emph{efficiently coverable}, then it defines a tractable CSP. The whole
construction is similar to, but more general than, $\mathcal C$-zebra $\Delta$-matroids
introduced in~\cite{feder-ford-matroids}. We note here also that the class of
coverable $\Delta$-matroids is natural in the sense of being
closed under gadget constructions (also known as composition of
$\Delta$-matroids) which we split into taking direct products and
identifying variables.

\begin{definition}\label{def:even-neighbors}
  Let $M$ be a $\Delta$-matroid. We say that $\alpha,\beta\in M$ are
  \emph{even-neighbors} if there exist distinct variables $u,v\in V$ such that
  $\beta=\alpha\oplus u\oplus v$ and $\alpha\oplus u\not\in M$. We say we can
  \emph{reach} $\gamma\in M$ from $\alpha\in M$ if there is a chain
  $\alpha=\beta_0,\beta_1,\dots,\beta_n=\gamma$ where each pair
  $\beta_i,\beta_{i+1}$ are even-neighbors.
\end{definition}
\begin{definition}\label{def:coverable}
  We say that $M$ is \emph{coverable} if for every $\alpha\in M$
  there exists $M_\alpha$ such that:
  \begin{enumerate}
    \item $M_\alpha$ is an even $\Delta$-matroid (over the same ground set as $M$),
    \item $M_\alpha$ contains all $\beta\in M$ that can be reached from
      $\alpha$ (including $\alpha$ itself),
    \item whenever $\gamma\in M$ can be reached
      from $\alpha$ and $\gamma\oplus u\oplus v\in M_\alpha\setminus M$, 
      then $\gamma\oplus u,\gamma\oplus v\in M$.
  \end{enumerate}
\end{definition}

In our algorithm, we will need to have access to the sets $M_\alpha$, so we
need to assume that all our $\Delta$-matroids, in addition to being coverable,
come from a class of $\Delta$-matroids where the sets $M_\alpha$ can be
determined quickly. This is what \emph{efficiently coverable} means (for a
formal definition see Definition~\ref{def:efficiently-coverable}).

The following theorem is a strengthening of a result
from~\cite{feder-ford-matroids}:

\begin{theorem}\label{thm:extension2} Given an edge CSP instance $I$ with efficiently coverable $\Delta$-matroid constraints,
  an optimal edge labeling (i.e. edge labeling having fewest possible
  inconsistently labeled edges) $f$ of $I$ can be found in time polynomial in $|I|$.
In particular, $\EdgeCSP(\Gamma)$ can be solved in polynomial time.
\end{theorem}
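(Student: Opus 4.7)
The plan is to reduce Theorem~\ref{thm:extension2} to Theorem~\ref{thm:main} through an iterative cover-and-improve scheme. Since the cover $M_{i,\alpha}$ depends on $\alpha \in M_i$, I cannot form a single static cover instance; instead I would maintain a current labeling $f$ feasible for each constraint $M_i$ and, at each phase, build the auxiliary even $\Delta$-matroid instance $I_f$ whose $i$-th constraint is $M_{i,\,f|_{\sigma_i}}$. By efficient coverability these covers are produced in polynomial time, and Theorem~\ref{thm:main} solves $I_f$. Bootstrapping the initial $f$ would be done by choosing arbitrary $\alpha_i \in M_i$, running Theorem~\ref{thm:main} on the cover instance, and translating its output via Definition~\ref{def:coverable}. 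If the cover instance is infeasible I would argue infeasibility of $I$ itself: any putative solution $g$ of $I$ has $g|_{\sigma_i} \in M_i$, and the $\Delta$-matroid exchange axiom together with condition~(2) place $g|_{\sigma_i}$ in $M_{i,\alpha_i}$, contradicting infeasibility of the cover.

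The heart of the argument is the translation from a solution of $I_f$ to one of $I$. The blossom algorithm of Theorem~\ref{thm:main} advances $f$ by sequences of even-neighbor transitions in each constraint, so each constraint-local trajectory stays inside the reachability closure of $f|_{\sigma_i}$. Condition~(2) ensures that every element of $M_i$ reachable from $f|_{\sigma_i}$ is available as an intermediate step, so no feasible improvement is missed. Whenever the algorithm's walk tries to pass through some $\gamma \oplus u \oplus v \in M_{i,\alpha_i} \setminus M_i$, condition~(3) supplies the single-flip alternative $\gamma \oplus u \in M_i$, and I would splice such reroutes into the augmenting walks to keep the current labeling feasible for the original $M_i$ at every step. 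For the optimization version the outer loop is driven by an integer potential that strictly increases each phase and is polynomially bounded, giving polynomial overall time complexity.

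The main obstacle will be reconciling the single-flip reroutes with the parity structure exploited by the even $\Delta$-matroid algorithm: a reroute at one constraint changes parity locally, which is incompatible with walks whose every local step is an even-neighbor transition. I would resolve this by observing that a reroute at constraint $C_i$ affects a single shared edge and must be matched by a compensating reroute at the other constraint incident to that edge, so reroutes come in edge-paired, parity-preserving packages. Verifying this pairing formally, and showing that the spliced walks still satisfy the invariants used in Section~\ref{sec:proofs}, is where the detailed work lies; this refines and generalizes the Feder--Ford $\mathcal{C}$-zebra construction to the broader class of efficiently coverable $\Delta$-matroids.
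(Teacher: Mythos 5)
There is a genuine gap, and it sits exactly at the point your proposal identifies as ``the main obstacle'' but does not resolve. Your per-phase reduction replaces \emph{every} constraint $M_i$ by its cover $M_{i,f|_{\sigma_i}}$ and solves that single even instance. But each cover contains only tuples of one fixed parity (even-neighbor reachability flips two coordinates at a time, and $M_{i,\alpha}$ is an even $\Delta$-matroid), so the all-covers instance can never see an improvement of $I$ that changes the parity of some constraint's tuple. For general $\Delta$-matroids these are precisely the essential improvements: an augmenting walk may end \emph{inside} a constraint with a single coordinate flip, decreasing the number of inconsistent variables by one (note that Observation~\ref{obs:parity} fails for non-even constraints, so such odd improvements really occur, e.g.\ when $f(C)\oplus v\in C$). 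Your proposed fix --- pairing each single-flip ``reroute'' with a compensating reroute at the other constraint incident to the shared edge --- is not needed and does not work: flipping the edge $\{x,D\}$ changes only $D$'s tuple and toggles the consistency of $x$; a second flip at the neighbouring constraint would cancel the gain. The paper's resolution is different: it \emph{guesses} the constraint $C$ where the parity flip happens together with the final tuple $\alpha\in C$, and runs Theorem~\ref{thm:main} on $I(f,C,\{\alpha\})$, where $C$ is pinned to the singleton $\{\alpha\}$ and all \emph{other} constraints are replaced by their covers; trying all $O(|I|)$ pairs $(C,\alpha)$ per phase gives Lemmas~\ref{lemma:extensionworks} and~\ref{lemma:getAP}. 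Condition~(3) of Definition~\ref{def:coverable} is then used not to ``splice and continue'' but to \emph{truncate}: when the improving walk of the covered instance first uses a tuple outside the original relation at some $D\neq C$, coverability yields a single-flip ending at $D$, i.e.\ a shorter augmenting walk for $I$ ending in a constraint (this needs the generalized walks of Definition~\ref{def:generalwalk} and Lemma~\ref{lemma:generalaugpath}, which your outline never introduces).

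A second, smaller error: your bootstrapping claim that infeasibility of the cover instance implies infeasibility of $I$ is false. A solution $g$ of $I$ satisfies $g|_{\sigma_i}\in M_i$, but $M_{i,\alpha_i}$ is only required to contain the tuples of $M_i$ \emph{reachable} from $\alpha_i$ (in particular, of the same parity as $\alpha_i$); nothing places $g|_{\sigma_i}$ in $M_{i,\alpha_i}$. This is why the paper never compares optimal values of $I$ and a single covered instance, but only proves a one-step improvement equivalence and drives the outer loop by the inconsistency count (initialized by picking an arbitrary tuple per constraint), calling Algorithm~\ref{alg:Improve} at most $|I|^2$ times.
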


As we show in Appendix~\ref{app:covers}, efficiently coverable $\Delta$-matroid
classes include numerous known tractable classes of $\Delta$-matroids: $\mathcal
C$-zebra $\Delta$-matroids~\cite{feder-ford-matroids} for any subclass $\calC$
of even $\Delta$-matroids (where we assume, just like
in~\cite{feder-ford-matroids}, that we are given the zebra representations on
input) as well as co-independent~\cite{feder-delta-matroids-fanout}, compact
\cite{Istrate97lookingfor}, local~\cite{Dalmau2003},
linear~\cite{Geelen2003377} and binary~\cite{Geelen2003377,Dalmau2003}
$\Delta$-matroids.  To our best knowledge these are all the known tractable
classes and according to \cite{Dalmau2003} the classes other than
$\calC$-zebras are pairwise incomparable.  

One caveat of our result when applied to linear or binary $\Delta$-matroids,
which does not allow us to say that our algorithm generalizes everything that
came before, is that our representation of $\Delta$-matroids (by lists of
tuples) is different from e.g.~\cite{Geelen2003377} where linear and binary
$\Delta$-matroids are represented by matrices. A linear $\Delta$-matroid
described by an $n\times n$ matrix can contain exponential number of tuples,
making our algorithm inefficient when constraints are encoded by matrices on
the input.

\section{Implications} \label{sec:implications}

In this section we explain how our result implies full complexity classification of planar Boolean~CSPs.

\begin{definition} Let $\Gamma$ be a constraint language. Then
  $\PlanarCSP(\Gamma)$ is the restriction of $\CSP(\Gamma)$ to the set of
  instances for which there exists a planar graph $G(V, E)$ such that $v_1$, \dots, $v_k$ is a face of $G$ (with nodes listed in counter-clockwise order) if and only if there is a unique constraint imposed on the tuple of variables $(v_1, \dots, v_k)$.
\end{definition}

It is also noted in \cite{dvorak-kupec-planar-csp} that checking whether an
instance has a planar representation can be done efficiently (see
e.g..~\cite{hopcroft-tarjan-planarity}) and hence it does not matter if we are given a planar drawing of $G$ as a part of the input or not. The planar restriction does lead to new tractable cases, for example planar {\sc NAE-3-SAT} (Not-All-Equal 3-Satisfiability) \cite{Moret:1988:PNP:49097.49099}.

\begin{definition} A relation $R$ is called \emph{self-com\-ple\-mentary} if for all
  $T\in\{0,1\}^n$ we have $T \in R$ if and only if $T \oplus \{1,2,\dots,n\}\in
  R$ (i.e. $R$ is invariant under simultaneous flipping of all entries of a tuple).
\end{definition}

\begin{definition}
  For a tuple of Boolean variables $T = (t_1, \dots, t_n)\in \{0,1\}^n$, let
  \[
    dT = \{ t_i+ t_{i+1}\tinymod2\colon i=1,2,\dots,n\}
  \]
  (we take $t_{n+1}=t_1$ here). For a relation $R$ and a set of relations $\Gamma$, let 
  $dR = \{dT \colon T \in R\}$ and $d\Gamma = \{ dR \colon R \in \Gamma\}$.
\end{definition}

Since self-complementary relations don't change when we flip all their
coordinates, we can describe a self-complementary relation by looking at the
differences of neighboring coordinates; this is exactly the meaning of $dR$.
Note that these differences are realized over edges of the given planar graph.

Knowing this, it is not so difficult to imagine that via switching to the planar dual of
$G$, one can reduce a planar CSP instance to some sort of edge CSP instance. This is in fact
part of the following theorem from~\cite{dvorak-kupec-planar-csp}:

\begin{theorem}\label{thm:dvorakkupec} Let $\Gamma$ be such that $\CSP(\Gamma)$
  is NP-Hard. Then:
  \begin{enumerate}
\item If there is $R \in \Gamma$ that is not self-complementary, then $\PlanarCSP(\Gamma)$ is NP-Hard.
\item If every $R \in \Gamma$ is self-complementary and there exists $R \in \Gamma$ such that $dR$ is not even $\Delta$-matroid, then $\PlanarCSP(\Gamma)$ is NP-Hard.
\item If every $R \in \Gamma$ is self-complementary and $dR$ is an even $\Delta$-matroid, then $\PlanarCSP(\Gamma)$ is polynomial-time reducible to $$\EdgeCSP(d\Gamma \cup \{EVEN_1, EVEN_2, EVEN_3\})$$ where
$EVEN_i = \{(x_1, \dots x_i) \colon x_1 + \cdots + x_i \equiv 0 \pmod 2\}.$ 
\end{enumerate}
\end{theorem}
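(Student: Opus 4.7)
The plan is to handle all three parts through a single planar-duality construction, with separate gadget arguments supplying the two hardness clauses.

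For part (3), draw the planar $\Gamma$-instance $I$ as a graph $G=(V,E)$ with variables at vertices and constraints on faces. Introduce an edge variable $x_e=v\oplus v'$ for each edge $e=\{v,v'\}$. Self-complementarity guarantees that whether $(v_1,\dots,v_k)$ lies in the face relation $R$ depends only on the tuple of consecutive differences $(v_1\oplus v_2,\dots,v_k\oplus v_1)$, that is, on whether the tuple of the $x_{e_i}$ belongs to $dR$. Transferring this to the planar dual $G^*$, we obtain one $dR$-constraint per face of $G$, and every edge variable appears in exactly two of these constraints, namely the two faces of $G$ that meet at the edge. The only remaining issue is that the $x_e$ must actually come from some assignment $V\to\{0,1\}$, and this amounts to the telescoping identity $\bigoplus_{e\ni v}x_e=0$ at every $v\in V$. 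We impose this by placing, at each vertex $v$ of $G$ (equivalently, each face of $G^*$), a parity constraint $EVEN_d$ on its incident edges. Conversely any solution of the resulting edge-CSP instance yields an assignment of $V$ by integrating differences along a spanning tree of $G$, unique up to a global flip that self-complementarity renders irrelevant. To restrict the parity relations to $\{EVEN_1,EVEN_2,EVEN_3\}$ while keeping the instance in edge-CSP form, we decompose each higher-arity $EVEN_d$ into a small binary tree of $EVEN_3$ nodes joined by auxiliary edges and capped at the leaves by $EVEN_1$ or $EVEN_2$; the tree can be drawn inside the corresponding face of $G^*$ without introducing crossings, and each auxiliary variable lies in exactly two constraints.

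For part (1), choose $R\in\Gamma$ that is not self-complementary and pick $T\in R$ with $T\oplus\mathbf{1}\notin R$. The asymmetry of $R$ under global flip lets us planarly pp-define a non-trivial unary relation, by using copies of $R$ anchored to $T$ and $T\oplus\mathbf{1}$ as distinguishers; once planar unary constants are available, a standard planar crossover gadget together with the NP-hardness of $\CSP(\Gamma)$ yields a planarity-preserving reduction $\CSP(\Gamma)\le_p\PlanarCSP(\Gamma)$. For part (2), apply the construction of part (3) forward to turn $\PlanarCSP(\Gamma)$ into an equivalent $\EdgeCSP(d\Gamma\cup\{EVEN_1,EVEN_2,EVEN_3\})$ instance, and in the reverse direction to realize an arbitrary edge-CSP instance over $d\Gamma\cup\{EVEN_i\}$ on a planar substrate; the theorem of Feder cited earlier then produces NP-hardness from the non-even-$\Delta$-matroid relation $dR$.

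The main obstacle is the bookkeeping in part (3): verifying that the $EVEN_d$ decomposition trees really fit planarly inside the dual faces, that every variable (original or auxiliary) sits in exactly two constraints, and that the lifting of dual edge labels to primal vertex labels is well-defined, which is precisely where the parity constraints together with self-complementarity pay off. The hardness parts are comparatively routine, with the subtle points being the planar realization of the crossover gadget in (1) and the fact that Feder's hardness witness for the non-$\Delta$-matroid $dR$ uses only bounded-size structures that can be embedded in the plane.
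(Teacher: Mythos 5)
The first thing to note is that the paper does not prove this statement at all: Theorem~\ref{thm:dvorakkupec} is quoted from Dvo\v{r}\'ak and Kupec~\cite{dvorak-kupec-planar-csp}, so there is no in-paper proof to compare your sketch with; judging it on its own terms, the central step of your part (3) is incorrect. A labeling $x$ of the edges of $G$ is of the form $x_{\{u,v\}}=h(u)\oplus h(v)$ for some vertex assignment $h$ if and only if $\bigoplus_{e\in Z}x_e=0$ for every \emph{cycle} $Z$ of $G$ (for connected planar $G$ it suffices to check facial cycles), \emph{not} if and only if $\bigoplus_{e\ni v}x_e=0$ at every vertex $v$: for the path $a\mbox{--}b\mbox{--}c$ with $h(a)=h(b)=0$, $h(c)=1$ the edge labels are $0$ and $1$ and their XOR at $b$ is $1$, so your $EVEN_{\deg(v)}$ constraints exclude perfectly legitimate solutions. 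In fact, since every tuple of $dR$ has coordinate-XOR zero, the facial conditions are satisfied automatically once the $dR$-constraints at the faces hold, so no additional constraints are needed for liftability at all --- which also shows that the $EVEN_i$ relations in the statement are not there for the reason you assign to them (in the original proof they absorb degeneracies of the embedding, e.g.\ non-simple face boundaries, repeated variables, connectivity issues, which your sketch does not treat). There is also a structural problem: each edge variable already occurs in the two incident face constraints, so attaching a further vertex constraint to the same variables gives it occurrence number four, and the resulting instance is not an edge CSP; the $EVEN_3$-tree decomposition inside faces does not repair this, since the leaves still have to touch the original edge variables.

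The two hardness clauses are asserted rather than proved. For (1), there is no ``standard planar crossover gadget'' for an arbitrary Boolean language $\Gamma$; constructing crossing-simulating gadgets from the given relations is exactly the substance of the hardness arguments in~\cite{dvorak-kupec-planar-csp}, and breaking self-complementarity to get a unary relation is only the first step. For (2), Feder's theorem gives NP-hardness of \emph{unrestricted} $\EdgeCSP$ and requires both unary constants (your language contains $EVEN_1=\{(0)\}$ but not $\{(1)\}$), whereas to conclude hardness of $\PlanarCSP(\Gamma)$ you must reduce a hard problem \emph{to} planar instances of $\CSP(\Gamma)$; this needs the edge-CSP instances produced by the hardness reduction to be planar and to invert the duality of part (3), neither of which follows from the remark that Feder's gadgets are of bounded size. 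As written, both hardness parts and the key consistency step of the reduction remain open.
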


Using Theorem~\ref{thm:main}, we can finish this classification:

\begin{theorem}[Dichotomy for $\PlanarCSP$] Let $\Gamma$ be a constraint
  language. Then $\PlanarCSP(\Gamma)$ is solvable in polynomial time if either
  \begin{enumerate}
\item $\CSP(\Gamma)$ is solvable in polynomial time or;
\item $\Gamma$ contains only self-complementary relations $R$ such that $dR$ is an even $\Delta$-matroid.
\end{enumerate}
Otherwise, $\PlanarCSP(\Gamma)$ is NP-Hard.
\end{theorem}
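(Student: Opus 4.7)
The plan is to combine the main tractability result (Theorem~\ref{thm:main}) with the structural/hardness dichotomy of Dvo\v{r}\'ak and Kupec (Theorem~\ref{thm:dvorakkupec}); at this point both halves of the argument (algorithmic and hardness) are largely already available, so the role of this proof is essentially to splice them together.

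First I would dispatch the easy tractable case (1): since $\PlanarCSP(\Gamma)$ is by definition a restriction of $\CSP(\Gamma)$, any polynomial-time algorithm for the latter trivially solves the former. So the interesting tractable situation is case (2), where every $R\in\Gamma$ is self-complementary and every $dR$ is an even $\Delta$-matroid. Here I would invoke part~3 of Theorem~\ref{thm:dvorakkupec}, which gives a polynomial-time reduction of $\PlanarCSP(\Gamma)$ to $\EdgeCSP(d\Gamma\cup\{EVEN_1,EVEN_2,EVEN_3\})$. To apply Theorem~\ref{thm:main} to the resulting edge CSP instance, I need to check that the three extra relations $EVEN_i$ are themselves even $\Delta$-matroids. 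This is a short verification: $EVEN_i$ consists of all Boolean tuples of even Hamming weight, so the parity condition is trivially satisfied; the $\Delta$-matroid exchange axiom holds because if $f,g\in EVEN_i$ differ on an even set $S=f\symdiff g$ and $v\in S$, then any other $u\in S\setminus\{v\}$ (which exists since $|S|$ is even and at least $2$) gives $f\oplus u\oplus v$ of the same parity as $f$, hence in $EVEN_i$. Combined with the hypothesis that every $dR\in d\Gamma$ is an even $\Delta$-matroid, the whole constraint language of the reduced instance consists of even $\Delta$-matroids, so Theorem~\ref{thm:main} yields a polynomial-time algorithm.

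For the hardness direction, suppose neither~(1) nor~(2) holds. Then $\CSP(\Gamma)$ is NP-Hard, and $\Gamma$ contains some $R$ that is either not self-complementary, or is self-complementary with $dR$ not an even $\Delta$-matroid. In the first subcase part~1 of Theorem~\ref{thm:dvorakkupec} gives NP-Hardness of $\PlanarCSP(\Gamma)$; in the second subcase part~2 does. (Note that to cleanly apply Theorem~\ref{thm:dvorakkupec} I may want to first observe that if $\CSP(\Gamma)$ were in~P we would already be in tractable case~(1), so the NP-Hardness hypothesis of Theorem~\ref{thm:dvorakkupec} is indeed in force whenever the hardness direction is invoked.)

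There is essentially no serious obstacle here: the deep content sits in Theorem~\ref{thm:main} (the generalized blossom algorithm) and in Theorem~\ref{thm:dvorakkupec} (the planar-to-edge-CSP reduction via planar duality and the $d$ operator). The only genuinely new verification is the easy one that $EVEN_1,EVEN_2,EVEN_3$ are even $\Delta$-matroids, which is needed so that adjoining them to $d\Gamma$ does not leave the tractable regime of Theorem~\ref{thm:main}. Everything else is case analysis combining the two cited theorems.
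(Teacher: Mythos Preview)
Your proposal is correct and follows essentially the same approach as the paper: combine Theorem~\ref{thm:dvorakkupec} with Theorem~\ref{thm:main}, after checking that the auxiliary relations $EVEN_i$ are even $\Delta$-matroids. The paper's own proof is terser (it only explicitly treats the one ``unresolved'' tractable case), whereas you additionally spell out the trivial case~(1), the hardness direction, and the verification for $EVEN_i$; but the substance is identical.
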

\begin{proof} By Theorem \ref{thm:dvorakkupec} the only unresolved case reduces
  to solving $$\EdgeCSP(d\Gamma \cup \{EVEN_1, EVEN_2, EVEN_3\}).$$ Since the
  relations $EVEN_i$ are even $\Delta$-matroids for every $i$, this is
  polynomial-time solvable thanks to Theorem \ref{thm:main}.
\end{proof}

\section{Even $\Delta$-matroids and matchings} \label{sec:matchings}

In this section we highlight the similarities and dissimilarities between even
$\Delta$-matroid CSPs and matching problems. These similarities will guide us
on our way through the rest of the paper.

\begin{example} \label{ex:instance} For $n \in \mathbb{N}$ consider the
  ``perfect matching'' relation $M_n \subset \{0,1\}^n$ containing precisely
  the tuples in which exactly one coordinate is set to one and all others to
  zero. Note that $M_n$ is an even $\Delta$-matroid for all $n$. Then the instance $I$ of $\EdgeCSP(\{ M_n \colon n \in \mathbb{N} \})$ (represented in Figure \ref{pic:example}) is equivalent to deciding whether the graph of the instance has a perfect matching (every node is adjacent to precisely one edge with label 1).

One may also construct an equivalent instance $I'$ by ``merging''
 some parts of
the graph (in Figure \ref{pic:example} those are $X$ and $Y$) to single constraint nodes
(this is exactly composition of $\Delta$-matroids). The constraint relations imposed on the ``supernodes'' record sets of outgoing
edges which can be extended to a perfect matching on the subgraph induced by
the ``supernode''. For example, in the instance $I'$ the constraints imposed on
$X$ and $Y$ would be (with variables ordered as in Figure \ref{pic:example}):
\begin{align*}
X &= \{10000, 01000, 00100, 00010, 11001, 10101, 10011 \},\\
Y &= \{001, 010, 100, 111 \}.
\end{align*}
It is easy to check that both $X$ and $Y$ are even $\Delta$-matroids.
\end{example}

\begin{figure*}[t]
\begin{center}
\includegraphics[scale=1]{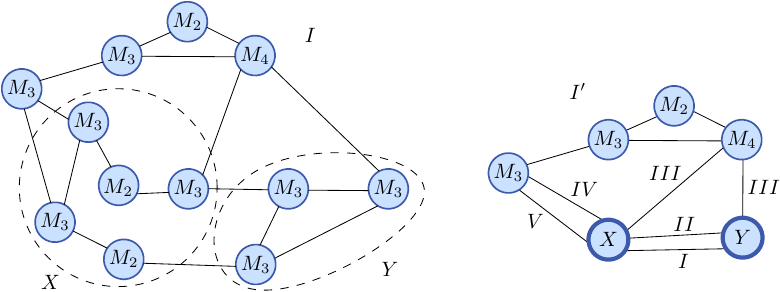}
\caption{On the left we see an instance $I$ that is equivalent to testing for perfect matching of the given graph. On the right is an equivalent instance $I'$ with contracted ``supernodes'' $X$ and $Y$.}
\label{pic:example}
\end{center}
\end{figure*}

One takeaway from this example is that any algorithm that solves edge CSP for the
even $\Delta$-matroid case has to work for perfect matchings in graphs as well. Another is the construction of even $\Delta$-matroids $X$ and $Y$ which can be generalized as follows.

\begin{definition}[Matching realizable relations] Let $G$ be a graph and let
  $v_1, \dots, v_a \in V(G)$ be distinct nodes of $G$. For an $a$-tuple $T =
  (x_1, \dots, x_a) \in \{0,1\}^a$, we denote by $G_T$ the graph obtained from
  $G$ by deleting all nodes $v_i$ such that $x_i=1$. Then we can define
  \[
    M(G, v_1, \dots, v_a) = \{T \in \{0,1\}^a \colon G_T \, \text{has a perfect matching} \}.
  \]
  We say that a relation $R \in \{0,1\}^a$ is \emph{matching realizable} if $R = M(G, v_1, \dots, v_a)$ for some graph $G$ and nodes $v_1, \dots, v_a \in V(G)$.
\end{definition}

Every matching realizable relation is an even $\Delta$-matroid
\cite{bouchet-matchings}.  Also, it should be clear from the definition
and the preceding example that $\EdgeCSP(\Gamma)$ is tractable if $\Gamma$
contains only matching realizable relations (assuming we know the graph $G$ and
the nodes $v_1,\dots,v_a$ for each relation): One can simply replace each constraint node with the
corresponding graph and then test for existence of perfect matching.

The authors of \cite{dvorak-kupec-planar-csp} also verify that every even $\Delta$-matroid of arity at most 5 is matching realizable. 
However, as we prove in Appendix~\ref{sec:appendix}, this is not true for
higher arities.

\begin{proposition}\label{prop:notmatchrel} There exists an even $\Delta$-matroid of arity 6 which is not matching realizable. 
\end{proposition}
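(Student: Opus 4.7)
The plan is to exhibit a concrete relation $M \subseteq \{0,1\}^6$ and establish the two claims separately: (i) $M$ is a nonempty even $\Delta$-matroid, and (ii) no graph $G$ with six distinguished vertices realizes $M$ as $M(G,v_1,\dots,v_6)$. Because the target $M$ sits inside a finite universe of $2^6 = 64$ tuples, the first claim is purely a finite verification: having listed the tuples of $M$, one confirms that their Hamming weights all share the same parity, and then for each pair $f,g \in M$ and each $v \in f \symdiff g$ one exhibits a $u \in f \symdiff g$ with $f \oplus \{u,v\} \in M$. This can be performed by hand or, more comfortably, by a short computer search that enumerates candidate even $\Delta$-matroids of arity six until one is found that fails the realizability test below.

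For the non-realizability step, I would argue by contradiction. Suppose $M = M(G,v_1,\dots,v_6)$ for some graph $G$ and distinct $v_1,\dots,v_6 \in V(G)$. The main structural observation is that a matching realization can always be reduced to a canonical form of bounded size: connected components of $G$ meeting none of the $v_i$ either admit a perfect matching (in which case they can be deleted without changing $M$) or never do (in which case $M = \emptyset$, contrary to assumption). Similarly, parts of $G$ that contribute only a forced matching near a single $v_i$ can be contracted. A careful normalization argument of this type should yield an absolute bound $|V(G)| \leq c$ for a constant $c$ depending only on the arity six, reducing the problem to a finite enumeration.

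With the candidate graphs enumerated, the contradiction is extracted through Tutte's theorem. Each $T \in M$ forces $G_T$ to have a perfect matching, and each $T \notin M$ forces a Tutte-Berge witness, i.e.\ a set $S \subseteq V(G_T)$ whose removal leaves more than $|S|$ odd components. Confronting this data with the explicit list of tuples in the chosen $M$ (in particular, the pattern of near-by tuples that swap membership in $M$) pins down the local structure of $G$ around each $v_i$ tightly enough that every candidate $G$ can be eliminated in turn.

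The main obstacle is the size-reduction step. A priori a matching realization may use arbitrarily many auxiliary vertices and edges, so one has to argue carefully that irrelevant subgraphs can be pruned without altering the realized relation, and that the resulting canonical realization has size bounded by a constant small enough for the final enumeration to be feasible. Once this normalization is in place, the contradiction with the explicit $M$ is essentially a mechanical check that the chosen $M$ was designed to fail.
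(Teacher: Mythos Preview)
Your proposal has a genuine gap at the size-reduction step, which you yourself flag as the main obstacle but do not resolve. The normalizations you sketch (deleting components disjoint from the $v_i$, contracting forced-matching pieces near a single $v_i$) do not by themselves bound $|V(G)|$: the essential difficulty is that alternating paths linking several of the $v_i$ through auxiliary vertices can be arbitrarily long and intertwined, and you give no argument for why such structure can be compressed to constant size. Without an explicit, small bound $c$, the ``final enumeration'' is not a proof but a hope. You also never exhibit the candidate $M$, deferring its discovery to a computer search whose realizability test depends on the very bound you have not established.

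The paper sidesteps enumeration entirely by proving a clean necessary condition for matching realizability: if $M = M(G,v_1,\dots,v_a)$ and $f,g \in M$, then taking the symmetric difference of the two underlying perfect matchings yields a graph of maximum degree two whose path components pair up the vertices in $f \symdiff g$ into blocks $P_1,\dots,P_k$ with $f \oplus P_i \in M$ and $g \oplus P_i \in M$ for every $i$ (each path is alternating for both matchings). This is a one-paragraph argument and holds for realizations of arbitrary size. They then display an explicit even $\Delta$-matroid on six variables (eighteen tuples) for which this pairing property fails at $f = 000000$, $g = 111111$: only five pairs $P$ satisfy both $f \oplus P \in M$ and $g \oplus P \in M$, and no three of them partition $\{1,\dots,6\}$. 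So the non-realizability certificate is a short combinatorial obstruction, not a search over graphs; this is the idea your plan is missing.
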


Proposition \ref{prop:notmatchrel} shows that we cannot hope to simply
replace the constraint nodes by graphs and run the Edmonds' algorithm. The
$\Delta$-matroid constraints can exhibit new and more complicated behavior than
just matchings in graphs, as we shall soon see.  In fact, there is a known
exponential lower bound for the matroid parity problem (matroids
being special cases of even $\Delta$-matroids and matroid
parity being a special case of edge CSP, see above) where $M$ is given by
an oracle (i.e. not explicit lists of tuples)~\cite{jensen-korte} (see also a
related result by L. Lov\'asz~\cite{lovasz-matroids} that considers a problem
slightly different from matroid matchings), which rules out any
polynomial time algorithm that would work in the oracle model. In particular,
we are convinced that our method of contracting blossoms cannot be
significantly simplified while still staying polynomial time computable.

\section{Algorithm} \label{section:algorithm}

\subsection{Setup}
\label{section:setup}

We can draw edge CSP instances as constraint graphs: The \emph{constraint
graph} $G_I=(V\cup \calC,\calE)$ of $I$ is a bipartite graph with parts
$V$ and $\mathcal C$.  There is an edge $\{v,C\}\in \calE$ if and only if $v$
belongs to the scope of $C$. Throughout the rest of the paper we use lower-case
letters $u,v,x,y,\ldots$ for variable nodes in $V$ and upper-case letters $A,B,C,\ldots$ for constraint nodes in $\calC$. 
Since we are dealing with edge CSP, the degree of each node $v\in V$ in $G_I$
is exactly two and since we don't allow a variable to appear in a constraint
twice, $G_I$ has no multiple edges.
For such instances $I$ we introduce the following terminology and notation.

\begin{definition}\label{def:Solutions} An \emph{edge labeling} of $I$ is a mapping $f:\calE\rightarrow\{0,1\}$.
  For a constraint $C\in\calC$ with the scope $\sigma$ we will denote by $f(C)$ the tuple
  in $\{0,1\}^\sigma$ such that $f(C)(v)=f(\{v,C\})$ for all $v\in\sigma$. Edge labeling $f$
  will be called \emph{valid} if $f(C)\in C$ for all $C\in\calC$.

  Variable $v\in V$ is called \emph{consistent} in $f$ if
  $f(\{v,A\})=f(\{v,B\})$ for the two distinct edges $\{v,A\},\{v,B\}\in\calE$ of $G_I$.
  Otherwise, $v$ is \emph{inconsistent} in $f$.

  A valid edge labeling $f$ is \emph{optimal} if its number of inconsistent
  variables is minimal among all valid edge labelings of $I$. 
  Otherwise $f$ is called \emph{non-optimal}.
\end{definition}
Note that $I$ has a solution if and only if an optimal edge labeling $f$ of $I$ has no inconsistent variables.


The main theorem we prove is the following strengthening of Theorem \ref{thm:main}.

\begin{theorem}
Given an edge CSP instance $I$ with even $\Delta$-matroid constraints,
an optimal edge labeling $f$ of $I$ can be found in time polynomial in $|I|$.
\end{theorem}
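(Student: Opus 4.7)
The plan is to generalize Edmonds' blossom-shrinking algorithm from perfect matching to the even $\Delta$-matroid setting. We start from any valid edge labeling $f$ of $I$, which is easy to produce since each constraint relation is non-empty, and iteratively reduce the number of inconsistent variables. The parity structure of even $\Delta$-matroids forces any improvement to change the number of inconsistent variables by an even amount, and since this count is bounded by $|V|$, at most $|V|/2$ outer iterations are needed.

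The heart of each iteration is a search for an augmenting alternating walk in the constraint graph $G_I$ that starts at an inconsistent variable $u$. The even $\Delta$-matroid exchange axiom supplies the crucial local rule: upon entering a constraint $C$ along an edge $\{x,C\}$ whose label we propose to flip, one can always find a second edge $\{y,C\}$ with $y \neq x$ whose simultaneous flip keeps $f(C)$ in $C$. If the walk reaches another inconsistent variable $w$, flipping all of its edges yields a valid labeling with two fewer inconsistencies. When the alternating search instead closes an odd cycle, i.e.~a blossom, we contract it: the involved constraint nodes and internal variables are replaced by a single super-constraint $C^*$ whose relation lists the boundary labelings that extend to a valid labeling of the blossom. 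The key technical lemma is that $C^*$ is again an even $\Delta$-matroid and that an augmenting walk exists in the contracted instance if and only if one exists in the original. This allows the search to proceed recursively on the contracted graph.

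Correctness at termination, namely that whenever the search fails from every inconsistent variable the current $f$ is optimal, will be established by a Gallai--Edmonds-style duality argument on the final alternating forest in the fully contracted graph. Polynomial running time follows because each outer round performs at most $|\calC|$ contractions, super-constraints stay small enough to be manipulated as explicit tuple lists, and each contraction can be computed in polynomial time. The principal obstacle, which has no direct analogue in graph matching, is proving that the contracted $C^*$ remains an even $\Delta$-matroid: Proposition~\ref{prop:notmatchrel} shows that even $\Delta$-matroids of arity at least $6$ are strictly richer than matching-realizable relations, so one cannot simply replay the matching argument. I expect the proof to rely on inductively chaining the local exchanges supplied by the constraints inside the blossom, using the blossom's alternating structure to pair up boundary variables correctly, and thereby producing a single global exchange witness for $C^*$.
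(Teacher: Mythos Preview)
Your overall architecture matches the paper's, but two concrete gaps would block the argument as written.

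\medskip
\textbf{The contraction model.} You propose to replace a blossom by a single super-constraint $C^*$ that lists all boundary labelings extending inside. This is \emph{not} what the paper does, and for a reason that is easy to miss: after one contraction the instance may still have blossoms, so you contract again, and the size of the super-constraint can multiply at each step. With $O(|V|)$ nested contractions the tuple list of $C^*$ can blow up exponentially, so your line ``super-constraints stay small enough to be manipulated as explicit tuple lists'' is unjustified. The paper avoids this by \emph{not} merging the blossom into one relation: it deletes only the blossom variables $b_1,\dots,b_k$, replaces each blossom constraint $D$ by a small modification $D^b$ with $|D^b|\le |D|$, and adds a single fresh ``perfect matching'' constraint $N$ of arity at most $k$. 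This gives $|I^b|\le |I|+O(|V|)$, and since each contraction increases the number of constraints by one while never increasing $|V|$, the recursion depth is $O(|V|)$ and the total size stays polynomial.

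\medskip
\textbf{The main technical obstacle.} You identify ``proving that $C^*$ remains an even $\Delta$-matroid'' as the principal difficulty. In fact this is the easy part: even $\Delta$-matroids are closed under gadget constructions, and in the paper's formulation Lemma~\ref{lemma:Cib-is-DeltaMatroid} has a short elementary proof. The genuinely hard part, which your proposal does not address, is that the alternating search you describe can fail in a way that has no analogue in graph matching. A constraint $C$ may need to be entered several times along a single walk, and when it is re-entered the tuple $f(C)$ has already been partially flipped, so the locally available exchanges are different. The paper handles this by (i) allowing multiple timestamped copies $C^t$ of each constraint in the forest, (ii) insisting that when $C^t$ is expanded, \emph{all} valid children are added at once, and (iii) maintaining the resulting ``no shortcuts'' invariant (their $f$-DAG, Definition~\ref{def:DAG}). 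The correctness of augmentation, of blossom detection, and of the ``No'' answer (Lemma~\ref{lemma:reach}) all rest on this invariant; it is where most of Section~\ref{sec:proofs} is spent. A plain Gallai--Edmonds duality on the forest will not go through without it, because a naive alternating search can produce walks whose flip is \emph{not} a valid labeling.
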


\paragraph{Walks and blossoms}

When studying matchings in a graph, paths and augmenting paths are important. We will use
analogous objects, called $f$-walks and augmenting $f$-walks, respectively.

\begin{definition}\label{def:walk}
  A \emph{walk} $q$ of length $k$ in the instance $I$ is a sequence
$q_0 C_1 q_1 C_2 \dots C_{k} q_k$ where the variables $q_{i-1},q_i$ lie in the
scope of the constraint $C_i$, and each edge $\{v,C\}\in\calE$ is traversed at most once:
 $vC$ and $Cv$ occur in $q$ at most once, and they do not occur simultaneously. 

We allow walks of length 0 (i.e. single vertex walks) for formal reasons.
\end{definition}

Note that a walk in the instance $I$ can be viewed as a walk in the graph $G_I$
that starts and ends at nodes in $V$ and uses each edge at most once. Since
each node $v\in V$ has degree two in $G_I$, a walk that enters a variable node
$v$ through an edge must leave $v$ through the other edge and cannot ever return
to $v$ again. The two exceptional vertices are the initial and terminal vertex
of a walk. These vertices \emph{can} be identical, i.e. we allow walks of the form
$vCq_1 \ldots q_{k-1}Dv$. 

A \emph{subwalk} of $q$, denoted by $q_{[i,j]}$, is the walk $q_i C_{i+1} \dots
C_j q_j$ (again, we need to start and end in a variable).  The inverse walk to
$q$, denoted by $q^{-1}$, is the sequence $q_k C_k \dots q_1 C_1 q_0$. Given
two walks $p$ and $q$ such that the last node of $p$ is the first node of $q$,
we define their concatenation $pq$ in the natural way.  If
$p=\alpha_1\ldots\alpha_k$ and $q=\beta_1\ldots\beta_\ell$ are sequences of
nodes of a graph where $\alpha_k$ and $\beta_1$ are different but adjacent, we
will denote the sequence $\alpha_1\ldots\alpha_k\beta_1\ldots\beta_\ell$ also
by $pq$ (or sometimes as $p,q$).

If $f$ is an edge labeling of $I$ and $q$ a walk in $I$, we denote by $f\oplus
q$ the mapping that takes $f$ and flips the values on all variable-constraint
edges encountered in $q$, i.e.
\begin{equation}
  (f\oplus q)(\{v,C\})=\!\begin{cases}
    \!1\!-\!f(\{v,C\}) &\!\!\mbox{if $q$ contains $vC$ or $Cv$} \\
    \!f(\{v,C\}) &\!\!\mbox{otherwise.}    
  \end{cases}
\label{eq:foplusq:def}
\end{equation}

\begin{definition}\label{def:f-walk}
  Let $f$ be a valid edge labeling of an instance $I$.  
  A walk $q=q_0 C_1 q_1 C_2 \dots C_{k} q_k$ 
  with $q_0\ne q_k$ will be called an \emph{$f$-walk}
  if
  \begin{enumerate}
    \item variables $q_1,\ldots,q_{k-1}$ are consistent in $f$, and 
    \item\label{itm:valid-interval} $f\oplus q_{[0,i]}$ is a valid edge labeling for any $i\in[1,k]$.
  \end{enumerate}
  If in addition variables $q_0$ and $q_k$ are inconsistent in $f$ then $q$ will be called an \emph{augmenting $f$-walk}.
\end{definition}
Observe that condition~\ref{itm:valid-interval} of the definition of an
$f$-walk is stronger than just ``$f\oplus q$ is valid.'' Instead, an $f$-walk 
corresponds to a whole sequence of valid labelings.

Later we will show that a valid edge labeling $f$ is non-optimal
if and only if there exists an augmenting $f$-walk.
Note that one direction is straightforward: 
If $p$ is an augmenting $f$-walk, then $f\oplus p$ is valid and has 2 fewer inconsistent variables than $f$. 


Another structure used by the Edmonds' algorithm for matchings is a
\emph{blossom}. The precise definition of a blossom in our setting
(Definition~\ref{def:Blossom}) is a bit technical. Informally, an $f$-blossom is a walk 
  $b=b_0 C_1 b_1 C_2 \dots C_{k} b_k$ with $b_0=b_k$ such that:
  \begin{enumerate}
    \item variable $b_0=b_k$ is inconsistent in $f$ while variables $b_1,\ldots,b_{k-1}$ are consistent, and
    \item $f\oplus b_{[i,j]}$ is a valid
  edge labeling for any non-empty proper subinterval $[i,j]\subsetneq[0,k]$,
    \item there are no bad shortcuts inside $b$ (we will make this precise
      later).
  \end{enumerate}

\subsection{Algorithm description}
\label{section:algorithm-description}
We are given an instance $I$ of edge CSP with even $\Delta$-matroid constraints
together with a valid edge labeling $f$ and we want to either show that $f$ is optimal or
improve it.
Our algorithm will explore the graph $(V\cup \calC,\calE)$ building a directed forest $T$.
Each variable node $v\in V$ will be added to $T$ at most once.
Constraint nodes $C\in \calC$, however, can be added to $T$ multiple times. To
tell the copies of $C$ apart (and to keep track of the order in which we built
$T$), we will mark each $C$ with a timestamp $t\in
\en$; the resulting node of $T$ will be denoted as $C^t\in\calC\times\en$.
Thus, the forest will have the form $T=(V(T)\cup \calC(T),E(T))$
where $V(T)\subseteq V$ and $\calC(T)\subseteq\calC\times\en$.

The roots of the forest $T$ will be the inconsistent nodes of the instance
(for current $f$); all non-root nodes in $V(T)$ will be consistent.
The edges of $T$ will be oriented towards the leaves.
Thus, each non-root node $\alpha\in V(T)\cup \calC(T)$ will have
exactly one parent $\beta\in V(T)\cup \calC(T)$ with $\beta\alpha\in E(T)$. 
For a node $\alpha\in V(T)\cup \calC(T)$ let $\walk(\alpha)$ be the the unique path in $T$ from a root to $\alpha$.
Note that $\walk(\alpha)$ is a subgraph of $T$. 
Sometimes we will treat walks in $T$ as sequences of nodes in $V\cup\calC$
discussed in Section~\ref{section:setup}
(i.e.~with timestamps removed); such places should be clear from the context.

We will grow the forest $T$ in a greedy manner as shown in
Algorithm~\ref{alg:Improve}.  The structure of the algorithm resembles that of
the Edmonds' algorithm for matchings~\cite{Edmonds:65}, with the following
important distinctions: First, in the Edmonds' algorithm each ``constraint
node'' (i.e.~each node of the input graph) can be added to the forest at most
once, while in Algorithm~\ref{alg:Improve} some constraints $C\in\calC$ can be
added to $T$ and ``expanded'' multiple times (i.e.~$E(T)$ may contain edges
$C^su$ and $C^{t}w$ added at distinct timestamps $s\ne t$).  This is because we
allow more general constraints.  In particular, if $C$ is a ``perfect
matching'' constraint
(i.e.~$C=\{(a_1,\ldots,a_k)\in\{0,1\}^k\::\:a_1+\ldots+a_k=1\}$) then
Algorithm~\ref{alg:Improve} will expand it at most once. (We will not use this
fact, and thus omit the proof.)

Note that even when we enter a constraint node for the second or third time, we
``branch out'' based on transitions $vCw$ available before the first visit,
even though the tuple of $C$ might have changed in the meantime.
This could cause one to doubt that Algorithm~\ref{alg:Improve} works at all.
\begin{algorithm*}[t]
  {\bf Input:} Instance $I$, valid edge labeling $f$ of $I$.

  \noindent {\bf Output:} A valid edge labeling $g$ of $I$ with fewer inconsistent variables than
  $f$, or ``No'' if no such $g$ exists.

  \begin{enumerate}
    \item Initialize $T$ as follows: set timestamp $t=1$, and for each inconsistent 
      variable $v\in V$ of $I$ add $v$ to $T$ as an isolated root.
     \item Pick an edge $\{v,C\}\in\calE$ such that $v\in V(T)$ but there is no
	  $s$ such that $vC^s\in E(T)$ or $C^sv\in E(T)$.
	  (If no such edge exists, then output ``No'' and terminate.)
	  \label{step:Edges}
	\item  Add new node $C^t$ to $T$ together with the edge $vC^t$.
	  \label{step:AddConstraint}
      \item \label{step:ExploringChildren} Let $W$ be the set of all variables $w\ne v$ in the scope of $C$ such that $f(C)\oplus
	v\oplus w\in C$ (recall that $f(C)\oplus v\oplus w\in C$ is a shorthand for $f(C)\oplus v\oplus w\in R_C$). For each $w\in W$ do the following (see Figure \ref{fig:treealg}):\label{step:W}

        \begin{enumerate}
           \item If $w\notin V(T)$, then add $w$ to $T$ together with the edge
	     $C^t w$.\label{step:AddVariable}
           \item Else if $w$ has a parent of the form $C^s$ for some $s$, then
	     do nothing.\label{step:Nothing}
           \item Else if $v$ and $w$ belong to different trees in $T$ (i.e.~originate from different roots), then we have found an augmenting path.
                 Let $p=\walk(C^t), \walk(w)^{-1}$, output $f\oplus p$ and exit.\label{step:Augment}
           \item Else if $v$ and $w$ belong to the same tree in $T$, then we
	     have found a blossom. Form a new instance $I^b$ and new valid edge labeling
	     $f^b$ of $I^b$ by \emph{contracting}
                 this blossom. Solve this instance recursively, use the
		 resulting improved edge labeling for $I^b$ (if it exists) to compute an
		 improved valid edge labeling for $I$, and terminate. All details are
		 given in Sec.~\ref{section:contracting}\label{step:blossom}.
    \end{enumerate}
    \item Increase the timestamp $t$ by 1 and goto step 2.
  \end{enumerate}
\caption{Improving a given edge labeling}
\label{alg:Improve}
\end{algorithm*}

A vague answer to this objection is that we grow $T$ very carefully:
While the Edmonds' algorithm does not impose any
restrictions on the order in which the forest is grown, we require that all valid children $w\in W$ be added to $T$ simultaneously when exploring edge $\{v,C\}$ in step~\ref{step:ExploringChildren}.
Informally speaking, this will guarantee that forest $T$ does not have
``shortcuts'', a property that will be essential in the proofs. The possibility of having
shortcuts is something that is not present in graph matchings and is
one of
the properties of even $\Delta$-matroids responsible for the considerable length
of the correctness proofs.


In the following theorem, we collect all pieces we need to show that Algorithm~\ref{alg:Improve}
is correct and runs in polynomial time:

\begin{theorem}\label{thm:correctness}
  If $I$ is a CSP instance, $f$ is a valid edge labeling of $I$, and we run
  Algorithm~\ref{alg:Improve}, then the following is true:
\begin{enumerate}
  \item The mapping $f\oplus p$ from step~\ref{step:Augment} is a valid edge labeling
    of $I$ with fewer inconsistencies than $f$.\label{claimAugment}
  \item When contracting a blossom as described in
    Section~\ref{section:contracting} $I^b$ is an edge CSP instance with even $\Delta$-matroid constraints and $f^b$
    is a valid edge labeling to $I^b$.\label{claimContracting}
  \item The recursion in~\ref{step:blossom} will occur at most $O(|V|)$ many
    times.\label{claimRecursion}
  \item In step~\ref{step:blossom}, $f^b$ is optimal for $I^b$ if and
    only if $f$ is optimal for $I$. Moreover, given a valid edge labeling
    $g^b$ of $I^b$ with fewer inconsistent variables than $f^b$, we can in
    polynomial time output a valid edge labeling $g$ of $I$ with fewer
    inconsistent variables than $f$.\label{claimblossom} 
  \item If the algorithm answers ``No'' then $f$ is optimal.\label{claimNo}
\end{enumerate}
\end{theorem}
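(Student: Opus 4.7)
My plan is to tackle the five parts separately, with Claims~1 and~3 being essentially bookkeeping, Claims~2 and~4 reducing to properties of the still-unspecified blossom contraction in Section~\ref{section:contracting}, and Claim~5 being the conceptual heart of the correctness argument.

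For Claim~1, I would maintain as an invariant throughout the construction of $T$ that for every node $\alpha \in V(T) \cup \calC(T)$, the sequence $\walk(\alpha)$ (with timestamps stripped) is an $f$-walk whose intermediate variables are all consistent in $f$. The check $f(C) \oplus v \oplus w \in C$ performed in step~\ref{step:ExploringChildren} guarantees that extending such a walk by $Cw$ still yields a valid intermediate edge labeling, so the invariant is preserved at each timestamp. In step~\ref{step:Augment}, the two walks $\walk(C^t)$ and $\walk(w)^{-1}$ end at inconsistent roots of different trees, hence share no variables, so their concatenation $p$ is a well-defined augmenting $f$-walk, and the direct observation made just after the definition of augmenting $f$-walk yields that $f \oplus p$ is valid with two fewer inconsistencies. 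Claim~3 is a simple counting argument: each blossom contraction merges at least two variables into a smaller instance $I^b$ with $|V(I^b)| < |V(I)|$, so the recursion depth is bounded by $|V|$.

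Claims~2 and~4 should both reduce to properties of the blossom contraction once it is spelled out in Section~\ref{section:contracting}. For Claim~2, the even $\Delta$-matroid property of the contracted constraint should follow by combining the exchange axiom of the original constraints with the parity-preserving nature of flipping a closed blossom walk, while validity of $f^b$ should follow from the fact that flipping $f$ along any proper subinterval of the blossom was already a valid labeling. The easy direction of Claim~4 (from $I$ to $I^b$) is a direct restriction argument; the harder direction requires showing that an improved labeling $g^b$ of $I^b$ can be extended back across the blossom using the absence of ``bad shortcuts'' built into the informal blossom definition, which is exactly why shortcut-freeness is required there.

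The main obstacle is Claim~5, for which I would argue the contrapositive: if $f$ is non-optimal and the algorithm outputs ``No'', derive a contradiction. Since $f$ is non-optimal there exists an augmenting $f$-walk $q = q_0 C_1 q_1 \cdots C_k q_k$ with inconsistent endpoints, both of which are roots of $T$. I would argue by induction on $i$ that the algorithm must eventually add each $q_i$ to $T$: when the exploration reaches the constraint $C_i$ along $q$, the next variable $q_i$ must lie in the set $W$ computed at step~\ref{step:ExploringChildren}, because the $f$-walk validity of $q_{[0,i]}$ translates directly into the required $\Delta$-matroid exchange condition $f(C_i)\oplus q_{i-1}\oplus q_i\in C_i$. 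The crucial feature, absent from ordinary matchings, is that step~\ref{step:ExploringChildren} adds \emph{all} valid children of $C$ at once, so the algorithm cannot ``miss'' a needed branch. Tracing $q$ to its end then forces either step~\ref{step:Augment} (roots differ) or step~\ref{step:blossom} (roots coincide) to fire before a ``No'' could be returned. The delicate points are handling augmenting walks that revisit already-explored variables and reconciling the potentially different tuples a constraint takes on after repeated expansions; both rely on the structural lemmas about even $\Delta$-matroids and shortcut-freeness that will have to be developed in Section~\ref{sec:proofs}.
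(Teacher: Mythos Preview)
Your proposal has a genuine gap that runs through Claims~1 and~5 simultaneously, and a separate error in Claim~3.

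\textbf{Claims~1 and~5: the multiple-visits problem appears earlier than you think.} You propose to maintain the invariant that each $\walk(\alpha)$ is an $f$-walk, arguing that ``the check $f(C)\oplus v\oplus w\in C$ performed in step~\ref{step:ExploringChildren} guarantees that extending such a walk by $Cw$ still yields a valid intermediate edge labeling.'' This is false as stated: if the constraint $C$ has already appeared earlier in $\walk(v)$, then after flipping along $\walk(v)$ the tuple at $C$ is no longer $f(C)$, so knowing $f(C)\oplus v\oplus w\in C$ does not directly imply $(f\oplus\walk(v))(C)\oplus v\oplus w\in C$. You do flag exactly this phenomenon (``reconciling the potentially different tuples a constraint takes on after repeated expansions'') as a delicate point at the end of your Claim~5 discussion, but you do not notice that it already undermines your invariant in Claim~1. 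Similarly, for the concatenation $p=\walk(C^t),\walk(w)^{-1}$: the two walks share no variable nodes because they live in different trees, but they can share \emph{constraint} nodes, and then validity of $f\oplus p$ at such a shared $C$ does not follow from validity of the two pieces separately. The paper's proof does not use the $f$-walk invariant at all; instead it introduces the notion of an \emph{$f$-DAG} satisfying a ``no shortcuts'' property (Definition~\ref{def:DAG}), proves that the forest $T$ together with the extra edge $wC^t$ is such an $f$-DAG (Lemma~\ref{lemma:AlgorithmDAG}), and then shows by a shortening argument using the even $\Delta$-matroid axiom (Lemma~\ref{lemma:shortening}, Corollary~\ref{corollary:path}) that $f\oplus p$ is valid. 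Claim~5 is proved by a shortest-bad-walk contradiction (Lemma~\ref{lemma:reach}) rather than by tracing a fixed augmenting walk through $T$; your inductive tracing argument fails for the same reason, since the condition checked by the algorithm at a revisited constraint is about $f(C)$, not about the tuple after flipping along $q_{[0,i-1]}$.

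\textbf{Claim~3.} Your bound $|V(I^b)|<|V(I)|$ is not guaranteed: the contraction deletes $k$ blossom variables and adds $|L|$ new variables $v_C$, and when the constraints $C_1,\dots,C_k$ are all distinct we have $|L|=k$, so $|V|$ can stay the same. The paper instead observes that $|\calC|$ strictly increases by one at each contraction while $|\calC|\le 2|V|$, giving the $O(|V|)$ bound.
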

\begin{figure*}[t]
\begin{center}
\includegraphics[scale=1]{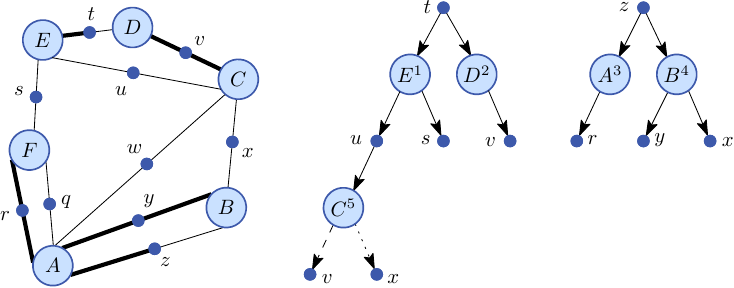}
\caption{A possible run of Algorithm~\ref{alg:Improve} on the instance $I'$ from Example
\ref{ex:instance} (with renamed constraint nodes) where the edge labeling $f$
is marked by thick (1) and thin (0) half-edges. We see that the algorithm finds
a blossom when it hits the variable $v$ the second time in the same tree.
However, had we first processed the transition $Cx$ (which we could have done), we would have found an augmenting path $p = \walk(C^5)
\walk(x)^{-1}$ (where $\walk(x)^{-1}$ ends in $z$).}
\label{fig:treealg}
\end{center}
\end{figure*}

\subsection{Contracting a blossom (step~\ref{step:blossom})}
\label{section:contracting}
We now elaborate step \ref{step:blossom} of Algorithm~\ref{alg:Improve}.
First, let us describe how to obtain the blossom~$b$. 
Let $\alpha\in V(T)\cup \calC(T)$ be the lowest common ancestor of nodes $v$ and $w$ in $T$.
Two cases are possible.

\begin{enumerate}
\item \label{contracting:case1}  $\alpha=r\in V(T)$.  Variable node $r$ must be inconsistent in $f$ because it has
outdegree two. We let $b=\walk(C^t),\walk(w)^{-1}$ in this case.
\item \label{contracting:case2} $\alpha=R^s\in \calC(T)$. 
Let $r$ be the child of $R^s$ in $T$ that is an ancestor of $v$. Replace edge labeling $f$ with $f\oplus \walk(r)$
(variable $r$ then becomes inconsistent).
Now define walk $b=p,q^{-1},r$
where $p$ is the walk from $r$ to $C^t$ in $T$ and $q$ is the walk from $R^s$ to
$w$ in $T$ (see Figure~\ref{fig:blossom}).
\end{enumerate}

\begin{lemma}[To be proved in Section~\ref{sec:AugmentContract}]\label{lemma:stem}
  Assume that Algorithm~\ref{alg:Improve} reaches step~\ref{step:blossom} and
  one of the cases described in the above paragraph occurs.
  Then:
  \begin{enumerate}
    \item in case~\ref{contracting:case2} the edge labeling $f\oplus \walk(r)$
      is valid, and
    \item in both cases the walk $b$ is an $f$-blossom (for the new edge
      labeling $f$, in case~\ref{contracting:case2}). 
      (Note that we have not formally defined $f$-blossoms yet; they require some machinery that
      will come later -- see Definition~\ref{def:Blossom}.)
  \end{enumerate}
\end{lemma}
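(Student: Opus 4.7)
My plan is to first establish an invariant that Algorithm~\ref{alg:Improve} maintains throughout its execution: for every $\alpha \in V(T)\cup\calC(T)$, the sequence of nodes along $\walk(\alpha)$ (with timestamps stripped) is an $f$-walk in the current instance. This is the workhorse behind both parts of the lemma and is proved by induction on the step at which $\alpha$ is first added to $T$. The inductive step uses exactly the membership test $f(C)\oplus v\oplus w\in C$ in step~\ref{step:ExploringChildren}, which is precisely what licenses extending the $f$-walk $\walk(C^t)$ by the transition $Cw$ while preserving every prefix validity; the fact that the new intermediate variables are non-root nodes of $T$, and hence consistent in $f$, ensures the consistency requirement inside the definition of an $f$-walk.

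For part~1, concerning case~\ref{contracting:case2}, the variable $r$ is a non-root of $T$ and therefore consistent in $f$. Applying the invariant to $r$, the walk $\walk(r)$ is an $f$-walk, so $f\oplus\walk(r)$ is a valid edge labeling. Exactly one edge incident to $r$ is flipped (namely $\{R^s, r\}$), so $r$ becomes inconsistent; the root of $\walk(r)$, previously inconsistent, likewise becomes consistent; every interior variable of $\walk(r)$ has both of its incident edges flipped, so its consistency status is preserved. This justifies replacing $f$ with $f\oplus\walk(r)$ before analysing the blossom.

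For part~2, the informal blossom conditions split into four sub-tasks. First, $b_0=b_k$ and the inconsistency of that endpoint are immediate from the construction: in case~\ref{contracting:case1} the endpoint is the inconsistent root $r$, while in case~\ref{contracting:case2} it is $r$ after the flip. Second, the interior variables of $b$ are all non-root nodes of $T$ lying strictly below the lowest common ancestor $\alpha$; none of them were affected by any flip, so they remain consistent. Third, that $b$ is a walk in the sense of Definition~\ref{def:walk} (no edge traversed twice) reduces to the fact that the two halves of $b$ descend from $\alpha$ into disjoint subtrees of $T$, that the bridging edge $\{C^t,w\}$ is not already in $T$ (step~\ref{step:ExploringChildren} reached step~\ref{step:blossom} rather than step~\ref{step:Nothing}), and, in case~\ref{contracting:case2}, that the closing edge $\{R^s,r\}$ appears only once in the concatenation $p, q^{-1}, r$. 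Fourth, the validity of $f\oplus b_{[i,j]}$ for every non-empty proper subinterval $[i,j]\subsetneqq[0,k]$ again reduces to the $f$-walk property of the constituent pieces $p$, $q$, $\walk(C^t)$, $\walk(w)$, together with the bridging transition $vCw$ covered by the test $f(C)\oplus v\oplus w\in C$.

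The main obstacle, I expect, is the \emph{no bad shortcuts} condition whose formal statement is deferred to Definition~\ref{def:Blossom}. This is where the specific, simultaneous expansion performed in step~\ref{step:ExploringChildren} becomes essential: because every admissible child $w\in W$ is added to $T$ at the very same timestamp $t$, any would-be shortcut across $b$ through some constraint $D$ must have been examined at the moment $D$ was first expanded, and must therefore either have been absorbed into $T$ at that point (contradicting the assumed shape of $b$) or have failed the membership test defining $W$. Turning this informal dichotomy into a rigorous exclusion of bad shortcuts, while carefully tracking parity of the flipped edges along a putative shortcut through the even $\Delta$-matroid exchange axiom (Definition~\ref{def:even-neighbors}), is where the bulk of the technical work in this lemma will lie.
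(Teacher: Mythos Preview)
Your invariant---that $\walk(\alpha)$ is an $f$-walk---is true, but your inductive argument for it has a gap. You write that the test $f(C)\oplus v\oplus w\in C$ ``licenses extending the $f$-walk $\walk(C^t)$ by the transition $Cw$''. This is only immediate when $C$ occurs in $\walk(C^t)$ exactly once (at the end). If $C$ was already traversed earlier in the walk, say as $aC^sb$ with $s<t$, then the tuple sitting at $C$ after flipping $\walk(C^t)$ is $f(C)\oplus a\oplus b\oplus v$, and what you actually need is $f(C)\oplus a\oplus b\oplus v\oplus w\in C$. Deriving this from the algorithm's test requires both the even $\Delta$-matroid exchange axiom and the ``no shortcuts'' information ($f(C)\oplus a\oplus v\notin C$, $f(C)\oplus a\oplus w\notin C$)---that is, exactly the machinery you defer to the end. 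So the invariant cannot be established first and the shortcut analysis done afterwards; the two are entangled.

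The paper organizes this differently. It packages the needed structure into the notion of an \emph{$f$-DAG} (Definition~\ref{def:DAG}), whose condition~(6) is the no-shortcuts property, and proves once (Lemma~\ref{lemma:AlgorithmDAG}) that the forest $T$, together with the extra edge $wC^t$, is an $f$-DAG. The formal Definition~\ref{def:Blossom} of an $f$-blossom is then simply that the two directed halves of $b$ form an $f$-DAG; so in case~\ref{contracting:case1} the lemma is immediate. For case~\ref{contracting:case2}, the paper does \emph{not} verify blossom conditions directly for the new labeling. Instead it repeatedly applies Lemma~\ref{lemma:shortening} (and once Lemma~\ref{lemma:shortening'}) to peel off the stem one constraint at a time: each step simultaneously updates $f$ by one $uCv$-flip and proves that the remaining graph is an $f$-DAG for the updated $f$. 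These two lemmas are where the even $\Delta$-matroid axiom is invoked to carry the no-shortcuts property across the change of labeling. Your last paragraph correctly anticipates that this transfer is the crux, but your informal dichotomy (``any would-be shortcut must have been examined when $D$ was first expanded'') only yields no-shortcuts with respect to the \emph{original} $f$; it says nothing about $f\oplus\walk(r)$, which is what Definition~\ref{def:Blossom} demands in case~\ref{contracting:case2}.
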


To summarize, at this point we have a valid edge labeling $f$ of instance $I$
and an $f$-blossom 
$b=b_0 C_1 b_1 \dots C_k b_k$. Let us denote by $L$ the set of constraints in
the blossom, i.e.~$L=\{C_1,\dots,C_k\}$. 
\begin{figure*}[t]
\begin{center}
\includegraphics[scale=1]{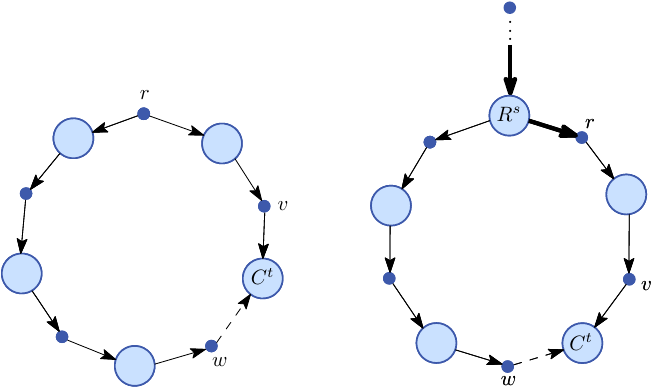}
\caption{The two cases of step~\ref{step:blossom}. On the left, $\alpha=r$
      is a variable, while on the right $\alpha=R^s$ is a constraint and the
      thick edges denote $p=\walk(r)$. The dashed edges are orientations of edges from $\calE$ that
  are not in the digraph $T$, but belong to the blossom.}\label{fig:blossom}
\end{center}
\end{figure*}

We construct a new instance $I^b$
and its valid edge labeling $f^b$ by \emph{contracting the blossom} $b$ as
follows: we take $I$, add one $|L|$-ary
constraint $N$ to $I$, delete the variables $b_1,\dots,b_{k}$, and add new
variables $\{v_C\colon C\in L\}$ (see Figure~\ref{fig:contraction}). The scope of $N$ is $\{v_C\colon C\in L\}$
and the $\Delta$-matroid of $N$ consists of exactly those maps $\alpha\in \{0,1\}^L$ that
send one $v_C$ to 1 and the rest to 0 (that is, $N$ is one of the perfect
matching $\Delta$-matroids from Example~\ref{ex:instance}).

In addition to all this, we replace each blossom constraint $D\in L$ by the constraint $D^b$ whose scope is
$\sigma\setminus \{b_1,\dots,b_k\}\cup \{v_D\}$ where $\sigma$ is the scope of
$D$. The constraint relation of $D^b$ consists of all maps $\beta$ for which there
exists $\alpha\in D$ such that $\alpha$ agrees with $\beta$ on
$\sigma\setminus\{b_1,\dots,b_k\}$ and one of the following occurs (see
Figure~\ref{fig:mulvisits}; note that $\sigma$ can contain more than two
elements of $\{b_1,\dots,b_k\}$ if $D$ appears in the blossom multiple times):
\begin{enumerate}
  \item $\beta(v_D)=0$ and $\alpha$ agrees with the original labeling $f(D)$ on all variables in $\{b_1,\dots,b_k\}\cap
    \sigma$, or
  \item\label{itm:flip-one} $\beta(v_D)=1$ and there is exactly one variable $z\in \{b_1,\dots,b_k\}\cap
    \sigma$ such that $\alpha(z)\neq f(D)(z)$.
\end{enumerate}

We claim that $D^b$ is an even $\Delta$-matroid. Indeed, let $Z_D$ be 
the relation on variables $\{b_1,\ldots,b_k\}\cap\sigma\cup\{v_D\}$
with the set of tuples
\[
 Z_D = \{\alpha\} \cup \{\alpha\oplus b \oplus v_D \:|\: b \in 
\{b_1,\ldots,b_k\}\cap \sigma \}
\]
where $\alpha$ is the tuple with $\alpha(b_i)=f(b_i)$ and $\alpha(v_D)=0$.
It is straightforward to verify that $Z_D$ is an even $\Delta$-matroid and that
$D^b$ is the composition of $D$ and $Z_D$ so, it follows from
Proposition~\ref{prop:composition} that each $D^b$ is an even $\Delta$-matroid. 
 
We define the edge labeling $f^b$ of $I^b$ as follows: for constraints
$A\notin\{C_1,\dots,C_k,N\}$ we set $f^b(A)=f(A)$. For each $C\in L$, we let
$f^b(C^b)(v)=f(C)(v)$ when $v\neq v_C$, and $f^b(C^b)(v_{C})=0$. Finally, we
let $f^b(N)(v_C)=1$ for $C=C_1$ and $f^b(N)(v_C)=0$ for all other $C$s. (The last
choice is arbitrary; initializing $f^b(N)$ with any other tuple in $N$ would
work as well.)

It is easy to check that $f^b$ is valid for $I^b$. Furthermore, $v_{C_1}$ is inconsistent in $f^b$ while
for each $C\in L\setminus\{C_1\}$ the variable $v_C$ is consistent.

\begin{observation}
  In the situation described above, the instance $I^b$ will have at most as
  many variables as $I$ and one constraint more than $I$. Edge labelings $f$ and 
  $f^b$ have the same number of inconsistent variables. 
\end{observation}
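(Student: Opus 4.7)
The plan is to verify the three claims of the observation by direct bookkeeping of what the contraction removes, adds, and preserves in passing from $(I,f)$ to $(I^b,f^b)$. The key structural inputs are: (i) by Definition~\ref{def:walk} and the fact that every variable has degree two in $G_I$, the variable visits $b_0,b_1,\dots,b_{k-1}$ of the blossom walk $b=b_0 C_1 b_1 \dots C_k b_k$ (with $b_0=b_k$) are pairwise distinct; (ii) constraint visits $C_1,\dots,C_k$ may repeat, so only $|L|\le k$ distinct constraint nodes are involved; (iii) by the informal blossom description in Section~\ref{section:algorithm}, the only inconsistent variable among $b_0,\dots,b_{k-1}$ is $b_0=b_k$, while the others are consistent.

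For the variable count, the construction deletes the $k$ distinct variables $\{b_1,\dots,b_k\}=\{b_0,\dots,b_{k-1}\}$ and introduces the $|L|\le k$ fresh variables $\{v_C:C\in L\}$, yielding $|V(I^b)|=|V(I)|-k+|L|\le|V(I)|$. For the constraint count, each $C\in L$ is replaced one-for-one by $C^b$, constraints outside $L$ are untouched, and exactly one new constraint $N$ is added, so $|\calC(I^b)|=|\calC(I)|+1$.

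For the inconsistency count, the blossom-affected variables of $I$ contribute exactly one inconsistency to $f$, namely $b_0$. Among the fresh variables of $I^b$, the definition of $f^b$ gives $f^b(C_1^b)(v_{C_1})=0$ and $f^b(N)(v_{C_1})=1$, so $v_{C_1}$ is inconsistent; for $C\in L\setminus\{C_1\}$, both $f^b(C^b)(v_C)$ and $f^b(N)(v_C)$ equal $0$, so $v_C$ is consistent. Every variable untouched by the contraction has both of its incident edge labels preserved, hence keeps its consistency status. The net count of inconsistent variables is therefore unchanged.

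No real obstacle arises; the only subtlety worth flagging is that the inequality $|V(I^b)|\le|V(I)|$ (rather than equality) reflects precisely the possibility that a constraint in $L$ is visited by the blossom walk more than once, which forces $|L|$ to be strictly smaller than $k$.
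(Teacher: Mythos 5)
Your proof is correct and matches the paper's (implicit) reasoning: the paper states this observation without proof, relying on exactly the bookkeeping you spell out, including the remark immediately preceding it that $v_{C_1}$ is inconsistent in $f^b$ while the other $v_C$ are consistent. Your counts (delete the $k$ distinct variables $b_1,\dots,b_k$, add $|L|\le k$ fresh ones, replace each $D\in L$ by $D^b$ one-for-one and add the single constraint $N$) and your observation that untouched variables keep both incident labels are precisely the intended argument.
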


\begin{corollary}[Theorem~\ref{thm:correctness}(\ref{claimRecursion})]\label{cor:recursion}
  When given an instance $I$, Algorithm~\ref{alg:Improve} will recursively call
  itself $O(|V|)$ many times.
\end{corollary}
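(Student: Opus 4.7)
The plan is to leverage the observation immediately preceding the corollary together with the defining property of edge CSPs, namely that every variable has degree exactly two in the constraint graph.

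First, I would unfold the recursion. Let $I_0 = I$ and, for $r \ge 1$, let $I_r$ denote the instance passed to the $r$-th nested invocation of the algorithm. By Theorem~\ref{thm:correctness}(\ref{claimContracting}) each $I_r$ is itself an edge CSP instance with even $\Delta$-matroid constraints, so the observation applies at every level and yields the invariants
\[
  |V(I_r)| \le |V(I_0)|
  \qquad\text{and}\qquad
  |\calC(I_r)| = |\calC(I_0)| + r.
\]
Hence bounding $r$ reduces to bounding $|\calC(I_r)|$ uniformly in $r$.

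Next, I would use the edge CSP structure. Since every variable in $I_r$ has degree exactly two in the constraint graph, the total number of variable-constraint edges satisfies $|\calE(I_r)| = 2|V(I_r)| \le 2|V(I_0)|$. On the other hand, each constraint has arity at least one and thus contributes at least one edge to $\calE(I_r)$, giving $|\calC(I_r)| \le |\calE(I_r)| \le 2|V(I_0)|$. Combining this with the equality above yields $r \le 2|V(I_0)| - |\calC(I_0)| = O(|V|)$, which is the required bound.

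I do not expect any genuine obstacle here: the argument is pure bookkeeping on top of the observation. The only point worth checking carefully is that the observation may be iterated, which is exactly what Theorem~\ref{thm:correctness}(\ref{claimContracting}) guarantees by ensuring that $I^b$ is itself an edge CSP instance with even $\Delta$-matroid constraints, so that the same reasoning applies at the next recursive call.
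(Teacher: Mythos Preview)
Your proposal is correct and follows essentially the same argument as the paper: both use that each variable has degree two in the constraint graph to get $|\calE|=2|V|$, then bound $|\calC|\le|\calE|$ via nonempty scopes, and combine this with the observation that each contraction adds exactly one constraint without increasing the number of variables. Your presentation is slightly more explicit about unfolding the recursion and justifying the iteration via Theorem~\ref{thm:correctness}(\ref{claimContracting}), but the substance is identical.
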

\begin{proof}
  Since $\calC$ and $V$ are partitions of $G_I$ and the degree of each $v\in V$
  is two, the number of edges of $G_I$ is $2|V|$. From the other
  side, the number of edges of $G_I$ is equal to the
  sum of arities of all constraints in $I$. Since we never consider constraints
  with empty scopes, the number of constraints of an instance is at most double
  the number of variables of the instance.

  Since each contraction adds one more constraint and never increases the
  number of variables, it follows that there cannot be a sequence of
  consecutive contractions longer than $2|V|$, which is $O(|V|)$.
\end{proof}

The following two lemmas, which we prove in Section~\ref{sec:proofs}, show why
the procedure works. In both lemmas, we let $(I,f)$ and $(I^b,f^b)$ denote the
instance and the valid edge labeling before and after the contraction,
respectively.
\begin{lemma}\label{lemma:ItoIprime}
  In the situation described above, if $f^b$ is optimal for $I^b$, then $f$ is optimal for $I$.
\end{lemma}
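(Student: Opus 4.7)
The plan is to argue by contrapositive: assume $f$ is non-optimal for $I$ and construct an improved valid labeling $g^b$ for $I^b$. Starting from any valid $g$ for $I$ with strictly fewer inconsistent variables than $f$, I would translate $g$ constraint-by-constraint into $g^b$, keeping $g^b$ equal to $g$ on every non-blossom constraint and defining $g^b$ on the contracted constraints $C_i^b$ and on $N$ by projecting the behavior of $g$ along the blossom.

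The natural projection is to set the non-blossom coordinates of $g^b(C_i^b)$ equal to those of $g(C_i)$, and to set $v_{C_i}$ to $0$ if $g(C_i)$ agrees with $f(C_i)$ on all blossom variables in the scope of $C_i$, and to $1$ if they differ on exactly one such variable. The value of $g^b(N)$ is then determined so as to make every $v_{C_i}$ consistent across $C_i^b$ and $N$, which forces a unique $v_{C_j}$ to be $1$, matching the definition of $N$ as the perfect matching constraint. Inconsistencies outside the blossom carry over directly, and the blossom root $b_0 = b_k$ is replaced by $v_{C_1}$ as the sole blossom-related inconsistency, so that a direct count shows $g^b$ has one fewer inconsistent variable than $f^b$ whenever the projection succeeds.

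The main obstacle is that in general $g(C_i)$ can disagree with $f(C_i)$ on two or more blossom variables, in which case it does not project to any tuple of $C_i^b$ at all. To overcome this I would first replace $g$ by a valid edge labeling $g'$ with the same inconsistency count that is in a canonical form: each $g'(C_i)$ differs from $f(C_i)$ on at most one blossom variable in its scope. Producing $g'$ requires rerouting the blossom coordinates of $g$ using the $f$-blossom property that $f \oplus b_{[i,j]}$ is valid on every proper subinterval $[i,j] \subsetneq [0,k]$, combined with the even $\Delta$-matroid exchange axiom applied on each $C_i$ to absorb pairs of blossom-disagreements into flips that can be redirected along $b$. Verifying that this reduction terminates and never increases inconsistencies — especially at the transition between $C_i$ and the next blossom constraint $C_{i+1}$, where a blossom disagreement absorbed at one must be compatible at the other — is the technical heart of the argument and the principal place where the blossom axioms are used.
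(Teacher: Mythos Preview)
Your plan diverges substantially from the paper's argument, and the part you flag as ``the technical heart'' is where the real difficulty lies---and it is not resolved.

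The paper does not attempt to translate an improved labeling $g$ into a labeling $g^b$ at all. Instead it invokes Lemma~\ref{lemma:Improve} to convert $g$ into an augmenting $f$-walk $p$ in $I$ that starts away from $b_0$. It then walks along $p$ until the first moment one can enter the blossom, i.e.\ the first index $i$ and blossom constraint $D$ such that $p_{[0,i]}Db_j$ is an $f$-walk. At that point the translated prefix $p_{[0,i]}^b D^b v_D$ (followed by $N v_{C_1}$ if $D\ne C_1$) is already an augmenting $f^b$-walk, because $v_{C_1}$ is the unique inconsistent blossom variable in $f^b$. No canonicalization of $g$ on blossom edges is needed.

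Your route has two concrete gaps. First, the ``canonical form'' step---forcing each blossom constraint to differ from $f$ on at most one blossom variable while keeping $g'$ valid and not increasing inconsistencies---is asserted but not carried out; doing it directly amounts to re-proving a walk-decomposition result inside the blossom, which is at least as hard as what you are trying to avoid, and constraints that occur multiple times in $b$ make the local exchange moves interact non-trivially. Second, even if every $C_i^b$ gets a well-defined value of $v_{C_i}\in\{0,1\}$, your claim that this ``forces a unique $v_{C_j}$ to be~$1$'' is not justified: nothing in ``at most one flip per constraint'' prevents several distinct blossom constraints from each having exactly one flipped blossom variable, in which case no tuple of $N$ makes all the $v_C$ consistent and your inconsistency count does not go through. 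The paper's walk-based argument sidesteps both issues: a walk touches the blossom for the first time at a single constraint, which automatically singles out the one $v_D$ that should carry the~$1$.
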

\begin{lemma}\label{lemma:IprimetoI}
  In the situation described above, if we are given a valid edge labeling $g^b$ of $I^b$ with
  fewer inconsistencies than $f^b$, then we can find in polynomial time a
  valid edge labeling $g$ of $I$ with fewer inconsistencies than $f$.
\end{lemma}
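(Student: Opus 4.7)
The plan is to construct $g$ on $I$ from $g^b$ on $I^b$ by copying $g^b$ on all edges not incident to any blossom variable (so non-blossom inconsistencies are automatically preserved), and then carefully assigning the remaining values by combining $g^b$'s choice on $N$ with a re-routed version of $f$ along the blossom $b$. Since $g^b(N) \in N$ and $N$ is the exactly-one-is-one relation, there is a unique $j$ with $g^b(N)(v_{C_j}) = 1$. For each blossom constraint $C_i$, the definition of $C_i^b$ gives an extension $\alpha_i \in C_i$ of $g^b(C_i^b)$ whose blossom part is dictated by $a_i := g^b(C_i^b)(v_{C_i})$: if $a_i = 0$ then $\alpha_i$ matches $f(C_i)$ on the blossom variables in the scope of $C_i$, and if $a_i = 1$ then $\alpha_i$ differs from $f(C_i)$ at a single blossom variable $z \in \{b_{i-1}, b_i\}$.

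Focus on the main sub-case in which every $v_C$ is consistent in $g^b$ (so $a_i = 0$ for $i \neq j$ and $a_j = 1$). Let $z = b_m$ be the blossom variable at which $\alpha_j$ differs from $f(C_j)$. Using the blossom property that $f \oplus b_{[i,j]}$ is a valid edge labeling for any nonempty proper subinterval, define a re-routed labeling $f'$ by flipping $f$ along a sub-walk whose endpoints are $b_0$ and $b_m$ (take $b_{[0,m]}$ if $z = b_{j-1}$ and $b_{[j,k]}$ if $z = b_j$). The labeling $f'$ is valid and has a single blossom inconsistency at $b_m = z$ rather than at $b_0$. Now assemble $g$ by setting $g = g^b$ on all non-blossom edges, $g(C_j) := \alpha_j$, and for each $i \neq j$ setting $g(C_i)$ to be the tuple agreeing with $g^b(C_i^b)$ on non-blossom variables and with $f'(C_i)$ on the two blossom variables $\{b_{i-1}, b_i\}$. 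A direct case check shows that every blossom variable is consistent in $g$: the re-routed blossom inconsistency at $z$ is cancelled by $\alpha_j$ differing from $f(C_j)$ at exactly $z$, while the interior variables of the sub-walk see both of their incident edges flipped in tandem and thus preserve their consistency.

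The hard part will be to verify that $g$ is valid, i.e.\ that $g(C_i) \in C_i$ for each $i \neq j$ whose constraint lies inside the flipped sub-walk. For such $i$ the prescription gives $g(C_i) = \alpha_i \oplus b_{i-1} \oplus b_i$, and the data we have is: $\alpha_i \in C_i$ that matches $f(C_i)$ on the blossom, $f(C_i) \oplus b_{i-1} \oplus b_i \in C_i$ from sub-walk validity, and $C_i$ is an even $\Delta$-matroid. This fact is not automatic from the even $\Delta$-matroid exchange axiom alone (one can cook up small even $\Delta$-matroids for which the fiber over $(0,0)$ on two distinguished variables differs from the fiber over $(1,1)$); the crucial ingredient will be the blossom's ``no bad shortcut'' structural property (the third clause of the $f$-blossom definition), which forces every exchange chain from $f(C_i)$ to $\alpha_i$ inside $C_i$ to be confinable to non-blossom variables, allowing the chain to be translated across the double flip on $\{b_{i-1}, b_i\}$. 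Once validity is secured, counting is immediate: $g$ and $g^b$ have the same non-blossom inconsistencies, $g$ has no blossom inconsistencies, and $f^b$ carries the extra inconsistency at $v_{C_1}$ relative to $f$ outside the blossom, so $g$ has strictly fewer inconsistencies than $f$. A brief case analysis handles the remaining sub-case where some $v_C$ is inconsistent in $g^b$, where the extra slack in the inconsistency count lets $g$ tolerate a small number of blossom inconsistencies that the same construction can always realize by suitable choices of $\alpha_i$ for those $i$ with $a_i = 1$.
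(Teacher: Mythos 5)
There is a genuine gap, and it sits exactly where you flag it: the validity of your stitched labeling $g$ at the blossom constraints inside the flipped sub-walk. Your prescription there is $g(C_i)=\alpha_i\oplus b_{i-1}\oplus b_i$, where $\alpha_i\in C_i$ agrees with $f(C_i)$ on blossom variables but may differ from $f(C_i)$ on arbitrarily many non-blossom coordinates, and the only facts in hand are $f(C_i)\oplus b_{i-1}\oplus b_i\in C_i$ and the even $\Delta$-matroid axiom. You correctly note this is not automatic, but the rescue you propose -- that the ``no shortcuts'' clause of the blossom definition forces exchange chains from $f(C_i)$ to $\alpha_i$ to avoid blossom variables and hence ``translates across the double flip'' -- is asserted, not proved, and there is no evident way to make it work: the no-shortcut property only forbids tuples of the form $f(C_i)\oplus u\oplus v$ for specific pairs of \emph{blossom} variables $u,v$ (ordered by timestamps), and gives no control over tuples that differ from $f(C_i)$ in non-blossom coordinates, which is what $\alpha_i$ does. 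A correct proof would need a new structural lemma of ``fiber transfer'' type for even $\Delta$-matroids, which you neither state nor prove. Two further points are also glossed over: a constraint may occur several times in the blossom (so its scope contains more than two blossom variables, the witness $z$ for $\alpha_j$ need not lie in $\{b_{j-1},b_j\}$ for the occurrence you flip, and your per-occurrence prescription is not well defined at a repeated constraint), and the closing sub-case where some $v_C$ is inconsistent in $g^b$ is dismissed in one sentence although it is not routine.

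For comparison, the paper avoids direct pull-back of $g^b$ altogether. It converts the pair $(f^b,g^b)$ into an augmenting $f^b$-walk avoiding $v_{C_1}$ (Lemma~\ref{lemma:Improve}), truncates it at its first contact with the blossom, transfers that prefix to $I$ as an $f$-walk $q$, uses Lemma~\ref{lemma:f-DAG-plus-path} (via the minimality of $q$) to show the blossom's $f$-DAG survives as an $(f\oplus q)$-DAG, and then completes the augmentation \emph{inside} the blossom by a timestamp-maximality argument and Corollary~\ref{corollary:path}. In that scheme every modification is a sequence of pairwise edge flips along walks/DAGs whose validity is maintained by the $f$-DAG lemmas, and repeated occurrences of a constraint in the blossom are handled explicitly by choosing the occurrence with maximal timestamp. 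Your plan would be attractive if the transfer claim were true, but as written the central validity step is missing, so the argument does not go through.
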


\begin{figure*}[t]
\begin{center}
\includegraphics[scale=1]{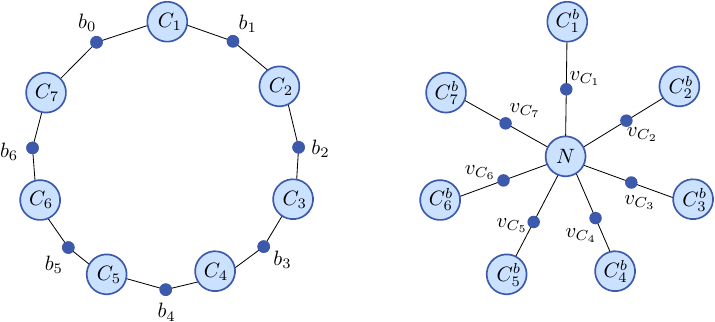}
\end{center}
\caption{A blossom (left) and a contracted blossom
  (right) in case when all constraints $C_1,\ldots,C_k$ are distinct.
If some constraints appear in the blossom multiple times then the number of
variables $v_{C_i}$
will be smaller than $k$ (see Figure~\ref{fig:mulvisits}).}\label{fig:contraction}
\end{figure*}
\begin{figure*}
\begin{center}
\includegraphics[width=\hsize]{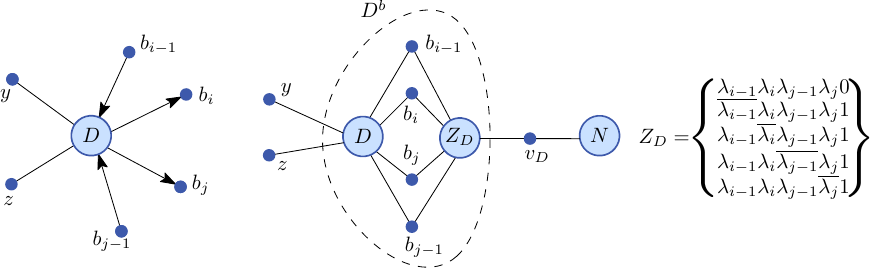}
\caption{Modification of a constraint node $D$ that appears in a blossom $b$
  twice, i.e.~when we have $b=\dots b_{i-1}Db_i \dots b_{j-1}Db_j \dots$ (and so $D=C_i=C_j$).
Variables $y$ and $z$ are not part of the walk. The construction of $D^b$ described in the text
can be alternatively viewed as composing $D$ with the $\Delta$-matroid $Z_D$ as shown in the figure.
Here $Z_D$ is an even $\Delta$-matroid with five tuples that depend on the values $\lambda_k = f(\{b_k,D\})$ and $\overline{\lambda_k} = 1- \lambda_k$.
}
\label{fig:mulvisits}
\end{center}
\end{figure*}

\subsection{Time complexity of Algorithm~\ref{alg:Improve}}
To see that Algorithm~\ref{alg:Improve} runs in time polynomial in the size of
$I$, consider first the case when step~\ref{step:blossom} does happen. In
this case, the algorithm runs in time polynomial in the size of $I$, since it
essentially just searches through the graph $G_I$. 

Moreover, from the description of contracting a blossom in
Section~\ref{section:contracting}, it is easy to see that one can compute $I^b$
and $f^b$ from $I$ and $f$ in polynomial time and that $I^b$ is not
significantly larger than $I$: $I^b$ has at most as many variables as $I$ and
the contracted blossom constraints $C^b$ are not larger than the
original constraints $C$. Finally, $I^b$ does have one brand new constraint $N$,
but $N$ contains only $O(|V|)$ many tuples. Therefore, we have $|I^b|\leq |I|+O(|V|)$ where $|V|$
does not change. By Corollary~\ref{cor:recursion}, there will be at most $O(|V|)$
contractions in total, so the size of the final instance $I^\star$ is at most
$|I|+O(|V|^2)$, which is easily polynomial in $|I|$.

All in all, Algorithm~\ref{alg:Improve} will give its answer in time polynomial
in $|I|$.
\section{Proofs}
\label{sec:proofs}
In this section, we flesh out detailed proofs of the statements
we gave above. In the whole section, $I$ will be an instance of a Boolean edge CSP whose
constraints are even $\Delta$-matroids.

In Sec.~\ref{sec:fwalks} we establish some properties of $f$-walks,
and show in particular that a valid edge labeling $f$ of $I$ is non-optimal if and only if
there exists an augmenting $f$-walk in $I$. In Sec.~\ref{sec:fDAGs} we introduce the notion of an \emph{$f$-DAG},
prove that the forest $T$ constructed during the algorithm is in fact an $f$-DAG,
and describe some tools for manipulating $f$-DAGs.
Then in Sec.~\ref{sec:AugmentContract} we analyze augmentation and contraction operations,
namely prove Theorem~\ref{thm:correctness}(\ref{claimAugment}) and Lemmas~\ref{lemma:stem},~\ref{lemma:ItoIprime},~\ref{lemma:IprimetoI}
(which imply Theorem~\ref{thm:correctness}(\ref{claimContracting}, \ref{claimblossom}). Finally,
in Sec.~\ref{sec:NoProof} we prove Theorem~\ref{thm:correctness}(\ref{claimNo}).

For edge labelings $f,g$, let $f \symdiff g\subseteq\calE$ be the set of edges
in $\calE$ on which $f$ and $g$ differ. 

\begin{observation}\label{obs:parity}
If $f$ and $g$ are valid edge labelings of instance $I$ then they have
the same number of inconsistencies modulo 2.
\end{observation}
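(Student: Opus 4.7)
The plan is to prove the observation by double counting the total number of edges labeled $1$, once from the constraint side and once from the variable side, and exploiting the even $\Delta$-matroid hypothesis at the constraint side.

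For a valid edge labeling $f$, set $n_1(f) = \sum_{e \in \calE} f(e)$. First I would rewrite this sum by grouping edges by their constraint endpoint:
\[
n_1(f) \;=\; \sum_{C \in \calC} \sum_{v \in \sigma(C)} f(\{v,C\}) \;=\; \sum_{C \in \calC} |f(C)|_1,
\]
where $|f(C)|_1$ denotes the number of $1$'s in the tuple $f(C)$. Since each constraint relation $R_C$ is an even $\Delta$-matroid, the parity $p_C \in \{0,1\}$ of $|t|_1$ is the same for every tuple $t \in R_C$, and in particular $|f(C)|_1 \equiv p_C \pmod 2$. Therefore
\[
n_1(f) \;\equiv\; \sum_{C \in \calC} p_C \pmod 2,
\]
and the right-hand side does not depend on the edge labeling $f$ at all, only on $I$. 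Consequently $n_1(f) \equiv n_1(g) \pmod 2$ for any two valid edge labelings $f,g$.

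Next I would regroup the sum by variable endpoint. Each $v \in V$ has degree exactly two in $G_I$, say with incident edges $\{v,A_v\}$ and $\{v,B_v\}$, so
\[
n_1(f) \;=\; \sum_{v \in V} \bigl(f(\{v,A_v\}) + f(\{v,B_v\})\bigr).
\]
The summand is $0$ or $2$ when $v$ is consistent in $f$, and $1$ when $v$ is inconsistent in $f$. Hence, writing $\mathrm{inc}(f)$ for the number of inconsistent variables,
\[
n_1(f) \;\equiv\; \mathrm{inc}(f) \pmod 2.
\]
Combining with the previous paragraph gives $\mathrm{inc}(f) \equiv \mathrm{inc}(g) \pmod 2$, as desired. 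There is no real obstacle here; the only subtlety is recognizing that ``even $\Delta$-matroid'' is exactly the hypothesis that makes the constraint-side count parity-rigid independently of $f$.
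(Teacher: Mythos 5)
Your proof is correct, but it takes a different route from the paper. The paper argues by induction on $|f \symdiff g|$: it picks an edge $\{v,C\}\in f\symdiff g$, invokes the even $\Delta$-matroid exchange property to find a second edge $\{w,C\}\in f\symdiff g$ with $f(C)\oplus v\oplus w\in C$, flips both (which visibly preserves the parity of the number of inconsistent variables), and thereby reduces $|f\symdiff g|$ by two, so the claim follows by induction. Your argument is instead a global double count of the edges labeled $1$: grouping by constraints, evenness of each $R_C$ pins down $n_1(f)\bmod 2$ independently of the particular valid labeling, while grouping by degree-two variables identifies $n_1(f)\bmod 2$ with the number of inconsistencies. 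Your version is more elementary in that it uses only the parity-constancy part of the even $\Delta$-matroid definition and never the exchange axiom, and it yields a slightly stronger statement: the parity of the number of inconsistencies of \emph{any} valid edge labeling is an invariant of $I$, namely $\sum_{C\in\calC} p_C \bmod 2$. What the paper's local exchange argument buys is that it rehearses exactly the walk-building technique (flip a pair of edges at a constraint chosen via the $\Delta$-matroid property to shrink $f\symdiff g$) that is then reused in the proof of the augmenting-walk lemma, so it doubles as a warm-up for the machinery that follows; your approach does not generalize in that direction but is a perfectly sound, self-contained proof of the observation.
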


\begin{proof}
We use induction on $|f \symdiff g|$. The base case $|f \symdiff g|=0$ is trivial.
For the induction step let us consider valid edge labelings $f,g$ with $|f \symdiff g|\ge 1$.
Pick an edge $\{v,C\}\in f \symdiff g$. 
By the property of even $\Delta$-matroids there exists another edge $\{w,C\}\in f \symdiff g$ with $w\ne v$
such that $f(C)\oplus v \oplus w\in C$. Thus, edge labeling $f^\star=f\oplus (vCw)$ is valid.
Clearly, $f$ and $f^\star$ have the same number of inconsistencies modulo 2.
By the induction hypothesis, the same holds for edge labelings $f^\star$ and $g$
(since $|f^\star \symdiff g|=|f \symdiff g|-2$). This proves the claim.
\end{proof}

\subsection{The properties of $f$-walks}\label{sec:fwalks}
Let us begin with some results on $f$-walks that will be of use later.
The following lemma is a (bit more technical) variant of the well known
property of labelings proven in~\cite[Theorem 3.6]{Dalmau2003}:
\begin{lemma}\label{lemma:Improve}
  Let  $f,g$ be valid edge labelings of $I$ such that $g$ has fewer
  inconsistencies than $f$, and $x$ be an inconsistent variable in $f$.
  Then there exists an augmenting $f$-walk that begins in a variable
  different from $x$. Moreover, such a walk can be computed in polynomial time
  given $I$, $f$, $g$, and $x$.
\end{lemma}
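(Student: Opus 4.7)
The plan is to prove the following strengthened statement by induction on $|f \symdiff g|$: for any valid labelings $f, g$ of $I$ with $|A_f| > |B_g|$ (writing $A_f, B_g$ for the sets of inconsistent variables) and any $y \in A_f \setminus B_g$, there is an augmenting $f$-walk starting at $y$, computable in polynomial time. To derive Lemma~\ref{lemma:Improve} from this, I would apply Observation~\ref{obs:parity}: since $|A|$ and $|B|$ have the same parity, $|A \setminus B| \geq |A| - |B| \geq 2$, so I can pick $y \in A \setminus B$ with $y \neq x$ and invoke the strengthened claim.

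For the inductive step, since $y$ is inconsistent in $f$ but consistent in $g$, exactly one of $y$'s two incident edges in $G_I$ lies in $f \symdiff g$; call this edge $\{y, C_1\}$. Because $C_1$ is an even $\Delta$-matroid and both $f(C_1), g(C_1) \in C_1$ have the same parity, the set $f(C_1) \symdiff g(C_1)$ has even positive size. Applying the $\Delta$-matroid exchange property at $C_1$ with element $y$ produces some $z \neq y$ in $f(C_1) \symdiff g(C_1)$ with $f(C_1) \oplus y \oplus z \in C_1$. If $z \in A_f$, then $y C_1 z$ is already an augmenting $f$-walk. Otherwise, set $f' := f \oplus (y C_1 z)$; this is valid, has $|A_{f'}| = |A_f|$ and $|f' \symdiff g| = |f \symdiff g| - 2$, with $y$ now consistent and $z$ now inconsistent in $f'$. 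I then apply the induction hypothesis to $(f', g)$ with starting variable $z$ to obtain an augmenting $f'$-walk $q'$, and concatenate to form $q = y C_1 q'$.

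The main obstacle is the sub-case $z \in B_g$: then $z \in A_{f'} \cap B_g$ rather than $A_{f'} \setminus B_g$, so the induction hypothesis does not directly apply with $z$ as starting variable. In this case one observes that $z \in B_g$ together with the consistency equations at $z$ force the other edge $\{z, C_2\}$ of $z$ to lie outside $f \symdiff g$, which constrains the local structure of $f \symdiff g$ near $z$; one can then either modify $g$ via a local exchange within $g(C_1)$ to produce $g'$ with $z \notin B_{g'}$ and apply induction to $(f, g')$, or exploit the fact that $f(C_1) \symdiff g(C_1)$ may have multiple exchange partners compatible with $y$ and pick one outside $B_g$. Handling this sub-case robustly is the main technical burden of the proof.

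Finally, the concatenation $q = y C_1 q'$ must be verified to be an $f$-walk: since $f$ and $f'$ differ only on the edges $\{y, C_1\}$ and $\{z, C_1\}$, we need $q'$ to avoid these edges (so no edge is reused) and to avoid $y$ as an intermediate variable (so intermediates remain consistent in $f$, as $y$ is inconsistent in $f$); both conditions can be arranged by strengthening the induction hypothesis with a forbidden-edge/vertex set, or by a standard shortcut-elimination argument on $q'$. Polynomial running time follows since each recursive call strictly decreases $|f \symdiff g|$ by 2, bounding recursion depth by $|\calE|/2$, and each step does polynomial work (computing $f(C_1) \symdiff g(C_1)$ from the tuple listings and searching for an exchange partner).
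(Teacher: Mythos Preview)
Your outline has the right ingredients but leaves exactly the two places you flag as ``technical burden'' genuinely unresolved, and both are where the actual work lies. The sub-case $z\in B_g$ is not a corner case you can patch by ``picking a different exchange partner'': there is no reason any exchange partner for $y$ in $f(C_1)\symdiff g(C_1)$ avoids $B_g$. Your alternative of modifying $g$ is in fact the right move, but you have not argued that it terminates, that it does not increase $|B_g|$, and that after it you can restart the argument at $y$ rather than at $z$. Likewise, the concatenation problem is real: nothing prevents the inductively obtained $q'$ from passing through $y$ (which is consistent in $f'$ and hence a legal interior vertex of $q'$), and then $yC_1q'$ is not a walk. A ``forbidden-vertex'' strengthening of the hypothesis interacts badly with the $\Delta$-matroid exchange step (you cannot control which $z$ you get), and ``shortcut elimination'' does not obviously work because the suffix of $q'$ from $y$ is not an augmenting walk for either $f$ or $f'$.

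The paper's proof avoids both difficulties by separating the two moves instead of interleaving them. First it repeatedly modifies $g$ (not $f$) so that every variable consistent in $f$ is also consistent in $g$; this is exactly your ``modify $g$'' idea done once and for all, and it guarantees $B_g\subseteq A_f$, so the bad sub-case never arises. Then, with this normalized $g$, it grows the walk iteratively from some $r\in A_f\setminus B_g$ with $r\ne x$, at each step consuming two edges of $f^\star\symdiff g$. Because the starting vertex $r$ has a unique incident edge in $f^\star\symdiff g$ and that edge is removed in the first step, $r$ can never be revisited, so the result is automatically a walk---no concatenation, no forbidden sets, no shortcuts to eliminate.
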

\begin{proof}
Our algorithm will proceed in two stages.
First, we repeatedly modify the edge labeling $g$ using the following procedure:
\begin{itemize}
\item[(1)] Pick a variable $v\in V$ which is consistent in $f$, but not in $g$. (If
  no such $v$ exists then go to the next paragraph).
By the choice of $v$, there exists a unique edge $\{v,C\}\in f \symdiff g$.
Pick a variable $w\ne v$ in the scope of $C$ such that $\{w,C\}\in f \symdiff g$ and $g(C)\oplus v\oplus w\in C$
(it exists since $C$ is an even $\Delta$-matroid). Replace $g$ with $g\oplus (vCw)$, then go to the beginning and repeat.
\end{itemize}
It can be seen that $g$ remains a valid edge labeling, and the number of inconsistencies in $g$ never increases.
Furthermore, each step decreases $|f \symdiff g|$ by $2$, so this procedure
must terminate after at most $O(|\calE|)=O(|V|)$ steps.

We now have valid edge labelings $f,g$ such that $f$ has more inconsistencies than $g$,
and variables consistent in $f$ are also consistent in $g$.
Since the parity of number of inconsistencies in $f$ and $g$ is the same,
$f$ 
has at least two more inconsistent variables than $g$;
one of them must be different from $x$.

In the second stage we will maintain an $f$-walk $p$ and the corresponding
 valid edge labeling $f^\star=f\oplus p$.
To initialize, pick a variable $r\in V\setminus\{x\}$
which is consistent in $g$ but not in $f$,
and set $p=r$ and $f^\star=f$. We then repeatedly apply the following step:
\begin{itemize}
  \item[(2)] Let $v$ be the endpoint of $p$. The variable $v$ is consistent in $g$ but not in $f^\star$,
so there is a unique edge $\{v,C\}\in f^\star \symdiff g$. 
Pick a variable $w\ne v$ in the scope of $C$ such that $\{w,C\}\in f^\star \symdiff g$ and $f^\star(C)\oplus v\oplus w\in C$
(it exists since $C$ is an even $\Delta$-matroid). Append $vCw$ to the end of $p$, and accordingly
replace $f^\star$ with $f^\star\oplus(vCw)$ (which is valid by the choice of $w$).
As a result of this update of $f^\star$, edges $\{v,C\}$ and $\{w,C\}$ disappear from $f^\star \symdiff g$.

If $w$ is inconsistent in $f$, then output $p$ (which is an augmenting $f$-walk) and terminate.
Otherwise $w$ is consistent in $f$ (and thus in $g$) but not in $f^\star$; in
this case, go to the beginning and repeat.
\end{itemize}
Each step decreases $|f^\star \symdiff g|$ by $2$, so this procedure
must terminate after at most $O(|\calE|)=O(|V|)$ steps. To see that $p$
is indeed a walk, observe that the starting node $r$ has exactly one incident
edge in the graph $(V\cup\calC,f^\star \symdiff g)$. Since this edge is immediately removed
from $f^\star \symdiff g$, we will never encounter the variable $r$ again during the procedure.
\end{proof}

\subsection{Invariants of Algorithm~\ref{alg:Improve}: $f$-DAGs}
\label{sec:fDAGs}
In this section we examine the properties of the forest $T$ as generated by
Algorithm~\ref{alg:Improve}. For future comfort, we will actually allow $T$ to be a
bit more general than what appears in Algorithm~\ref{alg:Improve} -- our $T$ can
be a  directed acyclic digraph (DAG):

\begin{definition}\label{def:DAG}
  Let $I$ be a Boolean edge CSP instance and $f$ a valid edge labeling of $I$.
  We will call a directed graph $T$ an \emph{$f$-DAG} if $T=(V(T)\cup\calC(T),E(T))$
where $V(T)\subseteq V$, $\calC(T)\subseteq\calC\times \en$, and
the following conditions hold:
\begin{enumerate}
  \item Edges of $E(T)$ have the form $vC^t$ or $C^tv$ where $\{v,C\}\in \calE$
    and $t\in\en$.\label{item:Form}
  \item\label{item:MostOnce} For each $\{v,C\}\in \calE$ there is at most one $t\in\en$ such that
    $vC^t$ or $C^tv$ appears in $E(T)$. Moreover, $vC^t$ and $C^tv$ are never
    both in $E(T)$.
  
  \item\label{item:Incoming} Each node $v\in V(T)$ has at most one incoming
    edge. (Note that by the previous properties, the node $v$ can have at most two incident edges in $T$.)
  \item \label{item:Timestamps}  
    Timestamps $t$ for nodes $C^t\in\calC(T)$ are all distinct (and thus 
    give a total order on $\calC(T))$. Moreover, this order can be extended to a
    total order $\prec$ on $V(T)\cup \calC(T)$ 
    such that $\alpha \prec \beta$ for each edge $\alpha\beta\in E(T)$. (So
    in particular the digraph $T$ is acyclic.)
  \item\label{item:SwitchTwo} If $T$ contains edges  $u C^t$ and one of $vC^t$ or $C^tv$, then $f(C)\oplus u \oplus v\in C$.
  \item\label{item:Shortcuts} (``No shortcuts'' property) If $T$ contains edges $u C^s$ and one of
    $vC^t$ or $C^tv$ where $s<t$, then $f(C)\oplus u\oplus v\notin C$.
\end{enumerate}
\end{definition}

From the definition of an $f$-DAG, we immediately obtain the following.
\begin{observation}\label{obs:sub-f-dag}
Any subgraph of an $f$-DAG is also an $f$-DAG. 
\end{observation}

 If $T$
is an $f$-DAG, then we denote by $f\oplus T$ the edge labeling we obtain from $f$
by flipping the value of any $f(\{v,C\})$ such that $vC^t\in E(T)$ or $C^tv\in
E(T)$ for some
timestamp $t$. We will need to show that $f\oplus T$ is a valid edge labeling for
nice enough $f$-DAGs $T$.


The following lemma shows the promised invariant property:
\begin{lemma}\label{lemma:AlgorithmDAG}
Let us consider the structure $T$ during the run of Algorithm~\ref{alg:Improve}
with the input $I$ and $f$. At any moment during the run, the forest $T$ is an $f$-DAG.

Moreover, if steps \ref{step:Augment} or \ref{step:blossom} are reached, then the
digraph $T^\star$ obtained from $T$ by removing all edges outgoing from $C^t$ and 
adding the edge $wC^t$ is also an $f$-DAG.
\end{lemma}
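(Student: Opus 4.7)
My plan is to proceed by induction on the number of completed iterations of the main loop (steps~\ref{step:Edges}--5) of Algorithm~\ref{alg:Improve}. The base case, right after step~1, is immediate: $T$ consists of the inconsistent variables of $f$ as isolated nodes in $V(T)$, so all six $f$-DAG properties hold vacuously. For the inductive step, I examine one iteration and verify the six properties of Definition~\ref{def:DAG} after step~\ref{step:AddConstraint} inserts $C^t$ and $vC^t$ and after each instance of subcase~\ref{step:AddVariable} that inserts some $C^tw$.

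The routine properties are essentially free. Property~\ref{item:Form} is by construction; property~\ref{item:Incoming} holds because each $w$ added in subcase~\ref{step:AddVariable} is brand new to $V(T)$; property~\ref{item:Timestamps} holds because $t$ is a fresh timestamp, and the total order $\prec$ is extended by placing $C^t$ at its end. Property~\ref{item:MostOnce} for $vC^t$ is precisely the selection condition of step~\ref{step:Edges}, and for $C^tw$ it follows since $w$ is fresh. Property~\ref{item:SwitchTwo} for the pair $(vC^t,C^tw)$ is exactly $f(C)\oplus v\oplus w\in C$, which is the defining condition of~$W$.

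The heart of the argument is property~\ref{item:Shortcuts}. Whenever a new copy $C^t$ appears, for every older edge $uC^s\in E(T)$ with $s<t$ and every $x\in\{v,w\}$ incident to a new edge touching $C^t$, I must rule out $f(C)\oplus u\oplus x\in C$. I will argue by contradiction. Since $u\ne x$ (otherwise $uC^s=xC^s$ is already in $E(T)$, contradicting step~\ref{step:Edges}'s precondition when $x=v$, with an analogous derivation for $x=w$), the hypothesis $f(C)\oplus u\oplus x\in C$ forces $x\in W_s$ at the moment $C^s$ was first added. Tracing the subcase of step~\ref{step:ExploringChildren} that applied to $x$ then produces a contradiction in each of the three possibilities: subcase~\ref{step:AddVariable} would have installed $C^sx\in E(T)$ (which contradicts step~\ref{step:Edges}'s precondition when $x=v$ and forces the current iteration to take subcase~\ref{step:Nothing}, not subcase~\ref{step:AddVariable}, when $x=w$); subcase~\ref{step:Nothing} would give $x$ a parent $C^r$ at that time, leading to analogous contradictions; and subcases~\ref{step:Augment} or~\ref{step:blossom} would have terminated the algorithm, precluding the present iteration.

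For the ``moreover'' claim, deleting all outgoing edges of $C^t$ preserves the $f$-DAG structure since subgraphs of an $f$-DAG are $f$-DAGs; adding the edge $wC^t$ then reduces to the same analysis. Property~\ref{item:MostOnce} for $wC^t$ requires ruling out earlier $C^sw$ (impossible since subcase~\ref{step:Augment} or~\ref{step:blossom} can only fire when $w$ has no parent of the form $C^s$) and earlier $wC^r$ (impossible since, using the symmetry $f(C)\oplus w\oplus v=f(C)\oplus v\oplus w\in C$, one would then get $v\in W_r$, and the same $W_r$-tracing forces a contradiction with step~\ref{step:Edges}'s precondition on $v$). Property~\ref{item:Shortcuts} for $(u'C^r,wC^t)$ is identical to the main-step tracing with $x=w$. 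The main obstacle is precisely this bookkeeping behind property~\ref{item:Shortcuts}: verifying it leans essentially on the ``add all of $W$ at once'' discipline of subcase~\ref{step:ExploringChildren}, because otherwise some $x\in W_s$ at the earlier step could have been left unprocessed and the tracing argument would collapse.
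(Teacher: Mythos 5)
Your proof is correct and follows essentially the same route as the paper's: induction over the algorithm's steps, routine verification of properties (1)--(5), and a contradiction argument for the ``no shortcuts'' property that traces which subcase of step~\ref{step:ExploringChildren} must have applied when the earlier copy $C^s$ was expanded. You are in places more explicit than the paper (e.g.\ ruling out an earlier edge $wC^r$ in the ``moreover'' part via the symmetric $W_r$-tracing), but the underlying argument is the same.
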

\begin{proof}
  Obviously, an empty $T$ is an $f$-DAG, as is the initial $T$ consisting of
  inconsistent variables and no edges. To verify that $T$ remains an $f$-DAG
  during the whole run of Algorithm~\ref{alg:Improve}, we need to make sure
  that neither adding $vC^t$ in step~\ref{step:AddConstraint}, nor adding $C^tw$ in
  step~\ref{step:AddVariable} violates the properties of $T$.
  Let us consider step~\ref{step:AddConstraint} first. By the choice of $v$ and 
  $C^t$, we immediately get that
  properties~(\ref{item:Form}), (\ref{item:MostOnce}), (\ref{item:Incoming}), and
  (\ref{item:Timestamps}) all hold even after we have added $vC^t$ to $T$ (we can
  order the nodes by the order in which they were added to $T$). Since
  there is only one edge incident with $C^t$, property~(\ref{item:SwitchTwo}) holds
  as well. Finally, the only way the ``no shortcuts'' property
  (i.e.~property~(\ref{item:Shortcuts})) could fail would be if there were some $u$ and
  $s<t$ such that $uC^s\in E(T)$ and $f(C)\oplus u\oplus v\in C$. But then, after
  the node $C^s$ got added to $T$, we should have computed the set $W$ of
  variables $w$ such that $f(C)\oplus u\oplus w\in C$
  (step~\ref{step:ExploringChildren}) and $v$
  should have been in $W\setminus V(T)$ at that time, i.e.~we should have added
  the edge $C^sv$ before, a contradiction. The analysis of step~\ref{step:AddVariable} is similar. 

  Assume now that Algorithm~\ref{alg:Improve} has reached one of
  steps~\ref{step:Augment} or \ref{step:blossom} and consider the DAG $T^\star$ that
  we get from $T$ by removing all edges of the form $C^tz$ and adding the edge $wC^t$. Note that the node $C^t$ is the only node
  with two incoming edges. The only three properties that could possibly be affected by going from $T$ to
  $T^\star$ are~(\ref{item:MostOnce}), (\ref{item:SwitchTwo}) and (\ref{item:Shortcuts}).
  Were~(\ref{item:MostOnce}) violated, we would have $C^sw\in E(T)$ already, and so
  step~\ref{step:Nothing} would be triggered instead of steps~\ref{step:Augment}
  or \ref{step:blossom}. For property~(\ref{item:SwitchTwo}), the only new pair of edges to consider is $vC^t$
  and $wC^t$ for which we have $f(C)\oplus v\oplus w\in C$. 
  Finally, if property~(\ref{item:Shortcuts}) became violated after adding the edge $wC^t$
  then there were a $u$ and $s<t$ such that $uC^s\in E(T)$ and $f(C)\oplus
  u\oplus w\in C$. 
  Node $C^s$ must have been added after $w$, or else we would have $C^sw\in
  E(T)$. Also, $w$ cannot have a parent of the form $C^k$ (otherwise step~\ref{step:Nothing} would be triggered for $w$ when expanding $C^t$).
  But then one of steps~\ref{step:blossom} or \ref{step:Augment} would
  be triggered at timestamp $s$ already when we tried to expand $C^s$, a
  contradiction.
\end{proof}

We will use the following two lemmas to prove that $f\oplus p$ is a valid
edge labeling of $I$ for various paths $p$ that appear in steps~\ref{step:Augment}
and~\ref{step:blossom}.

\begin{lemma}\label{lemma:shortening}
Let $T$ be an $f$-DAG, and $C^s$ be the constraint node in $\calC(T)$ with the smallest timestamp $s$.
Suppose that $C^s$ has exactly two incident edges, namely incoming edge $uC^s$
where $u$ does not have other incident edges besides $u C^s$ and another edge
$C^sv$ (see Figure \ref{fig:shortenfdag}). Let $f^\star=f \oplus (u C v)$ and let $T^\star$ be the DAG obtained from $T$ by removing nodes $u, C^s$ and the two edges incident to $C^s$.
Then $f^\star$ is a valid edge labeling of $I$ and $T^\star$ is an $f^\star$-DAG.
\end{lemma}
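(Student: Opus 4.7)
My plan is to prove the two conclusions separately. The validity of $f^\star$ is immediate: since $f^\star$ differs from $f$ only on the edges $\{u,C\}$ and $\{v,C\}$, I only need $f^\star(C)=f(C)\oplus u\oplus v\in C$, which follows from property~(\ref{item:SwitchTwo}) of the $f$-DAG $T$ applied to the edges $uC^s$ and $C^sv$.

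For the second claim, the structural properties (\ref{item:Form})--(\ref{item:Timestamps}) are inherited because $T^\star$ is a sub-DAG of $T$, and for any constraint $D\ne C$ the properties (\ref{item:SwitchTwo}) and (\ref{item:Shortcuts}) carry over to $f^\star$ since $f^\star(D)=f(D)$. So the real work is to verify the latter two properties for the copies $C^t$ of $C$ that remain in $T^\star$, all of which have $t>s$. A useful preliminary observation is that $u$ has only the edge $uC^s$ in $T$ by hypothesis, and $\{v,C\}$ appears at most once in $T$ by property~(\ref{item:MostOnce}), so any variable on an edge at some $C^t$ with $t>s$ must be distinct from both $u$ and $v$.

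Given this distinctness, both properties follow by combining the even $\Delta$-matroid exchange axiom for $C$ with the shortcut property~(\ref{item:Shortcuts}) of $T$ to force a unique exchange partner. For property~(\ref{item:SwitchTwo}) at $C^t$ with edges $u_0C^t$ and $v_0C^t$ (or $C^tv_0$), I apply the exchange axiom to $f(C)\oplus u_0\oplus v_0$ and $f(C)\oplus u\oplus v$, both in $C$, at the element $u$ of their symmetric difference $\{u,v,u_0,v_0\}$: the candidate partners $b=u_0$ and $b=v_0$ would require $f(C)\oplus u\oplus u_0$ or $f(C)\oplus u\oplus v_0$ to lie in $C$, both of which are forbidden by shortcuts of $T$ between $C^s$ and $C^t$, so $b=v$ is forced, yielding $f^\star(C)\oplus u_0\oplus v_0\in C$.

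For property~(\ref{item:Shortcuts}) between copies $C^{s'},C^{t'}$ with $s<s'<t'$ and edges $u'C^{s'}$, $v'C^{t'}$ (or $C^{t'}v'$), I argue by contradiction. Suppose $\delta=f(C)\oplus u\oplus v\oplus u'\oplus v'=f^\star(C)\oplus u'\oplus v'$ lies in $C$. Then applying the exchange axiom to $\delta$ and $f(C)$ at the element $v$ of their symmetric difference $\{u,v,u',v'\}$ would require one of $f(C)\oplus u'\oplus v'$, $f(C)\oplus u\oplus v'$, or $f(C)\oplus u\oplus u'$ to lie in $C$; but these are forbidden by shortcuts of $T$ for the pairs $(C^{s'},C^{t'})$, $(C^s,C^{t'})$, and $(C^s,C^{s'})$ respectively, a contradiction. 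The main obstacle is picking the right coordinate to exchange from: the choice of $v$ is essential because $v$ is the unique coordinate not involved in any of the three shortcut-forbidden pairs, and it is precisely here that the hypothesis that $u$ is a leaf of $T$ attached to the earliest constraint node $C^s$ is used.
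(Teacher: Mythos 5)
Your proof is correct and follows essentially the same route as the paper: validity of $f^\star$ and the structural properties of $T^\star$ are inherited directly, and the two remaining properties at the surviving copies of $C$ are verified by applying the even $\Delta$-matroid exchange axiom at the coordinate $u$ (for property~(\ref{item:SwitchTwo})) and at $v$ (for property~(\ref{item:Shortcuts})), with the ``no shortcuts'' property of $T$ ruling out exactly the same alternative exchange partners as in the paper's argument. The only cosmetic difference is that you make the distinctness of $u,v$ from the variables at later copies of $C$ explicit via property~(\ref{item:MostOnce}), which the paper leaves implicit.
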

\begin{figure}[t]
  \begin{center}
    \includegraphics[scale=1]{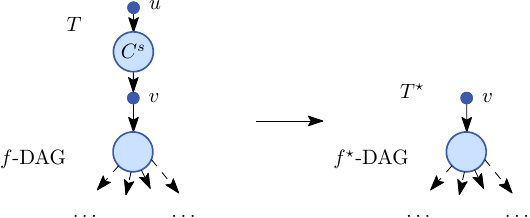}
  \end{center}
  \caption{An $f$-DAG $T$ on the left turns into $f^\star$-DAG $T^\star$ on the right; the setting from Lemma \ref{lemma:shortening}.}\label{fig:shortenfdag}
\end{figure}
\begin{proof}
  Since $T^\star$ is a subgraph of $T$, it immediately follows that $T^\star$
  satisfies the properties~(\ref{item:Form}), (\ref{item:MostOnce}), (\ref{item:Incoming}),
  and (\ref{item:Timestamps}) from the definition of an $f$-DAG all hold. 

  Let us show that $T^\star$ has property~(\ref{item:SwitchTwo}). 
  Consider a constraint node $C^t\in\calC(T^\star)$ with $t>s$ (nothing has changed for other
 constraint nodes in $\calC(T^\star)$), and suppose that
  $T^\star$ contains edges $xC^t$ and one of $yC^t$ or $C^ty$.  If $x=y$, the
  situation is trivial, so assume that $u,v,x,y$ are all distinct variables.
  We need to show that $f^\star(C)\oplus x\oplus y\in C$. The constraint $C$
  contains the tuples $f(C)\oplus u\oplus v$ and
  $f(C)\oplus x\oplus y$ (by condition~(\ref{item:SwitchTwo}) for $T$), but the no
  shortcuts property prohibits the tuples $f(C)\oplus u\oplus x$ and
  $f(C)\oplus u\oplus y$ from lying in $C$. Therefore, applying the even $\Delta$-matroid property on 
  $f(C)\oplus u\oplus v$ and
  $f(C)\oplus x\oplus y$ in the variable $u$ we get that $C$ must contain
  $f(C)\oplus u\oplus v\oplus x\oplus y$, so we have $f^\star(C)\oplus x\oplus y\in C$.

  Now let us prove that $T^\star$ and $f^\star$ have the ``no shortcuts'' property.
  Consider constraint nodes $C^k,C^{\ell}$ in $\calC(T^\star)$ with $s<k<\ell$
  (since nothing has changed for constraint nodes other than $C$), 
  and suppose that
  $T^\star$ contains edges $x C^k$ and one of $yC^{\ell}$ or $C^\ell y$, where again
  $u,v,x,y$ are all distinct variables.  We need to show that $f^\star(C)\oplus
  x\oplus y\notin C$, or equivalently that $f(C)\oplus u\oplus v\oplus x \oplus
  y\notin C$.

  Assume that it is not the case. Apply the even $\Delta$-matroid property to tuples
  $f(C)\oplus u\oplus v\oplus x \oplus y$ and $f(C)$ (which are both in $C$) in coordinate $v$.
  We get that either   $f(C)\oplus x \oplus y\in C$, or $f(C)\oplus u \oplus x\in C$, or $f(C)\oplus u \oplus y\in C$.
  This contradicts the ``no shortcuts'' property for the pair $(C^k,C^\ell)$, or
  $(C^s,C^k)$, or $(C^s,C^\ell)$, respectively, and we are done.
\end{proof}
\begin{corollary}\label{corollary:path}
  Let $I$ be an edge CSP instance and $f$ be a valid edge labeling.
  \begin{enumerate}
    \item Let $T$ be an $f$-DAG that consists of two directed paths
      $x_0C_1^{t_1}x_1\dots x_{k-1} C_k^{t_k}$ 
      and $y_0D_1^{s_1}\dots y_{\ell-1} D_\ell^{s_\ell}$ that are disjoint everywhere except 
  at the constraint $C_k^{t_k}=D_\ell^{s_\ell}$ (see Figure
  \ref{fig:twopaths}). Then $f\oplus T$ is a valid edge labeling of $I$.

  \item Let $T$ be an $f$-DAG that consists of a single directed path
    $x_0C_1^{t_0}x_1\dots x_{k-1} C_k^{t_k}x_{k}$.
  Then $f\oplus T$ is a valid edge labeling of $I$.
  \end{enumerate}
\end{corollary}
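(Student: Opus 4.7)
My plan is to prove both parts by iteratively applying Lemma~\ref{lemma:shortening} to peel off the constraint node with the earliest timestamp, inducting on the number of constraints in $T$.

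For part~(2), I would first observe that the edge orientations together with property~(\ref{item:Timestamps}) of $f$-DAGs force $t_1<\cdots<t_k$, making $C_1^{t_1}$ the earliest constraint in~$T$. Its only incident edges are $x_0 C_1^{t_1}$ (incoming from the leaf $x_0$) and $C_1^{t_1} x_1$ (outgoing), so Lemma~\ref{lemma:shortening} applies and produces a valid labeling $f^\star=f\oplus(x_0 C_1 x_1)$ together with an $f^\star$-DAG $T^\star$ which is the shorter path $x_1 C_2 x_2\dots C_k x_k$. I would then induct on $k$, with the trivial base case $k=0$ (where $T$ has no constraint nodes and $f\oplus T=f$). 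A small bookkeeping check confirms $f^\star\oplus T^\star=f\oplus T$, because the edges flipped in passing from $f$ to $f^\star$ are exactly those removed going from $T$ to $T^\star$.

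For part~(1), the same timestamp argument gives $t_1<\cdots<t_k$ and $s_1<\cdots<s_\ell$ with $t_k=s_\ell$ the largest timestamp in~$T$, so the earliest constraint is either $C_1^{t_1}$ or $D_1^{s_1}$ and never the shared final constraint. That earliest constraint has the required ``leaf in, variable out'' shape for Lemma~\ref{lemma:shortening}, and peeling it yields an $f^\star$-DAG that is again two paths meeting at $C_k^{t_k}=D_\ell^{s_\ell}$, but with one path one constraint shorter. I would induct on $k+\ell$; in the base case $k=\ell=1$, the DAG consists of the shared constraint plus the two incoming leaves $x_0,y_0$, and property~(\ref{item:SwitchTwo}) of $f$-DAGs gives $f(C_1)\oplus x_0\oplus y_0\in C_1$, which is precisely the tuple at $C_1$ under $f\oplus T$, while all other constraints are unaffected.

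The main thing I expect to need to verify carefully is that each peeling step is legitimate — that the current earliest remaining constraint really does have the isolated-leaf-plus-one-successor shape demanded by Lemma~\ref{lemma:shortening}, and that the reduced DAG still belongs to the family covered by the induction hypothesis. Both follow from a single observation: removing the first constraint of a path merely advances that path's starting variable by one, preserving the overall combinatorial type of~$T$.
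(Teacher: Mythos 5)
Your proof is correct and follows essentially the same route as the paper's: induction on the number of constraints, peeling off the minimal-timestamp constraint via Lemma~\ref{lemma:shortening} (its minimality following from the fact that timestamps increase along each directed path, by property~(\ref{item:Timestamps})), with the base case $k=\ell=1$ handled by property~(\ref{item:SwitchTwo}) and the bookkeeping identity $f^\star\oplus T^\star=f\oplus T$. The only cosmetic difference is that you work out part~(2) explicitly (base case $k=0$) where the paper merely declares it analogous.
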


\begin{proof}
  We will prove only part (a); the proof of part (b) is completely analogous.
  We proceed by induction on $k+\ell$. If $k=\ell=1$, $T$ consists only
  of the two edges $x_0 C^t$ and $y_0C^t$ (where $C^t$ is an abbreviated name
  for $C_1^{t_1}=D_1^{s_1}$). Then the fact that $f\oplus (x_0 C y_0)$ is a
  valid edge labeling follows from the property~(\ref{item:SwitchTwo}) of $f$-DAGs.

  If we are now given an $f$-DAG $T$ of the above form, then we compare $t_1$
  and $s_1$. Since the situation is symmetric, we can assume without loss of
  generality that $s_1>t_1$. We then use Lemma~\ref{lemma:shortening} for
  $x_1C_1^{t_1}x_2$ (there is a $x_2$ since $t_k>s_1>t_1$),
  obtaining the $(f\oplus (x_1C_1x_2))$-DAG $T^\star$ that consists of two
  directed paths $x_2\dots x_k C^{t_k}$ and $y_1D_1^{s_1}\dots y_\ell
  D_\ell^{s_\ell}$. Since $T^\star$ is shorter than $T$, the induction
  hypothesis gets us that $f\oplus (x_1C_1x_2)\oplus T^\star=f\oplus T$ is a valid edge labeling.
\end{proof}
\begin{figure*}[t]
\begin{center}
\includegraphics[scale=1]{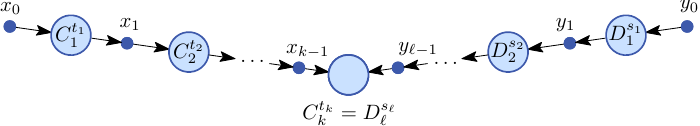}
\end{center} 
\caption{Two meeting paths from Corollary \ref{corollary:path}.}\label{fig:twopaths}
\end{figure*}

\begin{lemma}\label{lemma:shortening'}
  Let $T$ be an $f$-DAG, and $C^s$ be the constraint node in $\calC(T)$ with
  the smallest timestamp $s$.  Suppose that $C^s$ has exactly one incoming edge
  $u C^s$, and $u$ does not have other incident edges besides $u C^s$. Suppose
  also that $C^s$ has an outgoing edge $C^s v$.  Let $f^\star=f \oplus (u C v)$, and
  $T^\star$ be the DAG obtained from $T$ by removing the edge $uC^s$ together with $u$ and reversing the orientation of edge
  $C^sv$ (see Figure \ref{fig:shortenfdag'}).
  Then $f^\star$ is a valid edge labeling of $I$ and $T^\star$ is an $f^\star$-DAG.
\end{lemma}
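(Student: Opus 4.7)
The proof follows the same pattern as that of Lemma~\ref{lemma:shortening}. First, I would verify that $f^\star$ is a valid edge labeling: the only constraint whose tuple changes under $f \mapsto f^\star$ is $C$, and property~(\ref{item:SwitchTwo}) of $T$ applied to the incident edges $uC^s$ and $C^sv$ at $C^s$ yields $f(C) \oplus u \oplus v \in C$, so $f^\star(C) \in C$.

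Next, I would verify the six $f^\star$-DAG properties of $T^\star$. Properties~(\ref{item:Form})--(\ref{item:Incoming}) are immediate from the construction. For property~(\ref{item:Timestamps}), I would take the total order witnessing the property for $T$, delete $u$, and reinsert $v$ in the position immediately preceding $C^s$. This is legitimate because the only incoming edge of $v$ in $T$ was $C^sv$, which has been reversed to $vC^s$, so $v$ has no incoming edges in $T^\star$ and may be moved earlier without violating any orientation constraint.

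The case analysis for properties~(\ref{item:SwitchTwo}) and (\ref{item:Shortcuts}) proceeds by the underlying constraint of the node in question. For any $C^k \in \calC(T^\star)$ with $C_k \ne C$ and $k \ne s$, the edges incident to $C^k$ and the tuple $f^\star(C^k) = f(C^k)$ are unchanged from $T$, so both properties transfer directly. For $C^s$ in $T^\star$ (whose only incoming edge is $vC^s$), the condition for property~(\ref{item:SwitchTwo}) paired with any outgoing edge $C^sz$ reads $f^\star(C) \oplus v \oplus z = f(C) \oplus u \oplus z$, which lies in $C$ by property~(\ref{item:SwitchTwo}) of $T$ at $uC^s, C^sz$; likewise, the no-shortcut condition pairing $vC^s$ with an edge at a later $C^t$ with $C_t = C$ reduces to $f(C) \oplus u \oplus y \notin C$, an instance of property~(\ref{item:Shortcuts}) of $T$.

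The main obstacle is verifying property~(\ref{item:SwitchTwo}) at some $C^t$ with $t > s$ and $C_t = C$. For edges incident to $C^t$ with variable endpoints $x, y$, one needs $(f(C) \oplus u \oplus v) \oplus x \oplus y \in C$. Property~(\ref{item:SwitchTwo}) of $T$ yields $f(C) \oplus u \oplus v, f(C) \oplus x \oplus y \in C$, while property~(\ref{item:Shortcuts}) of $T$ excludes the tuples $f(C) \oplus u \oplus x$, $f(C) \oplus u \oplus y$, $f(C) \oplus v \oplus x$, $f(C) \oplus v \oplus y$ from $C$. Applying the even $\Delta$-matroid exchange axiom to the pair $(f(C) \oplus u \oplus v, f(C) \oplus x \oplus y)$ at pivot $x$, the candidates $w = u$ and $w = v$ are excluded by the no-shortcut inequalities, forcing $w = y$ and hence $f(C) \oplus u \oplus v \oplus x \oplus y \in C$. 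The corresponding no-shortcut requirement for a pair $(C^{t_1}, C^{t_2})$ both underlying $C$ with $s < t_1 < t_2$ follows from a symmetric contradiction argument: if the forbidden tuple $f(C) \oplus u \oplus v \oplus x_{t_1} \oplus x_{t_2}$ lay in $C$, applying the exchange axiom to it and $f(C)$ at pivot $x_{t_1}$ would exhaust all options via the various no-shortcut exclusions from $T$.
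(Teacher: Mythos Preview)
Your overall structure matches the paper's, but there is a genuine error in how you invoke the even $\Delta$-matroid exchange axiom for properties~(\ref{item:SwitchTwo}) and~(\ref{item:Shortcuts}) at copies $C^t$ with $t>s$.

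You assert that property~(\ref{item:Shortcuts}) of $T$ excludes $f(C)\oplus v\oplus x$ and $f(C)\oplus v\oplus y$ from $C$. It does not. The no-shortcuts clause reads: if $T$ contains $aC^s$ (an \emph{incoming} edge at the smaller timestamp) and one of $bC^t$ or $C^tb$ with $s<t$, then $f(C)\oplus a\oplus b\notin C$. In $T$ the edge at $v$ is $C^sv$, which is outgoing, so no-shortcuts gives you nothing about $f(C)\oplus v\oplus x$ or $f(C)\oplus v\oplus y$; these tuples may well lie in $C$. Consequently, your exchange argument at pivot $x$ for property~(\ref{item:SwitchTwo}) does not rule out the candidate $w=u$ (which yields $f(C)\oplus v\oplus x$), and your no-shortcuts argument at pivot $x_{t_1}$ fails even more badly: one branch produces $f(C)\oplus u\oplus v$, which \emph{is} in $C$, so no contradiction arises.

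The fix is to choose the pivot so that only $u$-based exclusions are needed. For property~(\ref{item:SwitchTwo}) at $C^t$, apply the exchange axiom to $f(C)\oplus x\oplus y$ and $f(C)\oplus u\oplus v$ at pivot $u$, modifying the first tuple: the options $w=x,y$ give $f(C)\oplus u\oplus y$ and $f(C)\oplus u\oplus x$, both excluded by no-shortcuts via $uC^s$, forcing $w=v$ and hence $f(C)\oplus u\oplus v\oplus x\oplus y\in C$. For property~(\ref{item:Shortcuts}) at $(C^k,C^\ell)$, assume $f(C)\oplus u\oplus v\oplus x\oplus y\in C$ and apply the exchange axiom to this tuple and $f(C)$ at pivot $v$: the three options yield $f(C)\oplus x\oplus y$, $f(C)\oplus u\oplus y$, $f(C)\oplus u\oplus x$, each excluded by no-shortcuts for the pair $(C^k,C^\ell)$, $(C^s,C^\ell)$, $(C^s,C^k)$ respectively. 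This is exactly what the paper does (by reference to the proof of Lemma~\ref{lemma:shortening}); note that pivoting at $v$ rather than $u$ in the second step is essential, since pivoting at $u$ would again require the unavailable $v$-exclusions.
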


\begin{figure*}[t]
\begin{center}
\includegraphics[scale=1]{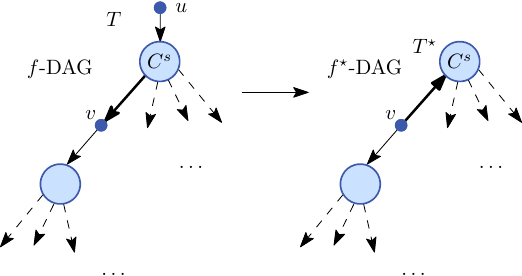}
\end{center}
\caption{An $f$-DAG $T$ turns into an $f^\star$-DAG $T^\star$ (see Lemma \ref{lemma:shortening'}).
} \label{fig:shortenfdag'}
\end{figure*}

\begin{proof}
  It is easy to verify that $T^\star$
  satisfies the properties~(\ref{item:Form}), (\ref{item:MostOnce}) and
  (\ref{item:Incoming}). To see property (\ref{item:Timestamps}), just take the linear
  order on nodes of $T$ and change the position of $v$ so that it is the new
  minimal element in this order ($v$ has no incoming edges in $T^\star$).

  Let us prove that property~(\ref{item:SwitchTwo}) of Definition~\ref{def:DAG} is preserved.
  First, consider constraint node $C^s$. Suppose that $T^\star$ contains one of $xC^s$ or $C^sx$ with $x\ne v$.
  We need to show that $f^\star (C)\oplus v\oplus x\in C$, or equivalently $f(C)\oplus u\oplus x\in C$
  (since $f^\star(C)\oplus v=f(C)\oplus (u\oplus v)\oplus v=f(C)\oplus u$).
  This claim holds by property~(\ref{item:SwitchTwo}) of Definition~\ref{def:DAG} for $T$.
  
  Now consider a constraint node $C^t\in\calC(T^\star)$ with $t>s$, and
  suppose that $T^\star$ contains edges $xC^t$ and one of $yC^t$ or $C^ty$.   We need to show that $f^\star(C)\oplus x\oplus y\in C$, or equivalently that $f(C)\oplus u\oplus v\oplus x \oplus y\in C$.
  For that we can simply repeat word-by-word the argument used in the proof of Lemma~\ref{lemma:shortening}.

  Now let us prove that the ``no shortcuts'' property is preserved. First,
  consider a constraint node $C^t$ in $\calC(T^\star)$ with $t>s$, and suppose that $T^\star$ contains one of $xC^t$ or $C^tx$.
  We need to show that $f^\star(C)\oplus v \oplus x\notin C$, or equivalently $f(C)\oplus u \oplus x\notin C$.
  This claim holds by the ``no shortcuts'' property for $T$.
  Now consider constraint nodes $C^k,C^\ell$
  in $\calC(T^\star)$ with $s<k<\ell$, and suppose that $T^\star$
  contains edges $x C^k$ and one of $yC^\ell$ or $C^\ell y$. Note that $u,v,x,y$ are all distinct variables. 
  We need to show that $f^\star(C)\oplus x\oplus y\notin C$, or equivalently that $f(C)\oplus u\oplus v\oplus x \oplus y\notin C$.
  For that we can simply repeat word-by-word the argument used to show the no
  shortcuts property in the proof of Lemma~\ref{lemma:shortening}.
\end{proof}

\subsection{Analysis of augmentations and contractions}
\label{sec:AugmentContract}
First, we prove the correctness of the augmentation operation.
\begin{proposition}[Theorem~\ref{thm:correctness}(\ref{claimAugment}) restated]
The mapping $f\oplus p$ from step~\ref{step:Augment} is a valid edge labeling
of $I$ with fewer inconsistencies than $f$.
\end{proposition}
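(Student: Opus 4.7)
The plan is to reduce the claim directly to the machinery developed for $f$-DAGs. By Lemma~\ref{lemma:AlgorithmDAG}, the digraph $T^\star$ obtained from the current forest $T$ by removing the outgoing edges of $C^t$ and adding the edge $wC^t$ is still an $f$-DAG. Inside $T^\star$ the walk $p=\walk(C^t),\walk(w)^{-1}$ appears as the union of two directed paths that are disjoint except at the common constraint node $C^t$: one from the root of $v$'s tree down to $C^t$ via $v$, and one from the root of $w$'s tree down to $C^t$ via $w$ (using the newly added edge $wC^t$). Since any subgraph of an $f$-DAG is itself an $f$-DAG, the restriction $T_p$ of $T^\star$ to the nodes and edges of $p$ fits exactly the hypothesis of Corollary~\ref{corollary:path}(a).

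Applying that corollary yields that $f\oplus T_p$ is a valid edge labeling of $I$. Because $T_p$ contains precisely the variable--constraint edges traversed by $p$, the definition~\eqref{eq:foplusq:def} gives $f\oplus T_p = f\oplus p$, so $f\oplus p$ is indeed a valid edge labeling. This handles the ``valid'' half of the proposition using only tools already established.

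It remains to compare the number of inconsistencies of $f$ and $f\oplus p$. The two endpoints of $p$ are roots of distinct trees of $T$, and roots are inconsistent variables by construction in step~1 of Algorithm~\ref{alg:Improve}; at each of these endpoints, exactly one of the two incident edges of $G_I$ belongs to $p$, so flipping it turns the inconsistent pair of values into a consistent one. Every internal variable $u$ of $p$ is a non-root node of $T$ and therefore consistent in $f$; moreover $p$ enters and leaves $u$, so both edges of $G_I$ incident to $u$ are flipped, which preserves the equality of the two labels at $u$. Variables not appearing in $p$ are unaffected. Hence $f\oplus p$ has exactly two fewer inconsistencies than $f$.

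The main conceptual point, and the only place where something could go wrong, is the validity step; but this is precisely where the ``no shortcuts'' property of $f$-DAGs (and the symmetric-exchange argument already carried out in Lemma~\ref{lemma:shortening} and Corollary~\ref{corollary:path}) does the work for us. The inconsistency count is then a routine bookkeeping check.
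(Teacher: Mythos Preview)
Your argument is correct and essentially identical to the paper's own proof: both invoke Lemma~\ref{lemma:AlgorithmDAG} to certify that the forest with the extra edge $wC^t$ is an $f$-DAG, restrict to the subgraph carrying $p$ (two directed paths meeting at $C^t$), and apply Corollary~\ref{corollary:path}(a) to obtain validity of $f\oplus p$. Your write-up is slightly more careful in two places --- you follow the exact phrasing of Lemma~\ref{lemma:AlgorithmDAG} (removing the outgoing edges of $C^t$ before adding $wC^t$), and you spell out the bookkeeping for the inconsistency count, which the paper leaves implicit --- but neither difference is substantive.
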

\begin{proof}
  Let $T_1$ be the $f$-DAG constructed during the run of
  Algorithm~\ref{alg:Improve}; let $T_2$ be the DAG obtained from $T_1$ by
  adding the edge $wC^t$. By Lemma~\ref{lemma:AlgorithmDAG}, $T_2$ is an $f$-DAG.
  Let $T_3$ be the subgraph of $T_2$ induced by the nodes in $p$. It is easy
  to verify that $T_3$ consists of two directed paths that share their last
  node. Therefore, by Corollary~\ref{corollary:path}, we get that $f\oplus T_3=f\oplus
  p$ is a valid edge labeling of $I$.
\end{proof}

In the remainder of this section we show the correctness of the contraction operation
by proving
Lemmas~\ref{lemma:stem},~\ref{lemma:ItoIprime},~\ref{lemma:IprimetoI}.
Let us begin by giving a full definition of a blossom:
\begin{definition}\label{def:Blossom}
  Let $f$ be a valid edge labeling. An \emph{$f$-blossom} is any walk 
  $b=b_0 C_1 b_1 C_2 \dots C_{k} b_k$ with $b_0=b_k$  such that:
  \begin{enumerate}
    \item variable $b_0=b_k$ is inconsistent in $f$ while variables $b_1,\ldots,b_{k-1}$ are consistent, and
    \item there exists $\ell\in[1, k]$ and timestamps $t_1,\dots,t_k$ such that
      the DAG consisting of two directed paths $b_0C_1^{t_1}\dots b_{\ell-1}
      C_\ell^{t_\ell}$ 
      and $b_k C_k^{t_k} b_{k-1}\dots b_\ell C_\ell^{t_\ell}$ is an $f$-DAG.
  \end{enumerate}
\end{definition}

\begin{lemma}\label{observation:Intervals}
  Let $b$ be an $f$-blossom. Then $b_{[i,j]}$ is an $f$-walk
  (as per Definition~\ref{def:f-walk}) for 
any non-empty proper subinterval $[i,j]\subsetneqq[0,k]$.
\end{lemma}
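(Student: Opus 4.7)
The plan is to verify in turn the three conditions that define an $f$-walk. Let $b = b_0 C_1 b_1 \dots C_k b_k$ be the blossom and let $T$ be the associated $f$-DAG from Definition~\ref{def:Blossom}, consisting of a left directed path $b_0 C_1^{t_1} b_1 \dots b_{\ell-1} C_\ell^{t_\ell}$ and a right directed path $b_k C_k^{t_k} b_{k-1} \dots b_\ell C_\ell^{t_\ell}$ meeting at $C_\ell^{t_\ell}$. Fix a proper non-empty subinterval $[i, j] \subsetneqq [0, k]$.

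First, the endpoints $b_i$ and $b_j$ are distinct: the only repeated variable in $b$ is $b_0 = b_k$, so $b_i = b_j$ would force $\{i, j\} = \{0, k\}$, contradicting properness. Second, the interior variables $b_{i+1}, \dots, b_{j-1}$ all lie among $b_1, \dots, b_{k-1}$ and are therefore consistent in $f$ by the definition of a blossom.

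For the main requirement, that $f \oplus b_{[i, k']}$ be valid for every $k' \in [i+1, j]$, I would extract from $T$ a subgraph $T_{i, k'}$ consisting of exactly those edges of $T$ whose underlying variable-constraint pair is traversed by the walk $b_{[i, k']}$. Because any subgraph of an $f$-DAG is an $f$-DAG, $T_{i, k'}$ itself qualifies, and because the $\oplus$ operation flips undirected edges and does not care about walk direction, one has $f \oplus b_{[i, k']} = f \oplus T_{i, k'}$. A short case analysis on the position of $i$ and $k'$ relative to $\ell$ shows that $T_{i, k'}$ always fits the hypothesis of Corollary~\ref{corollary:path}: if $k' < \ell$, then $T_{i, k'}$ is a sub-path of the left path, a single directed path from $b_i$ to $b_{k'}$, so Corollary~\ref{corollary:path}(b) applies; if $i \ge \ell$, then $T_{i, k'}$ is a sub-path of the right path, a single directed path from $b_{k'}$ to $b_i$ (traversed in reverse by $b_{[i, k']}$), and Corollary~\ref{corollary:path}(b) applies again; if $i < \ell \le k'$, then $T_{i, k'}$ is the union of an initial segment of the left path ending at $C_\ell^{t_\ell}$ and an initial segment of the right path also ending at $C_\ell^{t_\ell}$, which is exactly the setup of Corollary~\ref{corollary:path}(a).

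The main subtlety I expect is confirming that in the mixed case these two sub-paths are disjoint except at $C_\ell^{t_\ell}$. Distinct timestamps in $T$ rule out any collision of constraint nodes even when the same constraint of $I$ appears several times in $b$, and the only variable shared between the left and right paths of $T$ is $b_0 = b_k$; this collision is avoided inside $T_{i, k'}$ precisely because $[i, j] \subsetneqq [0, k]$ prevents $i = 0$ and $k' = k$ from occurring simultaneously. Once this disjointness is in place, a direct appeal to the appropriate part of Corollary~\ref{corollary:path} supplies validity of $f \oplus b_{[i, k']}$ for each $k'$, completing the verification that $b_{[i, j]}$ is an $f$-walk.
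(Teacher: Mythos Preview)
Your proposal is correct and follows essentially the same approach as the paper's proof: extract the appropriate subgraph of the blossom's $f$-DAG and apply Corollary~\ref{corollary:path}. The paper's version is terser---it simply says ``by taking an appropriate subgraph of $B$ and applying Corollary~\ref{corollary:path}'' and then notes that the family of proper subintervals is downward closed---whereas you spell out the endpoint-distinctness, the consistency of interior variables, the case split on the position of $[i,k']$ relative to $\ell$, and the disjointness check, but the underlying argument is the same.
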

\begin{proof}
  Let us denote the $f$-DAG from the definition of a blossom by $B$. By taking
  an appropriate subgraph of $B$ and applying Corollary~\ref{corollary:path}
we get that $f\oplus b_{[i,j]}$ is valid for any non-empty subinterval
$[i,j]\subsetneqq[0,k]$. Since the set of these intervals is downward closed,
$b_{[i,j]}$ is in fact an $f$-walk.
\end{proof}

\begin{lemma}[Lemma~\ref{lemma:stem} restated]
  Assume that Algorithm~\ref{alg:Improve} reaches step~\ref{step:blossom} and
  one of the cases described at the beginning of Section~\ref{section:contracting} occurs.
  Then:
  \begin{enumerate}
    \item in case~\ref{contracting:case2} the edge labeling $f\oplus \walk(r)$
      is valid, and
    \item in both cases the walk $b$ is an $f$-blossom (for the new edge
      labeling $f$, in case~\ref{contracting:case2}).
  \end{enumerate}
\end{lemma}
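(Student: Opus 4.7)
The plan is to exhibit, in each case, an explicit $f$-DAG (or $f'$-DAG with $f':=f\oplus\walk(r)$ in Case~\ref{contracting:case2}) whose two directed paths meeting at a common constraint $C_\ell$ witness that $b$ is a blossom in the sense of Definition~\ref{def:Blossom}. Throughout I use that the digraph $T^\star$ obtained from $T$ by adding the edge $wC^t$ is an $f$-DAG by Lemma~\ref{lemma:AlgorithmDAG}, and that subgraphs of $f$-DAGs are $f$-DAGs. The central object is $T^\circ$, the subgraph of $T^\star$ induced by $\walk(C^t)\cup\walk(w)\cup\{wC^t\}$.

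Case~\ref{contracting:case1} is handled directly. The common ancestor $r$ is a variable whose two children in $T$ (the next nodes after $r$ on $\walk(v)$ and on $\walk(w)$) are distinct constraints. Since variables of $G_I$ have degree two, $r$ has indegree zero by property~(\ref{item:Incoming}) of Definition~\ref{def:DAG}, so $r$ is a root of $T$ and thus inconsistent in $f$. The graph $T^\circ$ is then exactly the union of two directed paths $\walk(C^t)$ and $\walk(w)\cdot(wC^t)$ from $r$ to $C^t$, which gives the blossom DAG with $C_\ell:=C^t$; interior variables of $b$ are consistent because they sit in $V(T)$ as non-roots.

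For Case~\ref{contracting:case2} the plan is to start from $T^\circ$ (shaped like a cycle through the two subtrees rooted at $R^s$, closed by $wC^t$, plus a trunk $\walk(R^s)$ dangling from the root) and repeatedly apply Lemma~\ref{lemma:shortening'} to peel off the trunk together with the edge $R^s\,r$. Write $\walk(r)=r_0\,C_1^{s_1}\,v_1\,\ldots\,v_{k-1}\,C_k^{s_k}\,r$ with $R^s=C_k^{s_k}$; the timestamps satisfy $s_1<s_2<\cdots<s_k$, since a child constraint is always added later than its parent in Algorithm~\ref{alg:Improve}. The first application uses $(u,C^s,v)=(r_0,C_1^{s_1},v_1)$, which is legal because $r_0$ has only the incident edge $r_0\,C_1^{s_1}$ in $T^\circ$ and $C_1^{s_1}$ has smallest timestamp; it flips the labeling by $r_0C_1v_1$, deletes $r_0$, and reverses $C_1^{s_1}\,v_1$ to $v_1\,C_1^{s_1}$, after which $C_1^{s_1}$ becomes an isolated sink that I drop by passing to a subgraph (still an $f$-DAG). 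Iterating this step $k$ times consumes the entire trunk, with the cumulative flip equal to $\walk(r)$ and so yielding $f'=f\oplus\walk(r)$; part~1 of the lemma follows for free because each application of Lemma~\ref{lemma:shortening'} preserves validity. In the final application I pick $v=r$ among the two children of $R^s$, which reverses the edge $R^s\,r$ into $r\,R^s$ while leaving $R^s\,r'$ (where $r'$ is the other child) intact. What remains is precisely the two-path $f'$-DAG consisting of $p$ from $r$ down to $C^t$ and the path $r\,R^s\,r'\cdots w\,C^t$, meeting at $C_\ell=C^t$; this is the blossom DAG for $b=p,q^{-1},r$.

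The main obstacle is the bookkeeping needed to certify that the hypotheses of Lemma~\ref{lemma:shortening'} remain satisfied at every iteration, namely that the current $C^s$ carries the smallest timestamp and that the current $u$ has no other incident edges. Both follow from the monotonicity of the $s_i$ and from the fact that $\walk(R^s)$ is a simple path whose interior variables have degree exactly two in $T^\circ$, combined with the sink-pruning done after each step. Finally, one checks the consistency conditions required by Definition~\ref{def:Blossom}: since $\walk(r)$ is vertex-disjoint from the interiors of $p$ and $q$, the flip $f\to f'$ changes the status of exactly the two endpoints of $\walk(r)$, restoring consistency at $r_0$ and destroying it at $r=b_0=b_k$, exactly as the definition demands.
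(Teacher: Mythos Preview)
Your approach is essentially the paper's: form the $f$-DAG supported on $\walk(C^t)\cup\walk(w)\cup\{wC^t\}$, and in Case~\ref{contracting:case2} peel off the stem one constraint at a time until only the two directed paths from $r$ to $C^t$ remain. The only substantive difference is that the paper invokes Lemma~\ref{lemma:shortening} (which deletes $u$ and $C^s$ outright) for the intermediate trunk steps and reserves Lemma~\ref{lemma:shortening'} for the final step at $R^s$, whereas you use Lemma~\ref{lemma:shortening'} throughout and then prune the resulting sink; the two are equivalent.

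One small correction: your sentence ``the digraph $T^\star$ obtained from $T$ by adding the edge $wC^t$ is an $f$-DAG by Lemma~\ref{lemma:AlgorithmDAG}'' misquotes that lemma, which also \emph{removes all outgoing edges from $C^t$} before adding $wC^t$ (and indeed $T+wC^t$ can violate property~(\ref{item:SwitchTwo}) at $C^t$). This does not harm your argument, since your working object $T^\circ$ contains no outgoing edges of $C^t$ and is therefore a genuine subgraph of the $T^\star$ of Lemma~\ref{lemma:AlgorithmDAG}; just state it that way.
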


\begin{proof}
Let $T$ be the forest at the moment of contraction, $T^\dagger$ be the subgraph of $T$ containing only paths $\walk(C^t)$ and $\walk(w)$,
and $T^\star$ be the graph obtained from $T^\dagger$ by adding the edge $wC^t$.
By Lemma~\ref{lemma:AlgorithmDAG},
graph $T^\star$ is an $f$-DAG (any subgraph of an $f$-DAG is again an $f$-DAG;
  this is Observation~\ref{obs:sub-f-dag}).

If the lowest common ancestor of $w$ and $v$ in $T$ is a variable node $r\in
V(T)$ (i.e.~we have case~\ref{contracting:case1} from Section~\ref{section:contracting}), then the $f$-DAG $T^\star$ consists of two directed paths from $r$ to
the constraint $C$ and it is easy to verify that when we let $b$ to be one of
these paths followed by the other in reverse, we get a blossom.

Now consider case~\ref{contracting:case2}, i.e.~when the lowest common ancestor
of $w$ and $v$ in $T$ is a constraint node $R^s\in \calC(T)$.
Note that $T^\star$ has the unique source node $u$ (that does not have incoming edges),
and $u$ has an outgoing edge $uD^t$ where 
$D^t$ is the constraint node with the smallest timestamp in $T^\star$.
Let us repeat the following operation while $D^t\ne R^s$: Replace $f$ with $f\oplus (uD^tz)$ where
$z$ is the unique out-neighbor of $D^t$ in $T^\star$, and simultaneously modify $T^\star$ by
removing nodes $u, D^t$ and edges $uD^t,D^tz$. By Lemma~\ref{lemma:shortening}
$f$ remains a valid edge labeling throughout this process, and $T^\star$ remains an $f$-DAG (for the latest $f$).

We get to the point that the unique in-neighbor $u$ of $R^s$ is the source node of $T^\star$.
Replace $f$ with $f\oplus (uR^sr)$, and simultaneously modify $T^\star$ by
removing node $u$ together with the edge $uR^s$ and reversing the orientation of edge $R^sr$.
The new $f$ is again valid, and the new $T^\star$ is an $f$-DAG by Lemma~\ref{lemma:shortening'}.
This means that the resulting walk $b$ is an $f$-blossom for the new $f$.
\end{proof}

Finally, we prove two lemmas showing that if we contract a blossom $b$ in
instance $I$ to obtain
the instance $I^b$ and the edge labeling $f^b$, then $f$ is optimal for $I$ if and only if $f^b$ is optimal for $I^b$.

\begin{lemma}[Lemma~\ref{lemma:ItoIprime} restated]\label{lemma:ItoIprime2}
 In the situation described above, if $f^b$ is optimal for $I^b$, then $f$ is optimal for $I$.
 \end{lemma}

\begin{proof}
  Assume that $f$ is not optimal for $I$, so there exists a valid
  edge labeling $g$ with fewer inconsistencies than $f$. Then by Lemma~\ref{lemma:Improve}
  there exists an augmenting $f$-walk $p$ in $I$ that starts at some node other
  than $b_k$. Denote by $p^b$ the sequence obtained
  from $p$ by replacing each $C_i$ from the blossom by $C_i^b$. Observe that if $p$ does not
  contain the variables $b_1,\dots,b_k$, then $p$ is an $f$-walk 
  if and only if $p^b$ is an $f^b$-walk, so the only interesting case is when $p$
  enters the set $\{b_1,\dots,b_k\}$.

  We will proceed along $p$ and consider the first $i$ such that there is a blossom 
  constraint $D$ and an index $j$ for which  $p_{[0,i]}D b_{j}$ is an $f$-walk
  (i.e.~we can enter the blossom from $p$). 
  
  If $D=C_1$, then $p^b_{[0,i]}C_1^bv_{C_1}$ is an $f^b$-walk in
  $I^b$. To see that this is an $f^b$-walk, note that the labeling $f\oplus
  p_{[0,i]}C_1 b_{j}$ of $I$ agrees with $f$ on all edges of $b$ incident to
  $C_1$ except for
  $C_1b_j$, so it follows from part~\ref{itm:flip-one} of the definition of
  $C_1^b$ that the tuple $(f^b\oplus p^b_{[0,i]}C_1^b v_{C_1})(C_1^b)$ lies
  inside $C_1^b$. 

  If $D\neq C_1$, then similar arguments give us that
  $p^b_{[0,i]}D^bv_{C_D}Nv_{C_1}$ is an $f^b$-walk. In both
  cases, the $f^b$-walk found is augmenting (recall that the variable $v_{C_1}$ 
	is inconsistent in $f^b$). We found an augmentation of $f^b$, and so $f^b$ was not optimal.
\end{proof}

To show the other direction, we will first prove the following result. 
\begin{lemma}\label{lemma:f-DAG-plus-path}
	Let $q$ be an $f$-walk and $T$ an $f$-DAG such that
	$q\cap T\cap V=\emptyset$ and there is no proper prefix
  $q^\star$ of $q$ and no edge $vC^s$ or $C^sv$ of $T$ such that $q^\star Cv$ would
  be an $f$-walk. Then $T$ is a $(f\oplus q)$-DAG.
\end{lemma}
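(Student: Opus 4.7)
The structural conditions 1--4 of Definition~\ref{def:DAG} depend only on the combinatorics of $T$ and are therefore immediate. For the ``switch two'' condition 5 and the ``no shortcuts'' condition 6, the task reduces at once to constraints $C$ that are visited by $q$, because if $q$ does not touch $C$ then $(f\oplus q)(C)=f(C)$ and both conditions are inherited from $T$.

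My plan is to extract useful ``forbidden'' tuples from the hypothesis and combine them with the exchange axiom of the even $\Delta$-matroid $C$. Suppose $q$ enters $C$ at step $i$ via the segment $v_{i-1}Cv_i$. For any variable $u$ incident in $T$ to some $C^s$, the hypothesis applied to the prefix $q_{[0,i-1]}$ forbids $q_{[0,i-1]}Cu$ from being an $f$-walk; after noting that the case $u=v_i$ is impossible (else $q_{[0,i]}$, a prefix of the $f$-walk $q$, would itself be an $f$-walk since $v_0\ne v_i$ by the no-repetition condition), the remaining escape clauses of the $f$-walk definition are dispatched by combining the hypothesis with $T$'s own properties (for example, the configuration $u=v_{i-1}$ with a second $v$ at the same $C^t$ would force $f(C)\oplus v_{i-1}\oplus v\in C$ from $T$'s switch-two, exhibiting a genuine $f$-walk and violating the hypothesis). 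In all surviving cases one obtains the key identity $(f\oplus q_{[0,i-1]})(C)\oplus v_{i-1}\oplus u\notin C$.

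Condition 5 in the single-visit case is then immediate: given edges $uC^t$ and $vC^t$ (or $C^tv$) of $T$, apply the even $\Delta$-matroid exchange axiom to $\alpha=f(C)\oplus u\oplus v\in C$ (from $T$'s switch-two) and $\beta=f(C)\oplus v_{i-1}\oplus v_i\in C$ (from $q$'s validity) at coordinate $v_{i-1}$; the possible exchange partners in $\alpha\symdiff\beta$ other than $v_i$ are exactly the forbidden tuples, so the partner must be $v_i$, yielding $f'(C)\oplus u\oplus v=\alpha\oplus v_{i-1}\oplus v_i\in C$. Condition 6 (with $s<t$) is proved by contradiction: assuming $f'(C)\oplus u\oplus v\in C$, apply exchange against $f(C)\in C$ at coordinate $v_i$; each of the three possible partners $v_{i-1},u,v$ produces a tuple already known to lie outside $C$ (via $T$'s no-shortcut property, or the forbidden-tuple identity applied to $u$ or $v$ respectively), the desired contradiction.

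The main obstacle is the case when $q$ visits $C$ twice; I handle it by induction on the length of $q$. Stripping the last step $v_{k-1}C_kv_k$ yields a shorter $f$-walk $q_{[0,k-1]}$ whose proper prefixes are proper prefixes of $q$, so the hypothesis is inherited, and by induction $T$ is an $(f\oplus q_{[0,k-1]})$-DAG. The single-step argument above, applied to the labeling $f\oplus q_{[0,k-1]}$ and the final segment $v_{k-1}C_kv_k$, then completes the proof; the inheritance of the one-step sub-hypothesis for this final segment is the most delicate bookkeeping point and reduces, via $T$'s own $f$-DAG properties, to the same kind of case analysis used above to rule out the endpoint-equality and edge-repetition exceptions.
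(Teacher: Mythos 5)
Your route is genuinely different from the paper's. The paper peels off the \emph{first} segment $xCy$ of $q$, attaches a fresh copy $C^t$ of $C$ to $T$ with a timestamp smaller than all timestamps in $T$, observes that for the enlarged DAG the only endangered condition is ``no shortcuts'' and that this is exactly the hypothesis about $f$-walks $xCv$, and then gets the conclusion for one step by citing Lemma~\ref{lemma:shortening}; your plan peels off the \emph{last} segment (which makes the inheritance of the prefix hypothesis trivial, a point the paper's forward induction glosses over as ``straightforward'') and verifies conditions 5 and 6 of Definition~\ref{def:DAG} directly by two exchange computations, which are in substance the same computations that live inside the proof of Lemma~\ref{lemma:shortening}. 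The reduction to constraints visited by $q$ and the two exchange arguments themselves are correct.

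The gap is the claim that ``in all surviving cases one obtains the key identity'' $(f\oplus q_{[0,i-1]})(C)\oplus v_{i-1}\oplus u\notin C$ for every variable $u$ attached in $T$ to a copy of $C$. The hypothesis only says that $q_{[0,i-1]}Cu$ is not an $f$-walk, and this can fail for purely structural reasons that carry no information about the tuple: $u=q_0$ (equal endpoints), $u=v_{i-1}$, or the edge $\{u,C\}$ already traversed by $q_{[0,i-1]}$ (exactly the repeated-visit situation you defer as ``delicate bookkeeping''). These cases cannot be dispatched from the hypothesis together with $T$'s properties, because the lemma read in its stated generality is false there. Concretely, take an even $\Delta$-matroid $C$ on variables $(x,y,u,w)$ with $C\supseteq\{0000,0011,0110,1100,1111\}$ and $1010,1001\notin C$ (one checks the five listed tuples already form an even $\Delta$-matroid), $f(C)=0000$, and let $T$ consist of the edges $xC^1$, $uC^2$, $C^2w$. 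Then $T$ is an $f$-DAG, $q=xCy$ is an $f$-walk, and the hypothesis holds (the only extensions to test are $xCx$, $xCu$, $xCw$, and $1010,1001\notin C$), yet $(f\oplus q)(C)\oplus x\oplus u=0110\in C$, so the no-shortcuts condition for the pair $(C^1,C^2)$ fails and $T$ is not an $(f\oplus q)$-DAG; the culprit is precisely that $q_0=x$ is attached to a copy of $C$ in $T$, a case your forbidden-tuple extraction cannot reach. What rescues the statement is that it is used only inside the proof of Lemma~\ref{lemma:IprimetoI}, where $T=B$ is the blossom DAG and $q$ avoids all blossom variables, so no variable of $q$ (in particular not $q_0$) is attached to any constraint copy of $T$; under that disjointness the key identity does follow for every attached $u$, including at repeated visits, and your induction closes cleanly. (The paper's own step ``the only property of $T^\dagger$ that might possibly fail is the no shortcuts property'' quietly leans on the same unstated disjointness, since otherwise adding $xC^t$ and $C^ty$ can already violate conditions 2 and 3.) So to complete your proof you must either add this disjointness as an explicit hypothesis or restrict the lemma to the situation of its application; as written, the postponed bookkeeping cannot be carried out.
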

\begin{proof}
  We proceed by induction on the length of $q$. If $q$ has length 0, the claim
  is trivial. Otherwise, let $q=xCyq^{\dagger}$ for some $q^\dagger$. Note that $q^\dagger$ is
  trivially an $(f\oplus (xCy))$-walk. We verify that $T$ is an $(f\oplus (xCy))$-DAG, at which
  point it is straightforward to apply the induction hypothesis with $f\oplus xCy$ and $q^\dagger$ to 
  show that $T$ is an $(f\oplus q)$-DAG.
 
  We choose the timestamp $t$ to be smaller than any of the timestamps
  appearing in $T$ and construct the DAG $T^\dagger$ from $T$ by adding the
  nodes $x,y, C^{t}$ and edges $xC^t$ and $C^ty$. It is easy to see that $T^\dagger$ 
  is an $f$-DAG -- the only property
  that might possibly fail is the ``no shortcuts'' property. However, since the
  timestamp of $C^t$ is minimal, were the ``no shortcuts'' property violated, $T$
  would have to contain an edge of the form $vC^s$ or $C^sv$ such that 
  $f(C)\oplus x\oplus v\in C$. But in that case, we would have the $f$-walk
  $xCv$, contradicting our assumption on prefixes of $q$.
  
  It follows that $T^\dagger$ is an $f$-DAG and we can use 
  Lemma~\ref{lemma:shortening} with the
  constraint $C^t$ and edges $xC^t$ and $C^ty$ to show that $T$ is an
  $(f\oplus (xC^ty))$-DAG, concluding the proof.
\end{proof}

\begin{lemma}[Lemma~\ref{lemma:IprimetoI} restated]
  In the situation described above, if we are given a valid edge labeling $g^b$ of $I^b$ with
  fewer inconsistencies than $f^b$, then we can find in polynomial time a
  valid edge labeling $g$ of $I$ with fewer inconsistencies than $f$.
\end{lemma}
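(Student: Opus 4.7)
\emph{Proof plan.} The first step is to invoke Lemma~\ref{lemma:Improve} on the instance $I^b$ with the pair $(f^b,g^b)$ and the inconsistent variable $x=v_{C_1}$, obtaining in polynomial time an augmenting $f^b$-walk $p^b$ in $I^b$ that starts at some variable other than $v_{C_1}$. The plan is then to translate $p^b$ into an augmenting $f$-walk $p$ in $I$; setting $g=f\oplus p$ will yield a valid edge labeling of $I$ with two fewer inconsistencies than $f$, as desired.

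To perform the translation I would inspect how $p^b$ interacts with the new pieces of $I^b$: the constraint $N$, the variables $v_{C_1},\dots,v_{C_k}$, and the modified constraints $C_i^b$. Since $v_{C_i}$ is consistent in $f^b$ for $i\neq 1$, among these new variables only $v_{C_1}$ may be an endpoint of $p^b$; the others, if visited, are interior and therefore traverse both of their incident edges. Whenever $p^b$ passes through some $C_j^b$ without flipping $v_{C_j}$, I would replace $C_j^b$ by $C_j$: by condition~(1) in the definition of $C_j^b$ the corresponding tuple is still valid in $C_j$. Each time $p^b$ enters the blossom region via a segment $x\,C_j^b\,v_{C_j}$ (flipping $v_{C_j}$), condition~(2) in the definition of $C_j^b$ supplies a blossom variable $b_l$ in the scope of $C_j$ with $f(C_j)\oplus x\oplus b_l\in C_j$, and I would substitute $x\,C_j\,b_l$ for $x\,C_j^b\,v_{C_j}$; exits from the blossom region are handled symmetrically. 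Whenever $p^b$ traverses $N$ from $v_{C_j}$ to $v_{C_m}$, I would splice in a subwalk of the blossom $b$ connecting the chosen $b_l$ to the chosen $b_{l'}$, which by Lemma~\ref{observation:Intervals} is itself an $f$-walk. Finally, if $p^b$ ends at $v_{C_1}$, I would conclude the walk in $I$ by traversing the blossom from the last chosen blossom variable all the way to the root $b_0=b_k$, which is inconsistent in $f$ and thus supplies the second endpoint of an augmenting $f$-walk.

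The main obstacle will be verifying that the translated walk $p$ really is an augmenting $f$-walk. Consistency of the interior vertices of $p$ is reasonably straightforward: variables inherited from $p^b$ stay consistent because consistency of non-blossom variables agrees between $f$ and $f^b$, while interior blossom variables $b_1,\dots,b_{k-1}$ are consistent by the definition of an $f$-blossom. The subtler issue is edge-disjointness when $p^b$ crosses $N$ several times: each such crossing contributes a blossom bridge, and the choices of $b_l,b_{l'}$ and corresponding blossom subwalks must be arranged so that distinct bridges use disjoint portions of $b$ and do not collide with the entry/exit segments. The final piece is validity of the partial flips $f\oplus p_{[0,i]}$ at every prefix, which I would establish by assembling $p$ as a chain of $f$-walks and repeatedly invoking Lemma~\ref{lemma:f-DAG-plus-path} to splice each new piece onto the $f$-DAG induced by the blossom, thereby propagating validity through the entire concatenation. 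Since every operation above runs in polynomial time and the walk $p$ has length at most $|p^b|+k$, the overall procedure is polynomial-time as required.
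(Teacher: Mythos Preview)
Your overall plan---extract an augmenting $f^b$-walk avoiding $v_{C_1}$ via Lemma~\ref{lemma:Improve}, translate the portion outside the blossom back to $I$, and splice in a piece of the blossom to reach the inconsistent root $b_0$---matches the paper's approach. One worry you raise is unnecessary, and one step is a genuine gap.

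The concern about multiple crossings of $N$ is unfounded. Since $N$ is a ``perfect matching'' constraint and $f^b(N)$ has its unique $1$ at $v_{C_1}$, any traversal $v_{C_j}Nv_{C_m}$ inside an $f^b$-walk must flip $v_{C_1}$; but $v_{C_1}$ is inconsistent, so the walk terminates there. Hence $p^b$ touches the blossom region at most once, always at the end, and the edge-disjointness bookkeeping you describe never arises.

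The genuine gap is the splicing step. First, to invoke Lemma~\ref{lemma:f-DAG-plus-path} and conclude that the blossom DAG $B$ is an $(f\oplus q)$-DAG, you need $q$ to be the \emph{shortest} prefix of the translated walk that admits an entry into the blossom; otherwise the hypothesis ``no proper prefix allows entry into $T$'' fails. You do not impose this. Second, and more seriously, once $f'=f\oplus q$ is in hand and you wish to augment by $xC_jb_l$ followed by a blossom subwalk to $b_0$, concatenating two $f'$-walks does not in general produce an $f'$-walk: the constraint $C_j$ may appear several times in $b$, and attaching the new edge $xC_j^{t_j}$ (or $C_j^{t_j}x$) to a subgraph of $B$ can violate the no-shortcuts property~(\ref{item:Shortcuts}). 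The paper resolves this with a case analysis: it selects the index $j$ with extremal timestamp among all possible entries, compares $j$ to the blossom's tip index $\ell$, orients the new edge accordingly, and checks that any would-be shortcut contradicts the extremality of $j$; only then does Corollary~\ref{corollary:path} apply to the resulting $f'$-DAG $B'$. Your phrase ``repeatedly invoking Lemma~\ref{lemma:f-DAG-plus-path}'' does not address this, since that lemma controls how an external walk interacts with an existing $f$-DAG but says nothing about adjoining a new edge to it. Without this case analysis the validity claim for $f\oplus p$ is unsupported.
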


\begin{proof}
  Our overall strategy here is to take an inconsistency from the outside of the
  blossom $b$ and bring it into the blossom. We begin by showing how to get a
  valid edge labeling $f'$ for $I$ with an inconsistent variable just one edge
  away from $b$.

  Using Lemma~\ref{lemma:Improve}, we can use $g^b$ and $f^b$ to find in
  polynomial time an augmenting $f^b$-walk $p^b$ that does not begin at the
  inconsistent variable $v_{C_1}$. If $p^b$ does not contain any of the
  variables $v_{C_1},\dots,v_{C_k}$, then we can just output the walk $p$
  obtained from $p^b$ by replacing each $C_i^b$ by $C_i$ and be done. Assume
  now that some $v_C$ appears in $p^b$. We choose the $f^b$-walk $r^b$ so that
  $r^bC^bv_C$ is the shortest prefix of $p^b$ that ends with some blossom
  variable $v_C$.  By renaming all $C^b$s in $r^b$ to $C$s, we get the walk
  $r$. It is straightforward to verify that $r$ is an $f$-walk and that
  $rC_ib_i$ or $rC_ib_{i-1}$ is an $f$-walk for some $i\in[1,k]$. Let $q$ be
  the shortest prefix of $r$ such that one of $qC_ib_i$ or $qC_ib_{i-1}$ is an
  $f$-walk for some $i\in[1,k]$. 
  
  Recall that the blossom $b$ originates from an $f$-DAG $B$. The minimality of
  $q$ allows us to apply Lemma~\ref{lemma:f-DAG-plus-path} and obtain that $B$
  is also an $(f\oplus q)$-DAG. Let $f'=f\oplus q$  and let $x$ be the last
  variable in $q$. It is easy to see that $f'$ is a valid edge labeling with
  exactly as many inconsistent variables as $f$. Moreover $x$ is inconsistent
  in $f'$ and there is an index $i$ such that at least one of $xC_ib_i$ or
  $xC_ib_{i-1}$ is an $f'$-walk. We will now show how to improve $f'$.
  
  If the constraint $C_i$ appears only once in the blossom $b$, it is easy to
  verify (using Lemma~\ref{observation:Intervals}) that one of $xC_i b_{[i,k]}$
  or $xC_i {b_{[0,i-1]}}^{-1}$ is an augmenting $f'$-walk. However, since the
  constraint $C_i$ might appear in the blossom several times, we have to come
  up with a more elaborate scheme.  The blossom $b$ comes from
  an $f'$-DAG $B$ in which some node $C_\ell^{t_\ell}$ is the node with the
  maximal timestamp (for a suitable $\ell\in[1, k]$). Assume first that there
  is a $j\in[\ell,k]$ such that $xC_jb_{j}$ is an $f'$-walk. In that case, we
  take maximal such $j$ and consider the DAG $B'$ we get by adding the edge
  $C_j^{t_j}x$ to the subgraph of $B$ induced by the nodes
  $C_j^{t_j},b_j,C_{j+1}^{t_{j+1}},\dots, C_k^{t_k},b_k$, obtaining the
  directed path $b_k C_k^{t_k} b_{k-1} C_{k-1}^{t_{k-1}}\dots b_j C_j^{t_j} x$.

  It is routine to
  verify that $B'$ is an $f'$-DAG; the only thing that could possibly fail
  is the ``no shortcuts'' property involving $C_j^{t_j}$. However, $C_j^{t_j}$ has maximal
  timestamp in $B'$ and there is no $i>j$ such that $f'(C_j)\oplus x\oplus b_i \in
  C_j$. 
  
  Using Corollary~\ref{corollary:path}, we 
  get that $f'\oplus B'$ is a valid
  edge labeling which has fewer inconsistencies than $f'$, so we are done. 
  In a similar way, we can improve $f'$ when there exists a $j\in[1, \ell]$ such that
  $xC_j b_{j-1}$ is an $f'$-walk.

  \begin{figure}
  \begin{center}
  \includegraphics[scale=1]{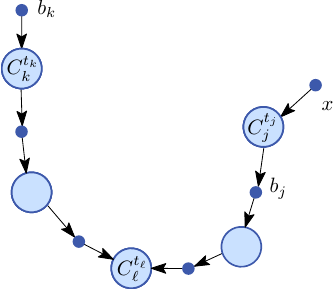}
  \caption{The $f'$-DAG $B'$ constructed using $xC_jb_{j}$ (where $j<\ell$).}
  \label{pic:lemma25}
  \end{center}
  \end{figure}

  If neither of the above cases occurs, then we take $j$ such that the
  timestamp $t_j$ is
  maximal and either $xC_{j}b_{j}$ or $xC_jb_{j-1}$ is an $f'$-walk. Without
  loss of generality, let $xC_jb_j$ be an $f'$-walk. Then $j< \ell$ and we
  consider the DAG $B'$ we get from the subgraph of $B$ induced by $C_j^{t_j},b_j,
  C_{j+1}^{t_{j+1}},\dots, C_k^{t_k},b_k$ by adding the edge $x C_j^{t_j}$ (see Figure~\ref{pic:lemma25}).
  As before, 
  the only way $B'$ cannot be an $f'$-DAG is if the ``no shortcuts'' property fails,
  but that is impossible: we chose $j$ so that $t_j$ is maximal, so an
  examination of the makeup of $B'$ shows that the only bad
  thing that could possibly happen is if there were an index $i\geq \ell$ such that
  $C_i=C_j$, we had in $B$ the edge $b_iC_i^{t_i}$, and $f'(C_i)\oplus b_i\oplus x\in C_i$.
  But then we would have the $f'$-walk $xC_ib_i$ for $i\geq \ell$ and the
  procedure from the previous paragraph would apply.
  Using Corollary~\ref{corollary:path}, we again see that $f'\oplus B'$ is a
  valid edge labeling with fewer inconsistencies than $f$. 
  
  It is easy to verify that finding $q$, calculating $f'=f\oplus q$, finding
  an appropriate $j$ and augmenting $f'$ can all be done in time
  polynomial in the size of the instance.
\end{proof}

\subsection{Proof of Theorem~\ref{thm:correctness}(\ref{claimNo})}
\label{sec:NoProof}
In this section we will prove that if the algorithm answers ``No'' then $f$ is an optimal edge labeling.
\begin{lemma}\label{lemma:reach}
  Suppose that Algorithm~\ref{alg:Improve} outputs ``No'' in step~\ref{step:Edges}, without ever visiting steps~\ref{step:Augment} and~\ref{step:blossom}.
  Then $f$ is optimal.
\end{lemma}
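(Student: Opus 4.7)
The plan is to prove the contrapositive: if $f$ is not optimal, Algorithm~\ref{alg:Improve} must visit step~\ref{step:Augment} or~\ref{step:blossom} before it can terminate. Applying Lemma~\ref{lemma:Improve} to $f$ and any valid edge labeling with strictly fewer inconsistencies yields an augmenting $f$-walk $p = p_0 C_1 p_1 \cdots C_k p_k$ whose endpoints $p_0 \neq p_k$ are both inconsistent, hence are both roots of $T$, and in particular already lie in $V(T)$.

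The core of the argument is to show by induction on $i$ that $p_i \in V(T)$ at termination. The base case $i = 0$ is immediate. For the inductive step, $p_i \in V(T)$ forces (by the termination test of step~\ref{step:Edges}) that the edge $\{p_i, C_{i+1}\}$ is covered in $T$: either $p_i C_{i+1}^s \in E(T)$ (the \emph{opener case}, where $p_i$ opened $C_{i+1}$ at timestamp $s$) or $C_{i+1}^s p_i \in E(T)$ (the \emph{child case}). In the opener case, the set $W$ assembled in step~\ref{step:W} when $C_{i+1}^s$ was added equals $\{w : f(C_{i+1}) \oplus p_i \oplus w \in C_{i+1}\}$, and $p_{i+1} \in W$ follows directly from the $f$-walk property of $p$; hence step~\ref{step:AddVariable} or step~\ref{step:Nothing} puts $p_{i+1}$ into $V(T)$ (the remaining alternatives~\ref{step:Augment} and~\ref{step:blossom} are excluded by hypothesis).

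The child case is more delicate. Writing $v^\star$ for the opener of $C_{i+1}^s$, we have $f(C_{i+1}) \oplus v^\star \oplus p_i \in C_{i+1}$ (from $p_i \in W$) and $f(C_{i+1}) \oplus p_i \oplus p_{i+1} \in C_{i+1}$ (from the $f$-walk). If $p_{i+1} = v^\star$ we are done directly; otherwise I would combine the even $\Delta$-matroid exchange axiom on these known tuples of $C_{i+1}$ with the no-shortcuts property of $T$ (property~(\ref{item:Shortcuts}) of Definition~\ref{def:DAG}), potentially rerouting the tail of $p$ through $v^\star$ using Corollary~\ref{corollary:path} and Lemma~\ref{lemma:f-DAG-plus-path}, to show that the algorithm was forced to have placed $p_{i+1}$ in $V(T)$ as well.

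Once the induction is complete, $p_k \in V(T)$ is inconsistent and hence a root. Since each tree has a unique root and $p_0 \neq p_k$, these endpoints lie in \emph{different} trees of $T$. Letting $j$ be the smallest index with $p_j$ and $p_{j+1}$ in different trees, an inspection of how $p_{j+1}$ entered $V(T)$ (using the case analysis above applied to $C_{j+1}$) shows that step~\ref{step:Augment} must have fired at that moment, contradicting the ``No'' output. The main obstacle I expect is precisely the child case of the inductive step: the naive $\Delta$-matroid exchange on the three known tuples is insufficient on its own, and the argument seems to require a careful rerouting through $T$ that exploits both the no-shortcuts structure of $T$ and the evenness of $C_{i+1}$.
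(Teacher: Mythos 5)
Your proposal has a genuine gap, and it sits exactly where you place your hopes: the inductive claim that every variable of the augmenting $f$-walk ends up in $V(T)$, whose ``child case'' you leave as an intention (``I would combine \dots potentially rerouting \dots'') rather than an argument. The underlying difficulty is a mismatch of labelings: Algorithm~\ref{alg:Improve} grows $T$ using transitions valid with respect to the \emph{original} $f$ (the set $W$ in step~\ref{step:ExploringChildren} is $\{w\colon f(C)\oplus v\oplus w\in C\}$), whereas the $f$-walk $p$ is only guaranteed valid with respect to the partially flipped labelings $f\oplus p_{[0,i]}$. This already weakens your opener case: $p_{i+1}\in W$ does \emph{not} follow ``directly from the $f$-walk property'' if $p$ has passed through $C_{i+1}$ earlier, since then $(f\oplus p_{[0,i]})(C_{i+1})\neq f(C_{i+1})$. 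In the child case no step of the algorithm ever examines the transition $p_iC_{i+1}p_{i+1}$ at all, and it is far from clear that $p_{i+1}$ must lie in $V(T)$ -- the no-shortcuts property~(\ref{item:Shortcuts}) of Definition~\ref{def:DAG} asserts the \emph{absence} of certain tuples from $C_{i+1}$, which is of no direct help in forcing new vertices into the forest. The same original-versus-flipped mismatch resurfaces in your final step, where you need step~\ref{step:Augment} to have fired for the crossing pair $p_j,p_{j+1}$ even though $p_{j+1}$ may have entered $V(T)$ through its other constraint.

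The paper's proof avoids pointwise membership of walk vertices altogether. It passes to the timestamp-free edge set $\overline E(T)$ and records four closure properties; the fourth (if $vC\in\overline E(T)$ and $f(C)\oplus v\oplus w\in C$ then $Cw\in\overline E(T)$) is the rigorous form of your opener case. It then calls an $f$-walk \emph{bad} if it starts at an inconsistent variable and contains an edge $Cv\notin\overline E(T)$; every augmenting $f$-walk is bad, so non-optimality of $f$ yields a shortest bad walk $p=p^\star(vCw)$. Taking the shortest prefix $q=q^\star(xCy)$ of $p^\star$ for which $f\oplus q\oplus(vCw)$ is valid and applying the even $\Delta$-matroid exchange to $g(C)\oplus x\oplus y\oplus v\oplus w$ and $g(C)$, where $g=f\oplus q^\star$, either contradicts the minimality of $q$ or produces a strictly shorter bad walk $q^\star(xCu)$ with $u\in\{v,w\}$, contradicting the minimality of $p$. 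This double-minimality exchange argument is precisely the device that reconciles the flipped labelings along the walk with a forest built from the original $f$; without it, or a worked-out substitute for your child case, your outline does not yet constitute a proof.
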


\begin{proof}
  Let $T$ be the forest upon termination. Our goal is to show that $T$
  describes all edges that can be reached from some inconsistent variable
  by an $f$-walk. In the paragraphs below, we make the meaning of ``describes''
  more precise.

  First of all, we define the set of edges present in $T$ (i.e. we forget the
  timestamps):
  \[
\overline E(T)=\{Cv\colon C^tv\in E(T)\mbox{ for some }t\}\cup 
\{vC\colon vC^t\in E(T)\mbox{ for some }t\}.
  \]
Inspecting Algorithm~\ref{alg:Improve}, one can check that $\overline E(T)$ has
the following properties:
\begin{enumerate}
\item \label{noproof:root} If $v$ is an inconsistent variable in $f$ and
  $\{v,C\}\in\calE$, then $vC\in\overline E(T)$.
\item \label{noproof:continue} If $Cv\in\overline E(T)$ and
  $\{v,D\}\in\calE,D\ne C$, then $vD\in\overline E(T)$.
\item \label{noproof:dir} If $vC\in\overline E(T)$, then $Cv\notin\overline E(T)$.
\item \label{noproof:transition} Suppose that $vC\in\overline E(T)$ and $f(C)\oplus v \oplus w\in C$ where $v,w$ are distinct nodes in the scope of constraint $C$. Then $Cw\in\overline E(T)$.
\end{enumerate}

  It is easy to see that for each $Cv\in\overline E(T)$ there is an $f$-walk
  that starts in an inconsistent variable and ends in $Cv$: Just take the
  directed path from a suitable root of $T$ to $C^tv$ in $T$ and apply
  Corollary~\ref{corollary:path}.

  Our goal in the rest of the proof is to show the converse -- if there is an
  $f$-walk that starts in an inconsistent variable and ends with the edge
  $Cv$ then $Cv\in \overline E(T)$. This will prove the Lemma: If $f$ is not 
  optimal then by Lemma~\ref{lemma:Improve} there is an augmenting $f$-walk
  that ends with an edge $Cv$ where $v$ is inconsistent. We thus should have
  $Cv\in\overline E(T)$. However, by
  property~\ref{noproof:root} above we have $vC\in\overline E(T)$ and thus (by
  property~\ref{noproof:dir}) $Cv\notin\overline E(T)$, a contradiction. 

  However, to be able to take a smallest counterexample, we will need to strengthen our
  statement, making it more local:  Call an $f$-walk 
  \emph{bad} if it starts at a variable node which is inconsistent in $f$, and
  contains (anywhere; not just at the end) an edge $Cv\notin \overline E(T)$; otherwise an $f$-walk is \emph{good}. We
  will show that bad $f$-walks do not exist, which in particular means that
  any nonzero length $f$-walk from an inconsistent variable needs to end with $Cv\in \overline
  E(T)$ and the argument from the previous paragraph applies.

  Assume for a contradiction that there exists a bad $f$-walk. Let $p$
  be a shortest bad walk. Write $p=p^\star(vCw)$ where $p^\star$ ends at $v$.
  By minimality of $p$, $p^\star$ is good and $Cw\notin\overline E(T)$.  Using
  properties~(\ref{noproof:root}) or (\ref{noproof:continue}),  we obtain that
  $vC\in\overline E(T)$ (and therefore $Cv\notin\overline E(T)$).

Let $q$ be the shortest prefix of $p^\star$ (also an $f$-walk) such that
the labeling $f\oplus q\oplus (vCw)$ is valid
(at least one such prefix exists, namely $q=p^\star$). The walk $q$ must be of
positive length (otherwise the precondition of
property~(\ref{noproof:transition}) would hold, and we would get $Cw\in\overline E(T)$, a contradiction).
Also, the last constraint node in $q$ must be $C$,
otherwise we could have taken a shorter prefix. Thus,
we can write $q=q^\star(xCy)$ where $q^\star$ ends at $x$. Note that, since $p$
is a walk, the variables $x,y,v,w$ are (pairwise) distinct.

We shall write $g=f\oplus q^\star$. Let us apply the even $\Delta$-matroid
  property to the tuples $g(C)\oplus x\oplus y\oplus v \oplus w$ and $g(C)$ 
(which are both in $C$) in coordinate $y$. We get that either $g(C)\oplus v \oplus w\in C$,
or $g(C)\oplus x \oplus v\in C$, or $g(C)\oplus x \oplus w\in C$.
In the first case we could have chosen $q^\star$ instead of $q$ -- a contradiction to the minimality of $q$.
In the other two cases $q^\star(xCu)$ is an $f$-walk for some $u\in\{v,w\}$.
But then from $Cu\notin \overline E(T)$ we get that $q^\star(xCu)$ is a bad walk  -- a contradiction to the minimality of $p$.
\end{proof}

\begin{corollary}[Theorem~\ref{thm:correctness}(\ref{claimNo})]\label{corollary:no}
  If Algorithm~\ref{alg:Improve} answers ``No'', then the edge labeling $f$ is
  optimal.
\end{corollary}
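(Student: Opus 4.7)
The plan is a straightforward induction on the recursion depth of Algorithm~\ref{alg:Improve}, which is well-founded by Corollary~\ref{cor:recursion}. The role of the three inputs has already been isolated: Lemma~\ref{lemma:reach} handles ``No'' answers that arise without any contraction, Lemma~\ref{lemma:ItoIprime} lets us pass optimality from the contracted instance back up, and Lemma~\ref{lemma:IprimetoI} lets us pass improvements from the contracted instance back up. So the work of this corollary is essentially bookkeeping.

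First, the base case. Suppose that on input $(I,f)$ the algorithm terminates without ever executing step~\ref{step:blossom}. It must then terminate either in step~\ref{step:Augment} (producing an improvement, so not the ``No'' output) or in step~\ref{step:Edges} (producing ``No''). In the latter case, Lemma~\ref{lemma:reach} applies verbatim and gives that $f$ is optimal for $I$.

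For the inductive step, assume the algorithm invokes step~\ref{step:blossom} at least once, and consider the first such invocation. At that point it constructs $(I^b,f^b)$ and calls itself recursively on this pair. The recursion depth of this subordinate call is strictly smaller, so the induction hypothesis applies to it. Two cases for the subordinate call: if it returns an improved labeling $g^b$ of $f^b$, then Lemma~\ref{lemma:IprimetoI} yields, in polynomial time, an improved labeling $g$ of $f$, which the top-level call then outputs --- so in this case the top-level answer is not ``No''. If on the other hand the subordinate call answers ``No'', the top-level call also answers ``No''. Therefore, whenever the top-level answer is ``No'', the subordinate call also answered ``No''; by the induction hypothesis $f^b$ is optimal for $I^b$, and Lemma~\ref{lemma:ItoIprime} then gives that $f$ is optimal for $I$, completing the induction.

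There is no real obstacle here: all of the delicate arguments (the ``no shortcuts'' manipulation of $f$-DAGs, the correctness of contracting a blossom, and the translation between augmenting walks in $I$ and in $I^b$) have already been packaged into the cited lemmas. The only detail worth double-checking is that step~\ref{step:blossom} really does propagate ``No'' upward --- i.e.~that if the recursive call returns no improvement, the outer call also produces the ``No'' output rather than continuing --- which is precisely what the description of step~\ref{step:blossom} in Algorithm~\ref{alg:Improve} prescribes.
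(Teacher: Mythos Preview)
Your proof is correct and follows essentially the same approach as the paper: induction on recursion depth (equivalently, number of contractions), with the base case handled by Lemma~\ref{lemma:reach} and the inductive step by Lemma~\ref{lemma:ItoIprime}. The paper's version is terser and does not spell out the case where the subordinate call returns an improvement, but your added bookkeeping there is harmless.
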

\begin{proof}
  Algorithm~\ref{alg:Improve} can answer ``No'' for two reasons: either the
  forest $T$ cannot be grown further and neither an augmenting path nor a
  blossom are found, or the algorithm finds a blossom $b$, contracts it and then
  concludes that $f^b$ is optimal for $I^b$. We proceed by induction on the
  number of contractions that have occurred during the run of the algorithm.
	
  The base case, when there were no contractions, follows from
  Lemma~\ref{lemma:reach}.  The induction step is an easy consequence of
  Lemma~\ref{lemma:ItoIprime} (also known as Lemma~\ref{lemma:ItoIprime2}):
  If we find $b$ and the algorithm answers ``No'' when run on $f^b$ and $I^b$
  then, by the induction hypothesis, $f^b$ is optimal for $I^b$, and by
  Lemma~\ref{lemma:ItoIprime} $f$ is optimal for $I$.
\end{proof}

\section{Extending our algorithm to efficiently co\-ver\-able $\Delta$-matroids} 
\label{sec:extend}

In this section we extend Algorithm~\ref{alg:Improve} from even $\Delta$-matroids to a wider
class of so-called efficiently coverable $\Delta$-matroids. The idea of the algorithm is similar to
what~\cite{feder-ford-matroids} previously did for $\mathcal C$-zebra
$\Delta$-matroids, but our method covers a larger class of $\Delta$-matroids.

Let us begin by giving a formal
definition of efficiently coverable $\Delta$-matroids.
\begin{definition}\label{def:efficiently-coverable} We say that a class of $\Delta$-matroids $\Gamma$ is
  \emph{efficiently coverable} if there is an algorithm that, given input $M\in
  \Gamma$ and $\alpha\in M$, lists in polynomial time a set $M_\alpha$ so that
  the system $\{M_\alpha\colon \alpha\in M\}$ 
  satisfies the conditions of Definition~\ref{def:coverable}.
\end{definition}

Before we go on, we would like to note that coverable $\Delta$-matroids are 
closed under gadgets, i.e. the ``supernodes'' shown in Figure~\ref{pic:example}. 
Taking gadgets is a common construction in the CSP world, so
being closed under gadgets makes coverable $\Delta$-matroids a very natural
class to study. Taking gadgets is equivalent to repeated composition of 
$\Delta$-matroids (see Proposition~\ref{prop:composition}).

\begin{definition}
Given $M\subset \{0,1\}^U$ and $N\subset \{0,1\}^V$ with $U,V$ disjoint sets of
variables, we define the \emph{direct product} of $M$ and $N$ as 
\[
  M\times N=\{(\alpha,\beta)\colon \alpha\in M,\beta\in N\}\subset
  \{0,1\}^{U\cup V}.
\]
If $w_1,w_2\in U$ are distinct variables of $M$, then the $\Delta$-matroid obtained from
$M$ by identifying $w_1$ and $w_2$ is
\begin{align*}
  M_{w_1=w_2}&=\{\beta_{\upharpoonright U\setminus\{w_1,w_2\}} \colon 
  \beta\in M,\,\beta(w_1)=\beta(w_2)\} \\
  &\subset \{0,1\}^{U\setminus\{w_1,w_2\}}
\end{align*}
\end{definition}
Since both above operations are special cases of $\Delta$-matroid compositions,
by Proposition~\ref{prop:composition}, (even) $\Delta$-matroids
are closed under direct product and identifying variables.

If a $\Delta$-matroid $P$ is obtained from some $\Delta$-matroids $M_1,\dots,M_k$ by a sequence
of direct products and identifying variables, we say that $P$ is
\emph{gadget-constructed}
from $M_1,\dots,M_k$ (a gadget is an edge CSP instance with some variables
present in only one constraint -- these are the ``output variables'').

\begin{theorem}
  The class of coverable $\Delta$-matroids is closed under:
  \begin{enumerate}
    \item Direct products,
    \item identifying pairs of variables, and
    \item compositions,
    \item gadget constructions.
  \end{enumerate}
\end{theorem}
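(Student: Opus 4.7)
The plan is to first reduce items (1)--(3) to just (1) and (2), since by the definition given in the preceding paragraph gadget constructions are built from iterated direct products and variable identifications. Next, the central observation is that every even $\Delta$-matroid $M$ is coverable via the trivial choice $M_\alpha := M$ for every $\alpha \in M$: $M$ is already an even $\Delta$-matroid by assumption, it trivially contains every $\beta \in M$ reachable from $\alpha$, and the premise of condition~(3) of Definition~\ref{def:coverable} is vacuous because $M_\alpha \setminus M = \emptyset$. Hence the class of coverable even $\Delta$-matroids coincides with the class of even $\Delta$-matroids, and it suffices to show that even $\Delta$-matroids are closed under direct products and under identifying pairs of variables.

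Direct products are routine. Parity is additive across the disjoint ground sets, so if $M$ and $N$ are even then $M \times N$ has constant parity. The $\Delta$-matroid exchange follows by invoking exchange in whichever factor contains the flipped coordinate: given $(\alpha,\beta), (\alpha',\beta') \in M \times N$ and $v \in (\alpha,\beta) \symdiff (\alpha',\beta')$, say $v \in U$, applying exchange in $M$ to $\alpha, \alpha', v$ yields $u \in \alpha \symdiff \alpha'$ with $\alpha \oplus \{u,v\} \in M$, whence $(\alpha,\beta) \oplus \{u,v\} \in M \times N$.

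The interesting part is variable identification. Let $M \subseteq \{0,1\}^U$ be even and let $M' = M_{w_1 = w_2}$. Parity is preserved since any lift $\bar\beta \in M$ with $\bar\beta(w_1) = \bar\beta(w_2)$ has $|\bar\beta| - 2\bar\beta(w_1)$ ones in the remaining coordinates. For the exchange property, take $f, g \in M'$ with lifts $\bar f, \bar g \in M$ and a coordinate $v \in f \symdiff g$. If $\bar f, \bar g$ agree on $w_1$ (equivalently on $w_2$), then $\bar f \symdiff \bar g = f \symdiff g$ and a single application of exchange in $M$ returns $u \in f \symdiff g$ with $\bar f \oplus \{u,v\}$ still respecting the identification. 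The delicate case is when $\bar f, \bar g$ disagree on $w_1, w_2$: the returned $u$ could be $w_1$ (or symmetrically $w_2$), in which case $\bar h := \bar f \oplus \{w_1,v\} \in M$ no longer respects the identification. To repair this I apply exchange a second time, now to $\bar h, \bar g, w_2$. The identity $\bar h \symdiff \bar g = (\bar f \symdiff \bar g) \setminus \{w_1, v\}$ shows that $w_1 \notin \bar h \symdiff \bar g$, and evenness of $M$ forces the returned $u'$ to be distinct from $w_2$, so $u'$ must lie in $(f \symdiff g) \setminus \{v\}$. Then $\bar h \oplus \{u', w_2\} = \bar f \oplus \{w_1, w_2, u', v\}$ flips $w_1$ and $w_2$ together and therefore projects to $f \oplus \{u', v\} \in M'$, completing the exchange.

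The main obstacle is precisely this nested two-step exchange for variable identification, where evenness is used both to rule out $u' = w_2$ in the second application of the exchange property and to guarantee that the resulting element descends back to $M'$ by flipping $w_1$ and $w_2$ in tandem.
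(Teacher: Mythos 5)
Your proof establishes a different (and essentially trivial) statement than the one the paper's theorem is actually making. You read ``coverable even $\Delta$-matroids'' as the intersection of the two classes, note that every even $\Delta$-matroid is coverable via $M_\alpha:=M$, and then prove closure of \emph{even} $\Delta$-matroids under products and identification. That last part is correct (your two-step exchange for $M_{w_1=w_2}$, flipping $w_1$ and $w_2$ in tandem and using evenness to exclude $u'=w_2$, is exactly the ``straightforward but a bit more difficult'' argument the paper alludes to just before the theorem, and your reduction of gadget constructions to products plus identifications matches the paper's item (3)). But the theorem is intended, and is later used, for coverable $\Delta$-matroids that are \emph{not} even: in Section~6.2 it is invoked for $M=\{(0,0,0),(1,1,0),(1,0,1),(0,1,1),(1,1,1)\}$, which contains tuples of both parities, to conclude that $N=M\times M$ is coverable. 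Under that reading the trivial cover $M_\alpha:=M$ is unavailable (it is not an even $\Delta$-matroid), your collapse of the class to ``even $\Delta$-matroids'' fails, and your argument gives no information about the cases the theorem is actually needed for. The paper's proof correspondingly never assumes $M$ or $N$ is even.

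Concretely, the missing content is the preservation of \emph{coverability} itself, which is where all the work in the paper lies: (i) showing that the even-neighbor/reachability relation of Definition~\ref{def:even-neighbors} decomposes over direct products, so that everything reachable from $(\alpha,\beta)$ lies in $M_\alpha\times N_\beta$, and verifying condition (3) of Definition~\ref{def:coverable} for the family $\{M_\alpha\times N_\beta\}$; and (ii) for identification, proving a lifting lemma (if $\gamma$ is reachable from $\alpha$ in $M_{w_1=w_2}$ and $\beta\in M$ witnesses $\alpha$, then some witness $\delta$ of $\gamma$ is reachable from $\beta$ in $M$), taking $(M_\beta)_{w_1=w_2}$ as the cover of $M_{w_1=w_2}$, and then a genuinely delicate case analysis (using evenness of the covers $M_\beta$, not of $M$) to verify condition (3) when $\delta\oplus u\oplus v$ or $\delta\oplus u\oplus v\oplus w_1\oplus w_2$ lands in $M_\beta\setminus M$. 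None of this is subsumed by the even-case closure you proved, so as a replacement for the paper's proof your argument has a genuine gap: it would need to be redone for arbitrary coverable $\Delta$-matroids along the lines above.
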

\begin{proof}
\begin{enumerate}
  \item Let $M\subset\{0,1\}^U$ and $N\subset\{0,1\}^V$ be two coverable
      $\Delta$-matroids. We claim that if $(\alpha,\beta)$ and $(\gamma,\delta)$
      are even-neighbors in $M\times N$, then either $\alpha=\gamma$
      and $\beta$ is an even-neighbor of $\delta$ in $N$, or $\beta=\delta$
      and $\alpha$ is an
      even-neighbor of $\gamma$ in $M$. This is
      straightforward to verify: Without loss of generality let us assume that
      $u\in U$ is a variable of
      $M$ such that $(\alpha,\beta)\oplus u\not\in M\times N$, and let $v$ be
      the variable such that $(\alpha,\beta)\oplus u\oplus v=(\gamma,\delta)$. 
      Since we are dealing with a direct product, we must have
      $\alpha\oplus u\not\in M$ and in order for $(\alpha,\beta)\oplus u\oplus
      v$ to lie in $M\times N$, we must have $\alpha \oplus u\oplus
      v\in M$. But then $\delta=\beta$ and $\alpha\oplus u\oplus v=\gamma$ is an even-neighbor of
      $\alpha$.

      Let $\alpha\in M$, $\beta\in N$ and let $M_\alpha$, $N_\beta$ be the even
      $\Delta$-matroids from Definition~\ref{def:coverable} for $M$ and $N$. 
      From the above paragraph, it follows by induction that whenever
      $(\gamma,\delta)$ is reachable from
      $(\alpha,\beta)$, then $(\gamma,\delta)\in M_\alpha\times N_\beta$. Since
      each $M_\alpha\times N_\beta$ is an even $\Delta$-matroid, the direct product
      $M\times N$ satisfies the first two parts of Definition~\ref{def:coverable}.

      It remains to show that if we can reach
      $(\gamma,\delta)\in M\times N$ from  $(\alpha, \beta)\in M\times N$ 
      and $(\gamma,\delta)\oplus u\oplus v\in M_\alpha\times
      N_\beta\setminus M\times N$, then $(\gamma,\delta)\oplus u,(\gamma,\delta)\oplus v\in M\times N$. 
      By the first paragraph of this proof, we can reach
      $\gamma$ from $\alpha$ in $M$ and $\delta$ from $\beta$ in $N$.
      Moreover, both $u$ and $v$ must lie in the same set $U$ or $V$, for otherwise
      we would have that 
      $(\gamma\oplus v,\delta\oplus u)$ or $(\gamma\oplus u,\delta\oplus v)$
      lies in $M_\alpha\times
      N_\beta$, a contradiction with
      $M_\alpha$ being an even $\Delta$-matroid. So let (again without loss of
      generality) $u,v\in V$. Then $\delta\oplus u\oplus v\in N_\beta\setminus
      N$. Since $N$ is coverable and $\delta$ is reachable from $\beta$, 
      we get
      $\delta\oplus v,\delta\oplus u\in N$, giving us $(\gamma,\delta)\oplus
      u,(\gamma,\delta)\oplus v\in M\times N$ and we are done.
      
    \item Let $M\subset \{0,1\}^U$ be coverable and $w_1\neq w_2$ be two
      variables.

      Similarly to the previous item, the key part of the proof is to show that the
      relation of being reachable survives identifying $w_1$ and $w_2$: More
      precisely, take $\alpha,\gamma\in M_{w_1=w_2}$ and $\beta\in M$ such that
      $\beta(w_1)=\beta(w_2)$ and $\alpha=\beta_{\upharpoonright
      U\setminus\{w_1,w_2\}}$ (i.e. $\beta$ witnesses  $\alpha\in M_{w_1=w_2}$).
      Assume that we can reach $\gamma$ from $\alpha$.
      Then we can reach from $\beta$ a tuple $\delta\in M$ such that
      $\delta(w_1)=\delta(w_2)$ 
      and $\gamma=\delta_{\upharpoonright U\setminus\{w_1,w_2\}}$.

      Since we can proceed by induction, it is enough to prove this claim in
      the case when $\alpha,\gamma$ are even-neighbors. So assume that there
      exist variables $u$ and $v$ such that  $\gamma=\alpha\oplus u\oplus v$
      and $\alpha\oplus u\not\in M_{w_1=w_2}$. From the latter, it follows that
      $\beta\oplus u,\beta\oplus u\oplus w_1\oplus w_2 \not\in M$.  Knowing all
      this, we see that if $\beta\oplus u \oplus v\in M$, the tuple
      $\beta\oplus u\oplus v$ is an even-neighbor of $\beta$ and we are done.

      Suppose, to the contrary, that $\beta\oplus u\oplus v\not\in M$. Let $\delta$ be the
      tuple of $M$ witnessing $\gamma\in M_{w_1=w_2}$. Since $\beta\oplus
      u\oplus v\not\in M$, we get $\delta=\beta\oplus u\oplus v\oplus w_1\oplus
      w_2$. Since $\beta\oplus u,\beta\oplus u\oplus
      v\not\in M$, the $\Delta$-matroid property applied on $\beta$ and
      $\delta$ in the variable $u$ gives us (without loss of generality) that
      $\beta\oplus u\oplus w_1\in M$. But then $\beta$ is an even-neighbor of
      $\beta\oplus u\oplus w_1$ in $M$, which is an even neighbor (via the variable
      $w_2$ -- recall that $\beta\oplus u\oplus w_1\oplus w_2\not\in M$) of
      $\delta$ and so we can reach $\delta$ from $\beta$, proving the claim.

      Assume now that $M$ is coverable. We want to show that the sets
      $(M_\beta)_{w_1=w_2}$ where $\beta$ ranges over $M$ cover $M_{w_1=w_2}$. Choose $\alpha\in M_{w_1=w_2}$ and let
      $\beta\in M$ be the witness for $\alpha\in M_{w_1=w_2}$. We claim that the even
      $\Delta$-matroid $(M_\beta)_{w_1=w_2}$ contains all members of
      $M_{w_1=w_2}$ that can be reached from $\alpha$. Indeed, whenever
      $\gamma$ can be reached from $\alpha$, some $\delta\in M$ that 
      witnesses $\gamma\in M_{w_1=w_2}$ can be reached from $\beta$, so
      $\delta\in M_\beta$ and $\gamma\in(M_\beta)_{w_1=w_2}$.

      To finish the proof, take $\beta\in M$
      witnessing $\alpha\in M_{w_1=w_2}$ and 
      $\gamma\in M_{w_1=w_2}$ that is reachable from
      $\alpha$ and satisfies $\gamma\oplus u\oplus v\in (M_\beta)_{w_1=w_2}\setminus
      M_{w_1=w_2}$ for a suitable pair of variables $u,v$. Take a $\delta\in M$ that 
      witnesses $\gamma\in M_{w_1=w_2}$ and is reachable from $\beta$ (we have
      shown above that such a $\delta$ exists). Since $\gamma\oplus u\oplus v\in (M_\beta)_{w_1=w_2}\setminus
      M_{w_1=w_2}$, we know that neither
      $\delta\oplus u\oplus v$ nor
      $\delta\oplus u\oplus v\oplus w_1\oplus w_2$ lies in $M$, but at least one
      of these two tuples lies in $M_\beta$. If $\delta\oplus u\oplus v\in
      M_\beta$, we just use coverability of $M$ to get $\delta\oplus
      u,\delta\oplus v\in M$, which translates to $\gamma\oplus u,\gamma\oplus
      v\in M_{w_1=w_2}$. If this is not the case, we know that 
      $\delta, \delta\oplus u\oplus v\oplus w_1\oplus w_2\in M_\beta$ and
      $\delta\oplus u\oplus v\not\in M_\beta$. We show that in this situation
      we have $\gamma\oplus u\in M_{w_1=w_2}$; the proof of $\gamma\oplus v\in
      M_{w_1=w_2}$ is analogous. 
      
      Using the
      even $\Delta$-matroid property of $M_\beta$ on $\delta$ and $\delta\oplus
      u\oplus v\oplus w_1\oplus w_2$ in the variable $u$, we get that without loss of generality
      $\delta\oplus u\oplus w_1\in M_\beta$ (recall that $\delta\oplus u\oplus v\not\in
      M_\beta$). If $\delta\oplus u\oplus w_1\not\in M$, we can directly
      use coverability of $M$ on $\delta$ to get that $\delta\oplus u\in
      M$, resulting in $\gamma\oplus u\in M_{w_1=w_2}$. If, on the other hand, $\delta\oplus u\oplus w_1\in M$
      and $\delta\oplus u\not\in M$, then $\delta\oplus u\oplus w_1$ is
      reachable from $\beta$, so we can use coverability of $M$ on 
      $\delta\oplus u\oplus w_1\in M$ and      $\delta\oplus u\oplus v\oplus
      w_1\oplus w_2\in M_\beta\setminus M$ to get   $\delta\oplus u\oplus w_1\oplus w_2\in M$, which
      again results in $\gamma\oplus u\in M_{w_1=w_2}$, finishing the proof.
    \item Since a composition of two $\Delta$-matroids is just a
	    direct product followed by a series of identifying
	    variables, it follows from previous points that
	    coverable $\Delta$-matroids are closed under
	    compositions.
    \item This follows from first two points as any gadget construction 
      is equivalent to a sequence of products followed by identifying variables.
\end{enumerate}
\end{proof}

Returning to edge CSP, the main notions from the even
$\Delta$-matroid case translate to the efficiently coverable $\Delta$-matroid
case easily. The definitions of valid, optimal, and non-optimal edge
labeling may remain intact for coverable $\Delta$-matroids, but we need to
adjust our definition of a walk, which will now be allowed to end in a
constraint.

\begin{definition}[Walk for general $\Delta$-matroids] \label{def:generalwalk} 
  A \emph{walk} $q$ of length $k$ or $k+1/2$ in the instance $I$ is a sequence
$q_0 C_1 q_1 C_2 \dots C_{k} q_k$ or $q_0 C_1 q_1 C_2 \dots C_{k+1}$, respectively,
where the variables $q_{i-1},q_i$ lie in the
scope of the constraint $C_i$, and each edge $\{v,C\}\in\calE$ is traversed at most once:
 $vC$ and $Cv$ occur in $q$ at most once, and they do not occur simultaneously.
\end{definition}

Given  an edge labeling $f$ and a walk $q$, we define the edge labeling
$f\oplus q$ in the same way as before (see eq.~\eqref{eq:foplusq:def}).
We also extend the definitions of an $f$-walk and an
augmenting $f$-walk for a valid edge labeling $f$: A walk $q$ is an $f$-walk if
$f\oplus q^\star$ is a valid edge
labeling whenever $q^\star=q$ or $q^\star$ is a prefix of $q$ that ends at a variable.
An $f$-walk is called augmenting if: (1) it starts at a variable inconsistent in $f$,
 (2) it ends either at a different inconsistent variable or in a constraint, and
 (3) all variables inside of $q$ (i.e.. not endpoints) are consistent in $f$. Note that if $f$ is a valid edge labeling for which there is an augmenting $f$-walk, then $f$ is non-optimal
(since $f\oplus q$ is a valid edge labeling with 1 or 2 fewer inconsistent variables).

The main result of this section is tractability of efficiently coverable
$\Delta$-matroids. 

\begin{theorem}[Theorem \ref{thm:extension2} restated]
Given an edge CSP instance $I$ with efficiently coverable $\Delta$-matroid constraints,
an optimal edge labeling $f$ of $I$ can be found in time polynomial in $|I|$.
\end{theorem}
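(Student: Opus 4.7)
The plan is to adapt Algorithm~\ref{alg:Improve} to the coverable setting by treating each constraint $C$, at each step, as its current even $\Delta$-matroid cover $M_\alpha$ where $\alpha=f(C)$ is the tuple induced by the current edge labeling. Since $M_\alpha$ is itself an even $\Delta$-matroid and, by efficient coverability, can be listed in polynomial time, the local machinery developed in Sections~\ref{section:algorithm} and~\ref{sec:proofs} applies: in step~\ref{step:W} we would compute the candidate set $W=\{w : f(C)\oplus v\oplus w\in M_{f(C)}\}$, and otherwise follow Algorithm~\ref{alg:Improve} essentially verbatim. Because covers may change whenever $f$ changes, the algorithm is re-run from scratch (after recomputing covers) each time an improvement is performed, rather than maintained incrementally.

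The key conceptual change is that augmenting walks may now terminate \emph{inside a constraint} (see Definition~\ref{def:generalwalk}), corresponding to a reduction of the number of inconsistencies by one. This is exactly what condition~3 of Definition~\ref{def:coverable} buys us: if the algorithm produces a walk that would set some constraint's tuple to $\gamma\oplus u\oplus v\in M_{f(C)}\setminus M$, then one of $\gamma\oplus u, \gamma\oplus v$ must lie in $M$, so the walk can be truncated at $C$ after a single flip to yield a valid improvement in $M$. Analogues of Theorem~\ref{thm:correctness}(\ref{claimAugment}), Lemmas~\ref{lemma:stem},~\ref{lemma:ItoIprime}, and~\ref{lemma:IprimetoI} should go through because the operations (shortening $f$-DAGs, contracting blossoms) are all performed inside the even $\Delta$-matroid $M_{f(C)}$; the contracted matroid $D^b$ is a gadget construction built from the original constraints, and since we proved that coverable $\Delta$-matroids are closed under gadget constructions, $D^b$ is again coverable with a cover that can be computed from the covers of the $C_i$'s. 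This keeps the recursion well-defined and polynomial.

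The main obstacle is proving the correctness of the ``No'' output, i.e.\ an analogue of Lemma~\ref{lemma:reach}. Suppose the algorithm terminates without finding an improving walk, but $f$ is not optimal; then by (a suitably generalized) Lemma~\ref{lemma:Improve} there is an augmenting $f$-walk $p$ in $I$. Define $\overline{E}(T)$ as before, call a walk \emph{bad} if it starts at an $f$-inconsistent variable and contains some edge $Cv\notin\overline{E}(T)$, and let $p$ be a shortest bad $f$-walk. The proof would try, as in Lemma~\ref{lemma:reach}, to rewrite $p$ by applying the even $\Delta$-matroid property of $M_{f(C)}$ at a critical constraint $C$ to find a strictly shorter bad walk. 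The subtlety is that $p$ lives in $M$, not in $M_{f(C)}$: a transition used by $p$ could correspond to a single-flip move that is invisible to the cover. Here condition~3 of Definition~\ref{def:coverable} is invoked again to recast any apparent transition outside the cover as a pair of moves that stay inside $M$, or to truncate the walk into a shorter augmenting walk ending at $C$ that would have been detected by the algorithm. The heart of the proof is this case analysis, showing that every bad $f$-walk yields either a shorter bad walk (contradicting minimality of $p$) or a walk the algorithm was obliged to find before answering ``No''. Combined with the polynomial bounds on the recursion (via the analogue of Corollary~\ref{cor:recursion}), this establishes Theorem~\ref{thm:extension2}.
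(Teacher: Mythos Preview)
Your approach is genuinely different from the paper's, and while it may be workable in principle, it carries a real gap that you yourself flag.

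The paper does \emph{not} adapt Algorithm~\ref{alg:Improve} to operate on covers. Instead it gives a black-box reduction to the even case: for each pair $(C,\alpha)$ with $C\in\calC$ and $\alpha\in C$, build the instance $I(f,C,\{\alpha\})$ by replacing $C$'s relation with the singleton $\{\alpha\}$ and every other constraint $D$ with its cover $D_{f(D)}$. Each such instance has only even $\Delta$-matroid constraints, so Algorithm~\ref{alg:Improve} applies unchanged. Two short lemmas finish the argument: Lemma~\ref{lemma:extensionworks} shows that if $f$ is non-optimal then some $(C,\alpha)$ yields an improvement (take $C$ to be the last constraint of a shortest augmenting $f$-walk $q$ in $I$ and $\alpha=(f\oplus q)(C)$); Lemma~\ref{lemma:getAP} converts any improvement found in $I(f,C,\{\alpha\})$ back into an augmenting $f$-walk in $I$, using condition~3 of coverability precisely at the first point where the cover walk would leave the true relation. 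There are at most $|I|$ pairs $(C,\alpha)$, so the whole procedure is polynomial. The payoff is that no analogue of Lemma~\ref{lemma:reach} needs to be re-proved: completeness follows from Lemma~\ref{lemma:extensionworks}, and soundness is automatic because every $I(f,C,\{\alpha\})$ is a bona fide even-$\Delta$-matroid instance handled by the already-established Theorem~\ref{thm:correctness}.

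Your route, by contrast, must redo the ``No'' analysis, and you do not carry out that case analysis; you only outline how it \emph{might} go. That is the genuine gap. There are also technical soft spots you gloss over. When a blossom is contracted, $f$ is replaced by $f\oplus\walk(r)$ before the recursive call, so the covers $M_{f(C)}$ you built the forest against are no longer the covers relative to the current labeling; your claim that ``analogues of Lemmas~\ref{lemma:stem},~\ref{lemma:ItoIprime},~\ref{lemma:IprimetoI} should go through'' must reconcile this. And the closure theorem you cite is stated for \emph{coverable}, not \emph{efficiently coverable}, $\Delta$-matroids; you would still need to argue that covers for the gadget-constructed $D^b$ can be listed in polynomial time from the covers of its parts. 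None of these looks fatal, but together they make your route substantially harder than the paper's one-paragraph reduction, which sidesteps all of them by never modifying Algorithm~\ref{alg:Improve} at all.
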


The rough intuition of the algorithm for improving coverable $\Delta$-matroid
edge CSP instances is the following. When dealing with
general $\Delta$-matroids, augmenting $f$-walks may also end in a
constraint -- let us say that $I$ has the augmenting $f$-walk $q$ that ends in
a constraint $C$. In that case, the parity of $f(D)$ and $(f\oplus p)(D)$ is the same for all
$D\neq C$. If we guess the correct $C$ (in fact, we will try all options) and flip its
parity, we can, under reasonable conditions, find this augmentation via the algorithm for even
$\Delta$-matroids.

Not all $\Delta$-matroids $M$ are coverable.
However, we will show below how to efficiently cover many previously considered classes of
$\Delta$-matroids. These would be co-independent
\cite{feder-delta-matroids-fanout}, compact \cite{Istrate97lookingfor}, local
\cite{Dalmau2003}, linear~\cite{Geelen2003377} and binary \cite{Geelen2003377,Dalmau2003}
$\Delta$-matroids (note that in the case of the last two our representation of
the $\Delta$-matroid is different from~\cite{Geelen2003377}).

\begin{proposition}\label{prop:covers} The classes of co-independent, local,
  compact, linear and binary $\Delta$-matroids are efficiently coverable.
\end{proposition}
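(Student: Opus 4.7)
The plan is to handle each of the four classes (co-independent, local, compact, binary) separately, since they come with different representations, but to follow a common template. For every $\alpha \in M$ the candidate $M_\alpha$ will be the ``even part of $M$ relative to $\alpha$'', namely $M^\alpha_{\text{even}} = \{\beta \in M : |\alpha \symdiff \beta|$ is even$\}$, possibly enlarged by a small number of completion tuples outside $M$ when $M^\alpha_{\text{even}}$ fails on its own to satisfy the even $\Delta$-matroid axiom. Since every even-neighbor step flips exactly two coordinates, any $\beta \in M$ reachable from $\alpha$ lies in $M^\alpha_{\text{even}}$, so condition~2 of Definition~\ref{def:coverable} is automatic once $M_\alpha \supseteq M^\alpha_{\text{even}}$.

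For binary $\Delta$-matroids, represented by a symmetric $GF(2)$ matrix, the even part relative to $\alpha$ is a classical binary even $\Delta$-matroid, so I can take $M_\alpha = M^\alpha_{\text{even}} \subseteq M$ and condition~3 holds vacuously. For co-independent, local, and compact $\Delta$-matroids I will exploit the defining structure of each class: local yields the even exchange axiom after an explicit local-closure step, compact has bounded symmetric-difference diameter which restricts the shape of the reachable set to a set small enough to complete by hand, and co-independence restricts the ``holes'' of $M$ enough to force $M^\alpha_{\text{even}}$ to again be a $\Delta$-matroid. In each case, the polynomial-time construction of $M_\alpha$ reduces to providing a polynomial-time membership oracle; this is supplied by the standard representation of each class, and Algorithm~\ref{alg:Improve} and its extension only need oracle access to $M_\alpha$, so we do not need to enumerate it.

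The main obstacle will be verifying condition~3 for those classes in which $M^\alpha_{\text{even}}$ is not already closed under the even $\Delta$-matroid exchange axiom, forcing a strict enlargement $M_\alpha \supsetneq M^\alpha_{\text{even}}$. For each extra tuple $\gamma \oplus u \oplus v \in M_\alpha \setminus M$ one must then show that both $\gamma \oplus u$ and $\gamma \oplus v$ belong to $M$. I expect this ``splitting'' claim to be the technical heart of the proof, with the precise argument tailored per class: for co-independent, by ruling out singleton holes near $\gamma$; for local, by using the local exchange property defining the class; for compact, by exploiting the fixed maximum diameter to reduce to finitely many configurations checkable by the $\Delta$-matroid exchange axiom for $M$ itself. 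Once the construction of $M_\alpha$ is pinned down, the remaining verifications (parity preservation, polynomial running time of the membership oracle, and the even $\Delta$-matroid axiom on $M_\alpha$) are routine bookkeeping on top of the class's definition.
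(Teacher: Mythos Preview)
Your template of taking $M_\alpha = M^\alpha_{\text{even}} = \{\beta\in M : |\alpha\symdiff\beta|\text{ even}\}$ is exactly what the paper does for local and binary $\Delta$-matroids (it shows both classes are \emph{interference free} and then proves that for any interference-free $M$ the sets $\Even(M)$ and $\Odd(M)$ are themselves even $\Delta$-matroids, so condition~3 is vacuous). So that half of your plan is on target, modulo actually proving the even exchange axiom for $\Even(M)$, which is the substance of the argument.

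However, your treatment of the co-independent case is wrong. You assert that co-independence forces $M^\alpha_{\text{even}}$ to be an even $\Delta$-matroid already, with no enlargement needed. This fails: take $V=\{1,2,3,4\}$ and let $M$ be all tuples of weight $\ne 2$. Then $M$ is a co-independent $\Delta$-matroid (every non-member has weight $2$, and all its neighbours have weight $1$ or $3$), but $\Even(M)=\{0000,1111\}$, and the even exchange axiom fails for this pair at any coordinate. The paper's fix is to go in the opposite direction: for co-independent $M$ it takes $M_\alpha$ to be the \emph{entire} half-hypercube $\{\beta\in\{0,1\}^V:|\alpha\symdiff\beta|\text{ even}\}$, not just $M^\alpha_{\text{even}}$. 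Condition~3 then follows in one line from the definition of co-independence (any $\gamma\notin M$ has all its neighbours in $M$), and the listing is polynomial because co-independence forces $|M|\ge 2^{|V|-1}$.

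Your compact case is also off. In this paper ``compact'' does not mean bounded symmetric-difference diameter; it means $M=\{\alpha:F(\alpha)\in S\}$ for a generalized counting function $F$ and a $2$-gap-free set $S\subseteq\{0,\dots,|V|\}$, following Istrate. The paper's $M_\alpha$ is obtained by replacing $S$ with the arithmetic progression $[\min S,\max S]\cap(F(\alpha)+2\mathbb Z)$, which is $2$-gap-free and yields an even $\Delta$-matroid; condition~3 again follows in one line from $2$-gap-freeness of $S$. Your ``small enough to complete by hand'' plan does not connect to this structure. Finally, note that Definition~\ref{def:efficiently-coverable} requires \emph{listing} $M_\alpha$, not an oracle; the paper's constraints are given as explicit tuple lists, so your appeal to oracle access does not match the model.
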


One part of Proposition~\ref{prop:covers} that is easy to prove is efficient
coverability of linear $\Delta$-matroids: Every linear $\Delta$-matroid is even
because the tuples in the $\Delta$-matroid correspond to regular skew-symmetric
matrices and every skew-symmetric matrix of odd size is singular
(see~\cite{Geelen2003377} for the definition and details). Thus our basic
algorithm already solves edge CSP with linear $\Delta$-matroid constraints
(should we represent our constraints by lists of tuples and not matrices).

For the rest of the proof of this proposition as well as (some of) the
definitions, we refer the reader to Appendix~\ref{app:covers}. 

\subsection{The algorithm}

The following lemma is a straightforward generalization of the result
given in Lemma~\ref{lemma:Improve}.

\begin{lemma}\label{lemma:generalaugpath}
  Let  $f,g$ be valid edge labelings of instance $I$ (with general $\Delta$-matroid constraints) such that $g$ has fewer
  inconsistencies than $f$. Then we can, given $f$ and $g$, compute in
  polynomial time an
  augmenting $f$-walk $p$ (possibly ending in a constraint, in the sense of
  Definition \ref{def:generalwalk}). 
\end{lemma}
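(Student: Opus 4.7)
The plan is to adapt the two-stage argument used in the proof of Lemma~\ref{lemma:Improve} to the general $\Delta$-matroid setting. The only new ingredient is the degenerate branch $w=v$ of the exchange axiom, which is now allowed because the constraints need not be even; this branch is precisely what will let (and sometimes force) the augmenting walk to terminate inside a constraint, matching Definition~\ref{def:generalwalk}.

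In the first stage I would modify $g$ (without ever increasing its number of inconsistencies) until every variable consistent in $f$ is also consistent in $g$. While some $v$ is consistent in $f$ but inconsistent in $g$, there is a unique edge $\{v,C\}\in f\symdiff g$; applying the $\Delta$-matroid axiom to $f(C),g(C)$ at coordinate $v$ yields $w$ in the scope of $C$ with $\{w,C\}\in f\symdiff g$ and $g(C)\oplus v\oplus w\in C$. If $w\neq v$, I replace $g$ by $g\oplus(vCw)$ exactly as in the original proof; if $w=v$, I simply flip the single edge $\{v,C\}$, which turns $v$ consistent and strictly decreases the inconsistency count of $g$. In either case $|f\symdiff g|$ drops by at least one, so the stage terminates in polynomially many steps, after which $g$ still has strictly fewer inconsistencies than $f$. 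Consequently, there exists at least one variable $r$ that is inconsistent in $f$ but consistent in $g$.

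In the second stage I would grow the augmenting walk starting at $r$. Initialize $p=r$ and $f^\star=f$; the current endpoint $v$ of $p$ will always be consistent in $g$ (or equal to $r$) but inconsistent in $f^\star$, so the unique edge $\{v,C\}\in f^\star\symdiff g$ exists. Applying the $\Delta$-matroid axiom at $C$ furnishes $w$ with $f^\star(C)\oplus v\oplus w\in C$ and $\{w,C\}\in f^\star\symdiff g$. If $w=v$, append $vC$ to $p$, flip $\{v,C\}$ in $f^\star$, and output $p$ as an augmenting $f$-walk ending in the constraint $C$. If $w\neq v$ and $w$ is inconsistent in $f$, append $vCw$ and output $p$ as an augmenting $f$-walk ending at $w$. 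Otherwise $w\neq v$ is consistent in $f$ (hence in $g$, and inconsistent in the updated $f^\star$); append $vCw$ and continue from $w$.

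Termination and the polynomial bound are immediate: $|f^\star\symdiff g|$ strictly decreases at each step, and no edge can be reused because once traversed it leaves $f^\star\symdiff g$. Validity of every prefix of $p$ ending at a variable is guaranteed by the choices of $w$ dictated by the $\Delta$-matroid axiom, so $p$ is a genuine $f$-walk with the claimed augmenting property. The hard part, to the extent there is one, is really just bookkeeping around the $w=v$ branch and the resulting asymmetry---walks may now terminate in a constraint, and $|f^\star\symdiff g|$ may drop by $1$ or $2$ per step---but this asymmetry is exactly what the generalized notion of augmenting walk in Definition~\ref{def:generalwalk} is designed to absorb, so I do not anticipate any real obstacle beyond careful case analysis.
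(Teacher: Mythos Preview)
Your proposal is correct and follows essentially the same two-stage argument as the paper's proof. The only cosmetic difference is that the paper explicitly tests whether $g(C)\oplus v\in C$ (respectively $f^\star(C)\oplus v\in C$) before invoking the exchange axiom, whereas you let the axiom itself return $w=v$; both routes are valid and lead to the same conclusion.
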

\begin{proof} 
  We proceed in two stages like in the proof of Lemma~\ref{lemma:Improve}:
  First we modify $g$ so that any variable consistent in $f$ is consistent in
  $g$, then we look for the augmenting $f$-walk in $f\symdiff g$. The only
  difference over Lemma~\ref{lemma:Improve} is that our $g$-walks and $f$-walks
  can now end in a constraint as well as in a variable.
  
First, we repeatedly modify the edge labeling $g$ using the following procedure:
\begin{itemize}
  \item[(1)] Pick a variable $v\in V$ which is consistent in $f$, but not in $g$. (If
  no such $v$ exists then go to the next paragraph).
By the choice of $v$, there exists a unique edge $\{v,C\}\in f \symdiff g$. If $g(C) \oplus v \in C$, replace $g$ with $g \oplus vC$, then go to the beginning and repeat.
Otherwise, pick variable $w\ne v$ in the scope of $C$ such that $\{w,C\}\in f \symdiff g$ and $g(C)\oplus v\oplus w\in C$
(it exists since $C$ is a $\Delta$-matroid and $g(C) \oplus v \not\in C$). Replace $g$ with $g\oplus (vCw)$ and then also go to the beginning and repeat.
\end{itemize}
It can be seen that $g$ remains a valid edge labeling, and the number of inconsistencies in $g$ never increases.
Furthermore, each step decreases $|f \symdiff g|$, so this procedure
must terminate after at most $O(|\calE|)=O(|V|)$ steps.

We now have valid edge labelings $f,g$ such that $f$ has more inconsistencies than $g$,
and variables consistent in $f$ are also consistent in $g$.
In the second stage we will maintain an $f$-walk $p$ and the corresponding
 valid edge labeling $f^\star=f\oplus p$.
To initialize, pick a variable $r\in V$
which is consistent in $g$ but not in $f$,
and set $p=r$ and $f^\star=f$. We then repeatedly apply the following step:
\begin{itemize}
  \item[2.] Let $v$ be the endpoint of $p$. The variable $v$ is consistent in $g$ but not in $f^\star$,
so there must exist a unique edge $\{v,C\}\in f^\star \symdiff g$. If
    $f^\star(C) \oplus v \in C$, then output $pC$ (an augmenting $f$-walk). Otherwise, pick variable $w\ne v$ in the scope of $C$ such that $\{w,C\}\in f^\star \symdiff g$ and $f^\star(C)\oplus v\oplus w\in C$
(it exists since $C$ is a $\Delta$-matroid and $f^\star(C)\oplus v \not\in C$). Append $vCw$ to the end of $p$, and accordingly
replace $f^\star$ with $f^\star\oplus(vCw)$ (which is valid by the choice of $w$).
As a result of this update of $f^\star$, edges $\{v,C\}$ and $\{w,C\}$
    disappear from $f^\star \symdiff g$.

If $w$ is inconsistent in $f$, then output $p$ (which is an augmenting $f$-walk) and terminate.
Otherwise $w$ is consistent in $f$ (and thus in $g$) but not in $f^\star$; in
this case, go to the beginning and repeat.
\end{itemize}
It is easy to verify that the $p$ being produced is an $f$-walk. Also, each step decreases $|f^\star \symdiff g|$ by $2$, so this procedure
must terminate after at most $O(|\calE|)=O(|V|)$ steps and just like in the
case of even $\Delta$-matroids, the only way to terminate is to find an
augmentation. 
\end{proof}

\begin{definition} Let $f$ be a valid edge labeling of instance $I$ with
  coverable $\Delta$-matroid constraints. For a constraint $C \in {\mathcal C}$
  and a $\Delta$-matroid $C' \subseteq C$, we will denote by $I(f, C, C')$ the
  instance obtained from $I$ by replacing the constraint relation of $C$ by
  $C'$ and the constraint relation of each $D
  \in {\mathcal{C}}\setminus\{C\}$ by the even $\Delta$-matroid $D_{f(D)}$ (that
  comes from the covering). 
\end{definition}

Observe that $f$ induces a valid edge labeling for $I(f,C,C)$. Moreover, if we
choose $\alpha\in C$, then $I(f, C, \{\alpha\})$ is an edge CSP instance with even
$\Delta$-matroid constraints and hence we can find its optimal edge labeling by
Algorithm \ref{alg:Improve} in polynomial time.

\begin{lemma}\label{lemma:extensionworks} Let $f$ be a non-optimal valid edge labeling of instance $I$ with coverable $\Delta$-matroid constraints. Then there exist $C \in {\mathcal C}$ and $\alpha \in C$ such that the optimal edge labeling for $I(f, C, \{\alpha\})$ has fewer inconsistencies than $f$.
\end{lemma}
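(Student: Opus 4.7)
The plan is to extract an augmenting $f$-walk via Lemma~\ref{lemma:generalaugpath}, modify $f$ along it to obtain an improved labeling $g$, and then argue that $g$ is already a valid labeling of $I(f,C,\{g(C)\})$ for a suitable constraint $C$. Since $f$ is non-optimal, applying Lemma~\ref{lemma:generalaugpath} to $f$ and any strictly better valid edge labeling returns an augmenting $f$-walk $p$. I will set $g = f \oplus p$; because $p$ is augmenting, $g$ is valid and has either one or two fewer inconsistent variables than $f$ (one if $p$ terminates at a constraint, two if it terminates at a variable).

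The heart of the argument will be the following claim: for every constraint $D$, with the possible exception of the terminal constraint of $p$ when $p$ ends at a constraint, the tuple $g(D)$ is reachable from $f(D)$ in $D$, and hence $g(D) \in D_{f(D)}$ by clause~(2) of Definition~\ref{def:coverable}. To prove this I will inspect the construction of $p$ in Lemma~\ref{lemma:generalaugpath}: at each step the running labeling $f^\star$ is extended by a single flip $vC$ only when $f^\star(C)\oplus v \in C$, and otherwise by a pair flip $vCw$ for which $f^\star(C)\oplus v \notin C$. Whenever $p$ passes through $D$ via a pair flip $vDw$, this last condition is exactly the even-neighbor relation of Definition~\ref{def:even-neighbors} between $f^\star(D)$ and $f^\star(D)\oplus v\oplus w$. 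Concatenating the passages of $p$ through $D$ in the order they occur therefore exhibits a chain of even-neighbor steps from $f(D)$ to $g(D)$, which proves the claim.

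To finish, I will take $C$ to be the terminal constraint of $p$ if $p$ ends at a constraint, and otherwise any constraint of the instance, and set $\alpha = g(C)$, which lies in $C$ because $g$ is valid for $I$. Then $g$ is a valid edge labeling of $I(f,C,\{\alpha\})$: at $C$ the relation is $\{g(C)\}$ and is trivially satisfied, while at every $D\ne C$ the relation is $D_{f(D)}$ and is satisfied by the claim above. Since $g$ has strictly fewer inconsistencies than $f$, so does the optimal labeling of $I(f,C,\{\alpha\})$. The only real obstacle is the even-neighbor tracing, which depends essentially on the ``only pair-flip when no single flip is available'' discipline built into Lemma~\ref{lemma:generalaugpath}; a naively built augmenting walk could easily make pair flips whose intermediate tuples fail the even-neighbor test and thereby leave $D_{f(D)}$.
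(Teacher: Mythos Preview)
Your argument is correct, but it proceeds differently from the paper's. The paper does not look inside the construction of Lemma~\ref{lemma:generalaugpath}; instead it takes \emph{any} augmenting $f$-walk $q$ in $I$ with the extra property that no proper prefix of $q$ is already augmenting, sets $C$ to be the last constraint of $q$ and $\alpha=(f\oplus q)(C)$, and then argues by contradiction: if some shortest prefix $p=p^\star xDy$ first leaves the reachable set at a constraint $D\ne C$, then $(f\oplus p^\star)(D)$ and $(f\oplus p^\star)(D)\oplus x\oplus y$ fail to be even-neighbors, forcing $(f\oplus p^\star)(D)\oplus x\in D$, so $p^\star xD$ is a shorter augmenting walk --- contradicting the choice of $q$.

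Your route instead exploits the specific discipline in the proof of Lemma~\ref{lemma:generalaugpath} (single flip is always attempted before a pair flip), which guarantees directly that every pair flip $vDw$ along the produced walk is an even-neighbor step; reachability of $g(D)$ from $f(D)$ then follows without contradiction. This is slightly more direct, but it is less modular: you are using the proof of the lemma rather than its statement, so your argument would break if that proof were replaced by a different construction. The paper's minimality trick, by contrast, works for any augmenting walk and is therefore independent of how Lemma~\ref{lemma:generalaugpath} is proved; it is also the same mechanism reused later in Lemma~\ref{lemma:getAP}.
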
 

\begin{proof} If $f$ is non-optimal for $I$, then by Lemma
  \ref{lemma:generalaugpath} there exists an augmenting $f$-walk $q$ in $I$.
  Take $q$ such that no proper prefix of $q$ is augmenting (i.e. we cannot end
  early in a constraint). Let $C$ be the last constraint in the walk and
  let $\alpha = (f \oplus q)(C)$. 
  
  We claim that $f \oplus q$ is also a valid edge labeling for
  the instance $I(f, C,\{\alpha\})$.  Since we choose $\alpha$ so that
  $(f\oplus q)(C)=\alpha$, we only need to consider constraints different from
  $C$.  Assume that $p$ is the shortest prefix of $q$ such
  that $(f\oplus p)(D)$ is not reachable from $D_{f(D)}$ for some $D\neq C$ (if
  there is no such thing, then $(f\oplus q)(D)\in D_{f(D)}$ for all $D\neq C$).
  We let $p=p^\star xDy$. Since $(f\oplus p^\star)(D)$ is reachable from
  $f(D)$, but $(f\oplus p^\star)(D)\oplus x\oplus y$ is not, we must have
  $(f\oplus p^\star)(D)\oplus x\in D$. But then $p^\star xD$ is an augmenting
  $f$-walk in $I$ that is shorter than $p$, a contradiction with the choice of
  $p$.
\end{proof}  

\begin{lemma}\label{lemma:getAP} Let $f$ be a valid assignment for the instance $I$ with coverable $\Delta$-matroid constraints and let $C \in {\mathcal C}$ and $\alpha \in C$ be such that there exists a valid edge labeling $g$ for the instance $I(f, C, \{\alpha\})$ with fewer inconsistencies than $f$. Then there exists an augmenting $f$-walk for $I$ and it can be computed in polynomial time given $g$.
\end{lemma}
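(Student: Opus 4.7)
The plan is to mirror the two-stage structure of the proof of Lemma~\ref{lemma:generalaugpath}, adapted to the asymmetric setting in which $f$ is valid for $I$ but $g$ is valid only for the modified instance $I(f, C, \{\alpha\})$. The coverability property of the constraints will let us bridge the gap between the target relation $D$ and the relaxed even $\Delta$-matroid relation $D_{f(D)}$ used in $I(f, C, \{\alpha\})$.

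The first stage cleans up $g$: while there is a variable $v$ that is consistent in $f$ but inconsistent in $g$, and whose unique edge $\{v, D\} \in f \symdiff g$ does not touch $C$, I perform a two-flip. Both $f(D)$ and $g(D)$ lie in the even $\Delta$-matroid $D_{f(D)}$ --- the former by Definition~\ref{def:coverable}(2), the latter by the assumed validity of $g$ --- so the even $\Delta$-matroid property yields $w \ne v$ with $\{w, D\} \in f \symdiff g$ and $g(D) \oplus v \oplus w \in D_{f(D)}$. Replacing $g$ by $g \oplus (vDw)$ keeps $g$ valid for $I(f, C, \{\alpha\})$, shrinks $|f \symdiff g|$ by two, and does not increase $g$'s number of inconsistencies. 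After $O(|V|)$ iterations, every remaining variable that is consistent in $f$ but inconsistent in $g$ has its sole differing edge at $C$; since the inconsistency count of $g$ has not risen, there remains a variable $r$ that is inconsistent in $f$ and consistent in $g$, at which to start the walk.

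The second stage initializes $p = r$, $f^\star = f$ and extends $p$ step by step under the invariants that $p$ is an $f$-walk in $I$, that $f^\star = f \oplus p$ is a valid labeling, and that for each constraint $D$ the tuple $f^\star(D)$ is reachable from $f(D)$ in $D$ in the sense of Definition~\ref{def:even-neighbors}. At a step with endpoint $v$ and unique edge $\{v, D\} \in f^\star \symdiff g$, I first check whether $f^\star(D) \oplus v \in D$; if so, I terminate and output $pD$ as an augmenting $f$-walk. Otherwise, for $D \ne C$ I apply the even $\Delta$-matroid property of $D_{f(D)}$ to $f^\star(D)$ and $g(D)$ at coordinate $v$, obtaining $w$ with $f^\star(D) \oplus v \oplus w \in D_{f(D)}$. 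The crucial use of coverability is that Definition~\ref{def:coverable}(3), combined with $f^\star(D) \oplus v \notin D$, rules out $f^\star(D) \oplus v \oplus w \in D_{f(D)} \setminus D$, forcing $f^\star(D) \oplus v \oplus w \in D$; the reachability invariant is preserved because $f^\star(D)$ and $f^\star(D) \oplus v \oplus w$ are then even-neighbors in $D$. For $D = C$ I instead apply the bare $\Delta$-matroid property of $C$ to $f^\star(C)$ and $\alpha \in C$ at $v$, which directly produces $w \ne v$ with $f^\star(C) \oplus v \oplus w \in C$. After each extension, a case analysis on $w$'s status in $f$ decides whether to continue (when $w$ is consistent in $f$) or to terminate the walk at $w$ as an inconsistent-in-$f$ endpoint, noting that $w \ne r$ because $r$'s only diff edge was consumed in the very first step.

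The point I expect to require the most care is the $D = C$ case: because $\alpha$ need not lie in $C_{f(C)}$, the clean coverability shortcut is unavailable and we rely on the generic $\Delta$-matroid property of $C$, which gives less control over the chosen $w$. The potentially troublesome configuration is when the $w$ returned by that property is a Stage-1-unprocessed ``type B'' variable that is consistent in $f$, inconsistent in $g$, and whose only diff edge is $\{w, C\}$; after the flip such a $w$ would be inconsistent in $f^\star$ but have no remaining diff edge and still be consistent in $f$, permitting neither continuation nor termination. Showing that among the possible choices of $w$ afforded by the $\Delta$-matroid property at $C$ at least one avoids this pathology --- via a careful parity and reachability argument at $C$, together with use of the even $\Delta$-matroid $E_{f(E)}$ at $w$'s other constraint $E$ --- is the main correctness obstacle. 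Termination and polynomial running time are then routine: each iteration strictly decreases $|f^\star \symdiff g|$ by two, so the procedure halts after $O(|V|)$ steps.
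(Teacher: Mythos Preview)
Your proposal correctly identifies the $D = C$ case as the crux, but then leaves it as an unresolved ``main correctness obstacle''. This is a genuine gap: the $\Delta$-matroid axiom applied to $f^\star(C)$ and $\alpha$ at coordinate $v$ supplies only \emph{some} $w$ with $f^\star(C)\oplus v\oplus w\in C$, and you give no argument that among the admissible $w$'s one can always avoid the ``type~B'' configuration you describe. The gesture towards ``a careful parity and reachability argument at $C$, together with use of $E_{f(E)}$'' is not a proof, and I do not see how to complete it along those lines; the relation $C$ is an arbitrary coverable $\Delta$-matroid and $\alpha$ need not lie in $C_{f(C)}$, so neither coverability nor evenness is available at $C$.

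The paper sidesteps the difficulty by a simple but decisive change of viewpoint. Rather than keep $g$ as a labeling of $I(f,C,\{\alpha\})$, it observes that both $f$ and $g$ are valid labelings of the instance $I(f,C,C)$, in which the relation at $C$ is the original $C$ itself. Lemma~\ref{lemma:generalaugpath} then produces, as a black box, an augmenting $f$-walk $q$ in $I(f,C,C)$. Since the constraint at $C$ is the same in $I(f,C,C)$ and in $I$, every transition of $q$ through $C$ is already valid in $I$; only transitions at some $D\neq C$ (valid in $D_{f(D)}$) might fail in $D$. At the first such failure $p=p^\star xDy$ one shows $(f\oplus p^\star)(D)$ is reachable from $f(D)$, whence Definition~\ref{def:coverable}(3) together with $(f\oplus p^\star)(D)\oplus x\oplus y\in D_{f(D)}\setminus D$ forces $(f\oplus p^\star)(D)\oplus x\in D$, so $p^\star xD$ is augmenting in $I$.

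The root cause of your difficulty is precisely your insistence on keeping $g(C)=\alpha$ throughout Stage~1. That constraint forces you to leave unprocessed every variable whose sole diff edge is at $C$, and those leftover ``type~B'' variables are exactly what obstructs Stage~2 at $C$. By relaxing $\{\alpha\}$ to $C$, Stage~1 inside Lemma~\ref{lemma:generalaugpath} is free to process \emph{all} variables consistent in $f$ but not in $g$, so no such obstruction remains; the special role of $C$ is then handled entirely in post-processing, where it is harmless because $C$'s relation is unchanged.
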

\begin{proof} 
We begin by noticing that both $f$ and $g$ are valid edge labelings for the
  instance $I(f, C, C)$. Since $g$ has fewer inconsistencies than $f$,
by Lemma \ref{lemma:generalaugpath} we can compute an $f$-walk $q$ which
is augmenting in $I(f, C, C)$. It is easy to examine $q$ and check if some
proper prefix of $q$ is an augmenting $f$-walk for $I$ (ending in a
constraint). If that happens we are done, so let us assume that
this is not the case. We will show that then $q$ itself must be an augmenting $f$-walk for $I$.

First assume that every prefix of $q$ with integral length is an $f$-walk in
$I$. Then either $q$ is of integral length and we are
done ($q$ is its own prefix), or $q$ ends in a constraint. If it is the latter,
$q$ must end in $C$, since that is the only constraint of $I(f,C,C)$ that is
not forced to be an even $\Delta$-matroid. But the constraint relation $C$ is
the same for both $I$ and $I(f,C,C)$, so flipping the last edge of $q$ is
allowed in $I$.

Let now $p$ be the shortest prefix of $q$ with integral
length which is not an $f$-walk in $I$.
We can write $p = p^\star xDy$ for suitable $x,y,D$. The constraint relation of $D$ must be different in $I$ and
$I(f,C,C)$, so $D\neq C$. By the choice of $p$, for any prefix $r$ of $p^\star$ of integral length
we have $(f\oplus r)(D)\in D$ and moreover the tuple $(f\oplus r)(D)$ is reachable
  from $f(D)$. (If not,
  take the shortest counterexample $r$. Obviously, $r=r^\star uDv$ for some
  variables $u,v$ and a suitable $r^\star$.
Since $(f\oplus r^\star)(D)\in D$ is reachable from
$f(D)$ and $(f\oplus r^\star)(D)\oplus u\oplus v$ is not, we get $(f\oplus
r^\star)(D)\oplus u\in D$ and $r^\star uD$ is augmenting in $I$, which is a
  contradiction.) This holds also for $r=p^\star$, so $(f\oplus p^\star)(D)$ is
  reachable from $f(D)$.

To finish the proof, let $\beta^\star
= (f \oplus p^\star)(D)$ and $\beta=(f\oplus p)(D)$. 
We showed that $\beta^\star\in D$ is reachable from $f(D)$.
Also, $\beta^\star\oplus x\oplus y=\beta\in D_{f(D)}\setminus D$. Then by the
  definition of coverable $\Delta$-matroids we have
$\beta^\star\oplus x\in D$. Thus $p^\star xD$ is an augmenting $f$-walk in $I$
and we are done.

%

It is easy to see that all steps of the proof can be made algorithmic.
\end{proof}

Now the algorithm is very simple to describe. Set some valid edge labeling $f$
and repeat the following procedure.  For all pairs $(C, \alpha)$ with $\alpha
\in C$ and $C \in {\mathcal C}$, call Algorithm \ref{alg:Improve} on the instance
$I(f,C,\{\alpha\})$ (computing the instance $I(f,C,\{\alpha\})$ can be done in polynomial time
because all constraints of $I$ come from an efficiently coverable class). If for some
$(C, \alpha)$ we obtained an edge labeling of $I(f,C,\{\alpha\})$ with fewer inconsistencies, use Lemma
\ref{lemma:getAP} to get an augmenting $f$-walk for $I$.  Otherwise, we have
proved that the original $f$ was optimal.

The algorithm is correct due to Lemma \ref{lemma:extensionworks}. The running
time is
polynomial because there are at most $|I|$ pairs $(C, \alpha)$ such that $\alpha
\in C$ and at most $|I|$ inconsistencies in the initial edge labeling, so the
(polynomial) Algorithm \ref{alg:Improve} gets called at most $|I|^2$ times.

\subsection{Even-zebras are coverable (but not vice versa)}
The paper~\cite{feder-ford-matroids} introduces several classes of zebra
$\Delta$-matroids. For simplicity, we will consider only one of them:
$\calC$-zebras.
\begin{definition}
  Let $\calC$ be a subclass of even $\Delta$-matroids.
  A $\Delta$-matroid $M$ is a $\calC$-zebra if for every $\alpha\in M$ there
  exists an even $\Delta$-matroid $M_\alpha$ in $\calC$ that contains all tuples in $M$ of
  the same parity as $\alpha$ and such that for every $\beta\in M$ and every
  $u,v\in V$ such that $\beta\oplus u\oplus v\in M_\alpha\setminus M$ we have
  $\beta\oplus v,\beta\oplus u\in M$.
\end{definition}

In~\cite{feder-ford-matroids}, the authors show a result very much similar to
Theorem~\ref{thm:extension2}, but for $\calC$-zebras: In our language, the
result states that if one can find optimal labelings for $\EdgeCSP(\calC)$ in
polynomial time, then the same is true for the edge CSP with $\calC$-zebra
constraints. 
In the rest of this section, we show that coverable
$\Delta$-matroids properly contain the class of $\calC$-zebras with $\calC$
equal to all even $\Delta$-matroids (this is the largest $\calC$ allowed in the
definition of $\calC$-zebras) -- we will call this class \emph{even-zebras} for short.
We need to assume, just like in~\cite{feder-ford-matroids}, that we are given the zebra 
representations on input.

\begin{observation}
  Let $M$ be an even-zebra. Then $M$ is coverable.
\end{observation}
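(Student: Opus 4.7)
The plan is to reuse, for each $\alpha\in M$, the very same even $\Delta$-matroid $M_\alpha$ provided by the even-zebra hypothesis, and verify the three clauses of Definition~\ref{def:coverable} one by one.

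Clause (1) is immediate: $M_\alpha$ is an even $\Delta$-matroid by definition of even-zebra. For clause (2) the key observation is a parity invariant along chains of even-neighbors. If $\beta,\beta'\in M$ are even-neighbors, then by Definition~\ref{def:even-neighbors} there are \emph{distinct} variables $u,v$ with $\beta'=\beta\oplus u\oplus v$, so $\beta$ and $\beta'$ have the same parity. A straightforward induction on the length of the chain in Definition~\ref{def:even-neighbors} then shows that every $\gamma\in M$ reachable from $\alpha$ has the same parity as $\alpha$. By the defining property of an even-zebra, $M_\alpha$ contains all tuples of $M$ of the same parity as $\alpha$, so every such $\gamma$ belongs to $M_\alpha$; this gives clause (2).

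Clause (3) is essentially a restriction of the even-zebra swap property to reachable $\gamma$. Concretely, suppose $\gamma\in M$ is reachable from $\alpha$ and $\gamma\oplus u\oplus v\in M_\alpha\setminus M$. Applying the even-zebra property with $\beta:=\gamma$ yields $\gamma\oplus u,\gamma\oplus v\in M$, as required.

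There is no real obstacle here; the only slightly subtle point is the parity argument in clause (2), which relies critically on the fact that even-neighbors differ in \emph{two} coordinates (so the parity is preserved), together with the fact that in an even-zebra the set $M_\alpha$ is chosen to encompass exactly the parity class of $\alpha$ inside $M$. Together, these give the three required properties of a covering, and so $M$ is coverable.
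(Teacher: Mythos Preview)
Your proof is correct and follows exactly the same approach as the paper: reuse the zebra $M_\alpha$, observe that even-neighbors preserve parity so that reachability stays within the parity class of $\alpha$, and note that clause~(3) is a special case of the zebra swap property (which is quantified over all $\beta\in M$, not just reachable ones). The paper's proof is simply a two-line version of what you wrote.
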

\begin{proof}
  Given $\alpha\in M$, we can easily verify that the $\Delta$-matroids $M_\alpha$
  satisfy all conditions of the definition of coverable $\Delta$-matroids:
  Everything reachable from $\alpha$ has the same parity as $\alpha$ and the
  last condition from the definition of even-zebras is identical to coverability.
\end{proof}

Moreover, it turns out that the inclusion is proper: There exists a
$\Delta$-matroid that is coverable, but is not an even-zebra. 

Let us take $M=\{(0,0,0)$, $(1,1,0)$, $(1,0,1)$, $(0,1,1)$, $(1,1,1)\}$ and consider $N=M\times
M$. It is easy to verify that $M$ is a $\Delta$-matroid that is an even-zebra with
the sets $M_\alpha$ equal to $\{(0,0,0)$, $(1,1,0)$, $(1,0,1)$, $(0,1,1)\}$ and
$\{(1,1,1)\}$, respectively, and thus $M$ is 
coverable. 

Since coverable $\Delta$-matroids are closed under direct products,
$N$ is also coverable.
However, $N$ is not an even-zebra: Assume that there exists a set $N_\alpha$
that contains all tuples of $N$ of odd parity and satisfies the zebra
condition. Then  the two tuples $(1,1,1,0,0,0)$ and
$(1,1,0,1,1,1)$ of $N$ belong to $N_\alpha$. Since $N_\alpha$ is an even
$\Delta$-matroid, switching in the third coordinate yields that $N$
contains the tuple $(1,1,0,1,0,0)$ (this is without loss of generality; the other
possibilities are all symmetric). This tuple is not a member of $N$, yet we got
it from $(1,1,1,0,0,0)\in N$ by switching the third and fourth coordinate. So in
order for the zebra property to hold, we need $(1,1,1,1,0,0)\in N$, a contradiction.

The above example also shows that even-zebras, unlike coverable
$\Delta$-matroids, are not closed under direct
products.

\section*{APPENDIX}
\appendix
\section{Non matching realizable even $\Delta$-matroid}
\label{sec:appendix}

Here we prove Proposition \ref{prop:notmatchrel} which says that not every even
$\Delta$-matroid of arity six is matching realizable. We do it by first showing
that matching realizable even $\Delta$-matroids satisfy certain decomposition
property and then we exhibit an even $\Delta$-matroid of arity six which does
not possess this property and thus is not matching realizable.

\begin{lemma}\label{lemma:pairs} Let $M$ be a matching realizable even $\Delta$-matroid and let $f,g \in M$. Then $f \symdiff g$ can be partitioned into pairs of variables $P_1, \dots P_k$ such that $f \oplus P_i \in M$ and $g \oplus P_i \in M$ for every $i = 1 \dots k$.
\end{lemma}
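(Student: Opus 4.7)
The plan is to exploit the matching realizability of $M$ by writing $M = M(G, v_1,\dots,v_a)$ and working with perfect matchings in the graphs $G_f$ and $G_g$. Let $N_f$ be a perfect matching of $G_f$ and $N_g$ a perfect matching of $G_g$ (both exist because $f,g\in M$). View $N_f, N_g$ as edge sets in $G$ and consider the subgraph $H = N_f \symdiff N_g$. Every vertex of $G$ has degree at most two in $H$, so the components of $H$ are paths and (even) cycles whose edges alternate between $N_f$ and $N_g$.

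The key observation is that a vertex $v\in V(G)$ has degree one in $H$ if and only if it is matched by exactly one of $N_f, N_g$, which happens precisely when $v$ lies in $V(G_f)\symdiff V(G_g)$. By the definition of $G_f$ and $G_g$, this symmetric difference equals $\{v_i : i\in f\symdiff g\}$. Thus the endpoints of the paths of $H$ are in bijection with $f\symdiff g$. I define the pairs $P_1,\dots,P_k$ to be the pairs of endpoints of the paths of $H$; since each vertex of $V(G_f)\symdiff V(G_g)$ belongs to exactly one path, this yields a partition of $f\symdiff g$ into pairs.

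It remains to show that for each such pair $P_i$, corresponding to a path $P$ in $H$ with endpoints $v_a, v_b$, both $f\oplus P_i$ and $g\oplus P_i$ lie in $M$. The natural candidates for perfect matchings are $N_f \symdiff E(P)$ and $N_g \symdiff E(P)$. To verify this I would break into three cases depending on whether the endpoints of $P$ belong to $V(G_f)\setminus V(G_g)$, to $V(G_g)\setminus V(G_f)$, or one to each. In every case, flipping the edges along the alternating path $P$ cleanly ``re-routes'' the matching: the internal vertices of $P$ stay matched (to different partners), while the endpoints swap between matched and unmatched exactly according to the flipping of the bits $f_a, f_b$ (respectively $g_a, g_b$). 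This matches the change in vertex sets from $V(G_f)$ to $V(G_{f\oplus P_i})$ and from $V(G_g)$ to $V(G_{g\oplus P_i})$, giving perfect matchings witnessing $f\oplus P_i\in M$ and $g\oplus P_i\in M$.

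The main obstacle is simply executing the case analysis carefully, and in particular checking the parity of $|E(P)|$ in each case (odd length when both endpoints lie on the same side, even length in the mixed case) to confirm that $N_f\symdiff E(P)$ and $N_g\symdiff E(P)$ cover the correct vertex sets. Once one case is done, the others follow by symmetric arguments, so the proof is essentially a bookkeeping exercise on augmenting paths — the standard matching-theoretic tool behind the whole matching-realizability framework.
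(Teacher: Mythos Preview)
Your proposal is correct and follows essentially the same approach as the paper: both take the symmetric difference of the two perfect matchings realizing $f$ and $g$, observe that its components are alternating paths and cycles with the path endpoints exactly at the distinguished vertices indexed by $f\symdiff g$, use these paths to define the pairing, and then augment along each path to witness $f\oplus P_i,\,g\oplus P_i\in M$. The paper's write-up is simply terser, omitting the endpoint case analysis you sketch.
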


\begin{proof} Fix a graph $G=(N, E)$ that realizes $M$ and let  $V = \{v_1,
  \dots, v_n\} \subset N$ be the nodes corresponding to variables of $M$.
  Let $E_f$ and $E_g$ be the edge sets from matchings that correspond to tuples
  $f$ and $g$. Now consider the graph $G' = (N, E_f \symdiff E_g)$ (symmetric
  difference of matchings). Since both $E_f$ and $E_g$ cover each node of $N
  \setminus V$, the degree of all such nodes in $G'$ will be zero or two.
  Similarly, the degrees of nodes in $\left(V \setminus (f \symdiff
  g)\right)$ are either zero or two leaving $f \symdiff g$ as the set of
  nodes of odd degree, namely of degree one. Thus $G'$ is a union of induced
  cycles and paths, where the paths pair up the nodes in $f \symdiff g$. Let
  us use this pairing as $P_1$, \dots, $P_k$.

Each such path is a subset of $E$ and induces an alternating path with respect to both $E_f$ and $E_g$. After altering the matchings accordingly, we obtain new matchings that witness $f \oplus P_i \in M$ and $g \oplus P_i \in M$ for every $i$.
\end{proof}

\begin{lemma} There is an even $\Delta$-matroid of arity 6 which does not have the property from Lemma \ref{lemma:pairs}.
\end{lemma}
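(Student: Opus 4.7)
The plan is to exhibit an explicit even $\Delta$-matroid $M\subseteq\{0,1\}^6$ together with two tuples $f,g\in M$ such that no partition of $f\symdiff g$ into pairs witnesses the property of Lemma~\ref{lemma:pairs}. The approach is to first construct a small example in four coordinates and then pad trivially to six.

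The candidate in four coordinates is
$$M_0=\{0000,\,1100,\,1010,\,1001,\,1111\}\subseteq\{0,1\}^4,$$
i.e.~the subsets $\emptyset$, $\{1,i\}$ for $i\in\{2,3,4\}$, and $\{1,2,3,4\}$, so that coordinate $1$ plays a distinguished role. All five tuples have even weight. I would verify the exchange axiom by a direct pass over the ten unordered pairs of tuples: coordinate $1$ acts as a universal swap partner for pairs involving $0000$ or $1111$, and the $S_3$-symmetry permuting $\{2,3,4\}$ collapses the remaining cases. Taking the direct product $M=M_0\times\{00,11\}\subseteq\{0,1\}^6$ then preserves both constant parity and the exchange axiom, so $M$ is an even $\Delta$-matroid of arity six.

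I would then demonstrate the failure of the pairing property on $f=000000$ and $g=111100$, whose symmetric difference is $\{1,2,3,4\}$ and which admits exactly three candidate two-pair partitions. For any such partition $\{P_1,P_2\}$, the requirement $f\oplus P_i\in M$ forces the weight-2 tuple supported on $P_i$ (with zeros on coordinates $5,6$) to lie in $M_0$. The weight-2 tuples of $M_0$ are precisely $1100$, $1010$, and $1001$, each containing coordinate $1$ in its support, so every $P_i$ would have to contain coordinate~$1$. This is impossible for a partition of $\{1,2,3,4\}$, where coordinate $1$ belongs to exactly one $P_i$. Hence no valid pairing exists, and the proposition follows by contraposition of Lemma~\ref{lemma:pairs}. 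The only nontrivial ingredient is the finite axiom check for $M_0$, which is purely mechanical bookkeeping and the main (but routine) obstacle.
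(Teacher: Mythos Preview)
Your proposed $M_0=\{0000,1100,1010,1001,1111\}$ is \emph{not} a $\Delta$-matroid. The exchange axiom is asymmetric in $f$ and $g$: for every ordered pair $(f,g)$ and every $v\in f\symdiff g$ you need some $u$ with $f\oplus\{u,v\}\in M_0$. Take $f=1111$, $g=0000$, $v=1$. Then $f\oplus\{1,u\}$ for $u\in\{2,3,4\}$ is $0011$, $0101$, or $0110$, none of which lie in $M_0$. Your ``coordinate $1$ as universal swap partner'' heuristic works from $0000$ but fails from $1111$, and checking unordered pairs is not enough.

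More seriously, the padding strategy cannot be repaired. On four coordinates any even $\Delta$-matroid automatically has the pairing property: the only nontrivial case is $f=0000$, $g=1111$ (up to translation), and then the weight-$2$ tuples of $M_0$ form a graph $G$ on $\{1,2,3,4\}$; the exchange axiom from $0000$ forces every vertex of $G$ to have positive degree, while the axiom from $1111$ at coordinate $v$ forces some edge of $G$ to avoid $v$. A graph on four vertices with no perfect matching is either a star $K_{1,3}$ or has an isolated vertex, and both are ruled out. So $G$ contains a perfect matching, which is exactly a valid pairing $P_1,P_2$.

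The paper instead writes down an explicit $18$-element even $\Delta$-matroid in $\{0,1\}^6$ (verified by exhaustive check) and shows that for $f=000000$, $g=111111$ the pairs $P$ with both $f\oplus P\in M$ and $g\oplus P\in M$ are $\{1,4\},\{2,3\},\{3,4\},\{3,5\},\{4,6\}$, no three of which partition $\{1,\dots,6\}$. A genuinely six-dimensional construction is needed.
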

\begin{proof}
Let us consider the set $M$ with the following tuples:

\begin{center}
\begin{tabular}{cccc}
  000000&100100&011011&111111\\
	&011000&100111\\
	&001100&110011\\
	&001010&110101\\
	&000101&111010\\
	&001001&001111\\
	&010001&101101\\
	&100010&101011\\
	&      &111100\\
\end{tabular}
\end{center}
With enough patience or with computer aid one can verify that this is indeed an
even $\Delta$-matroid. However, there is no pairing satisfying the conclusion of Lemma~\ref{lemma:pairs} for
tuples $f = 000000$, and $g = 111111$. In fact the set of
pairs $P$ for which both $f \oplus P \in M$ and $g \oplus P \in M$ is $\{v_1,
v_4\}$, $\{v_2, v_3\}$, $\{v_3, v_4\}$, $\{v_3, v_5\}$, $\{v_4, v_6\}$ (see the
first five lines in the middle of the table above) but no three of these form a partition on $\{v_1, \dots, v_6\}$.
\end{proof}

\section{Classes of $\Delta$-matroids that are efficiently coverable}
\label{app:covers}

As we promised, here we will show that all classes of $\Delta$-matroids that were previously known to be tractable are efficiently coverable.

\subsection{Co-independent $\Delta$-matroids}

\begin{definition} A $\Delta$-matroid $M$ is \emph{co-independent} if whenever $\alpha \not\in M$, then $\alpha \oplus u \in M$ for every~$u$ in the scope of $M$. 
\end{definition}

Let $V$ be the set of variables of $M$. In this case we choose $M_\alpha$ to be
the $\Delta$-matroid that contains all members of $\{0,1\}^V$ of the same
parity as $\alpha$. This trivially satisfies the first two conditions in the
definition of a $\Delta$-matroid. To see the third condition, observe that
whenever $\gamma\in M_\alpha\setminus M$, the co-independence of $M$ gives us
that $\gamma\oplus u\in M$ for every $u\in V$, so we are done. 

Moreover, each set $M_\alpha$ is roughly as large as $M$ itself: A
straightforward double counting argument gives us that $M\geq 2^{|V|-1}$, so
listing $M_\alpha$ can be done in time linear in $|M|$. 

\subsection{Compact $\Delta$-matroids}

We present the definition of compact $\Delta$-matroids in an alternative form compared to \cite{Istrate97lookingfor}. 

\begin{definition} Function $F \colon \{0,1\}^V \to \{0,\dots, |V|\}$ is called a \emph{generalized counting function} (gc-function) if
\begin{enumerate}
\item for each $\alpha \in \{0,1\}^V$ and $v \in V$ we have $F(\alpha \oplus v) = F(\alpha) \pm 1$ and;
\item if $F(\alpha) > F(\beta)$ for some $\alpha, \beta \in \{0,1\}^V$, then there exist $u,v \in \alpha \symdiff \beta$ such that $F(\alpha \oplus u) = F(\alpha)-1$ and $F(\beta \oplus v) = F(\beta) + 1$
\end{enumerate}
\end{definition}

An example of such function is the function which simply counts the number of ones in a tuple.

\begin{definition} We say that a $S \subset \{0, 1, \dots n\}$ is \emph{$2$-gap free} if whenever $x \not\in S$ and $\min S < x < \max S$, then $x+1, x-1 \in S$.
 A set of tuples $M$ is \emph{compact-like} if $\alpha \in M$ if and only if $F(\alpha) \in S$ for some gc-function $F$ and a $2$-gap free subset $S$ of $\{0, 1, \dots |V|\}$. 
\end{definition}

The difference to the presentation in \cite{Istrate97lookingfor} is that they
give an explicit set of possible gc-functions (without using the term
gc-function). However, we decided for more brevity and omit the description of
the set. 

\begin{lemma}\label{lemma:compact} Each compact-like set of tuples $M$ is a $\Delta$-matroid.
\end{lemma}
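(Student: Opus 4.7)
The plan is to verify the $\Delta$-matroid exchange axiom directly: given $f,g \in M$ and $v \in f \symdiff g$, produce $u \in f \symdiff g$ (possibly $u = v$) with $f \oplus \{u,v\} \in M$. Setting $n = F(f)$, note $F(f),F(g) \in S$ and $F(f \oplus v) = n \pm 1$ by property~(1). The easy case is $F(f \oplus v) \in S$: then $u = v$ works (since $\{u,v\} = \{v\}$ yields $f \oplus v \in M$). So the whole proof reduces to the case $F(f \oplus v) \notin S$.

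By the symmetry $F \leftrightsquigarrow |V| - F$ (which takes gc-functions to gc-functions and $2$-gap free sets to $2$-gap free sets, while preserving the membership $\alpha \in M$), I may assume without loss of generality that $F(f \oplus v) = n+1$. Since $n \in S$ and $n+1 \notin S$, the $2$-gap free condition forces one of two situations: either (A) $n+2 \in S$ (equivalently, $n+1 < \max S$), or (B) $n = \max S$.

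In case (A), I claim that \emph{any} $u \in (f \symdiff g) \setminus \{v\}$ works. Indeed, property~(1) gives $F(f \oplus u \oplus v) = (n+1) \pm 1 \in \{n, n+2\} \subseteq S$, so $f \oplus u \oplus v \in M$. It remains to show such a $u$ exists: if $f \symdiff g = \{v\}$ then $g = f \oplus v$, forcing $F(g) = n+1 \notin S$ and contradicting $g \in M$.

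In case (B), I use property~(2) of the gc-function applied to the pair $(f \oplus v, g)$. A direct check shows $(f \oplus v) \symdiff g = (f \symdiff g) \setminus \{v\}$. Since $F(g) \leq \max S = n < n+1 = F(f \oplus v)$, property~(2) yields some $u \in (f \symdiff g) \setminus \{v\}$ with $F((f \oplus v) \oplus u) = F(f \oplus v) - 1 = n \in S$, hence $f \oplus u \oplus v \in M$, as desired.

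The only real obstacle is the case analysis around $2$-gap freeness, which is handled by isolating the two possible positions of the "forbidden" value $n+1$ relative to $\max S$; the duality argument kills the symmetric case $F(f \oplus v) = n-1$ so there is no doubling of work.
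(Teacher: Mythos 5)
Your proof is correct and takes essentially the same route as the paper's: dispose of the case where a single flip already lands in $S$, and otherwise apply property~(2) of the gc-function to the pair consisting of the once-flipped tuple and the other member of $M$, then use $2$-gap-freeness to certify that the resulting value lies in $S$. The differences are only organizational --- you resolve the symmetric case via the explicit duality $F \mapsto |V|-F$ instead of ``handled analogously,'' and you isolate a subcase ($n,n+2\in S$) in which any second flip works --- but the key ingredients and their use coincide with the paper's argument.
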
 
\begin{proof}
  Let the gc-function $F$ and the 2-gap free set $S$ witness that $M$ is
  compact-like. Take $\alpha, \beta \in M$ and $u \in \alpha \symdiff \beta$. If 
  $F(\alpha \oplus u) \in S$, then $\alpha \oplus u \in M$ and we are done. Thus we
  have $F(\alpha \oplus u) \neq F(\beta)$. We need to find a $v\in
  \alpha\symdiff\beta$ such that
  $F(\alpha\oplus u\oplus v)\in S$.

Let us assume $F(\alpha \oplus u) > F(\beta)$. Since $F$ is a gc-function we can 
  find $v \in (\alpha \oplus u) \symdiff \beta$ (note that
$u \neq v$)  such that $F(\alpha \oplus u
\oplus v) = F(\alpha \oplus u) -1$. Now we have either
$F(\alpha)=F(\alpha\oplus u\oplus v)\in S$, or $F(\alpha)>F(\alpha\oplus u)>F(\alpha\oplus
u\oplus v)\geq F(\beta)$, which again means $F(\alpha\oplus u\oplus v)\in S$ because
$S$ does not have 2-gaps.

The case when $F(\alpha \oplus u) < F(\beta)$ is handled analogously.
\end{proof}

It turns out that any practical class of compact-like $\Delta$-matroids is
efficiently coverable:
\begin{lemma}
  Assume $\mathcal M$ is a class of compact-like $\Delta$-matroids where the
  description of each $M\in
  \mathcal M$ includes a set $S_M$ (given by a list of elements) and a function $F_M$ witnessing that $M$ is
  compact-like and there is a polynomial $p$
  such that the time to compute $F_M(\alpha)$ is at most $p(|M|)$. Then $\mathcal M$
  is efficiently coverable.
\end{lemma}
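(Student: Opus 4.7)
The plan is to define $M_\alpha$ as a ``parity slice'' of $M$ cut out by a contiguous range of $F_M$-values lying inside $S_M$. Concretely, set $k = F_M(\alpha)$, compute the largest interval $[\ell, r]$ containing $k$ such that every integer $j \in [\ell, r]$ with $j \equiv k \pmod 2$ lies in $S_M$, and put $T_\alpha = \{j : \ell \le j \le r,\ j \equiv k \pmod 2\}$ and $M_\alpha = \{\beta \in M : F_M(\beta) \in T_\alpha\}$. Because $M_\alpha \subseteq M$ by construction, condition (3) of Definition~\ref{def:coverable} is vacuous: $M_\alpha \setminus M = \emptyset$, so the hypothesis of that condition is never triggered.

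First I would verify that $M_\alpha$ is an even $\Delta$-matroid. The set $T_\alpha$ is 2-gap free inside $\{0, \ldots, |V|\}$, since every integer of opposite parity in $[\ell, r]$ has both neighbors in $T_\alpha$; hence $M_\alpha$ is a $\Delta$-matroid by Lemma~\ref{lemma:compact}. Evenness follows from the gc-function condition $F_M(\gamma \oplus u) = F_M(\gamma) \pm 1$: this forces $F_M(\beta) - F_M(\alpha)$ to have the same parity as the Hamming distance between $\beta$ and $\alpha$, which equals the parity of $|\beta| - |\alpha|$ (number of ones); so all tuples in $M_\alpha$ share the 1s-parity of $\alpha$.

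The key step --- and the main obstacle --- is showing that every $\beta \in M$ reachable from $\alpha$ lies in $M_\alpha$. I would induct on the length of an even-neighbor chain $\alpha = \beta_0, \beta_1, \ldots, \beta_n = \beta$. For the inductive step, suppose $\beta_{i+1} = \beta_i \oplus u \oplus v$ with $\beta_i \oplus u \notin M$. Then $F_M(\beta_i \oplus u) = F_M(\beta_i) \pm 1 \notin S_M$, so $F_M(\beta_{i+1}) \in \{F_M(\beta_i),\ F_M(\beta_i) \pm 2\}$. In the nontrivial case $F_M(\beta_{i+1}) = F_M(\beta_i) + 2$ (the $-2$ case is symmetric), we have $F_M(\beta_{i+1}) \in S_M$ because $\beta_{i+1} \in M$, and $F_M(\beta_i) \in T_\alpha$ by induction; by the maximality of $[\ell, r]$, the value $F_M(\beta_i) + 2$ must also lie in $T_\alpha$, completing the induction. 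The delicate point here is that $S_M$ being 2-gap free rules out ``double skips'', which is exactly what makes the contiguous-same-parity interval the right object.

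For efficiency: computing $F_M(\alpha)$ costs $p(|M|)$ by hypothesis; determining $[\ell, r]$ scans the list $S_M$ (of size at most $|V| + 1$) in $O(|V|)$ time; and listing $M_\alpha$ amounts to iterating over the input list of tuples of $M$ and evaluating $F_M$ on each, giving total time $O(|M| \cdot p(|M|))$, which is polynomial in $|M|$ as required by Definition~\ref{def:efficiently-coverable}.
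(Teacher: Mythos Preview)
Your proof is correct but takes a genuinely different route from the paper. The paper defines $M_\alpha$ as the set of \emph{all} tuples $\beta$ (not just those in $M$) with $F_M(\beta)$ of the same parity as $F_M(\alpha)$ and lying in the full range $[\min S_M,\max S_M]$. This $M_\alpha$ may properly contain $M$ restricted to that parity, so condition~(2) of Definition~\ref{def:coverable} is immediate (every same-parity tuple of $M$ is covered), while condition~(3) requires work: for $\gamma\in M_\alpha\setminus M$ one uses the 2-gap-free property of $S_M$ to conclude $F_M(\gamma)\pm 1\in S_M$, hence $\gamma\oplus v\in M$ for every $v$. The paper then needs a small extra argument to enumerate the tuples in $M_\alpha\setminus M$ by flipping single coordinates of tuples in $M$.

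You instead take $M_\alpha\subseteq M$, namely the preimage of the maximal same-parity run $T_\alpha\subseteq S_M$ through $F_M(\alpha)$. This makes condition~(3) vacuous and enumeration trivial (just filter $M$), but pushes the work into condition~(2): you must argue inductively that an even-neighbor step cannot escape $T_\alpha$, which you do cleanly via the maximality of $T_\alpha$ (if $F_M(\beta_i)=r=\max T_\alpha$ then $r+2\notin S_M$, contradicting $\beta_{i+1}\in M$). Both arguments ultimately hinge on the 2-gap-free property, but yours localizes the cover more tightly and avoids ever producing tuples outside $M$; the paper's version trades a larger $M_\alpha$ for a one-line verification of reachability.
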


\begin{proof}
  Given $M\in \mathcal M$ and $\alpha\in M$, we let $M_\alpha$ be the
  compact-like even $\Delta$-matroid given by the function $F_M$ and the set
  $U=[\min S_M,\max S_M]\cap \{F_M(\alpha)+2k\colon k\in \zet\}$. 
  It is an easy observation that $\alpha,\beta\in\{0,1\}^V$ have the same parity if
  and only if $F_M(\alpha)$ and $F_M(\beta)$ have the same parity, so all
  members of $M_\alpha$ have the same parity. In particular $M_\alpha$ contains
  all $\beta\in M$ of the same parity as $\alpha$. Moreover, the set $U$ is 2-gap
  free, so $M_\alpha$ is an even $\Delta$-matroid.

  Let now $\gamma\in M_\alpha\setminus M$. Then $F_M(\gamma)\not \in S_M$.
  Since $S_M$ is 2-gap free and $F_M(\gamma)$ is not equal to $\min S_M$, nor $\max S_M$, it follows that both $F_M(\gamma)+1$ and
  $F_M(\gamma)-1$ lie in $S_M$. Therefore, $\gamma\oplus v\in M$ for any $v\in
  V$ by the first property of gc-functions.

  It remains to show how to construct $M_\alpha$ in polynomial time. We begin
  by adding to $M_\alpha$ all tuples of $M$ of the same parity as $\alpha$.
  Then we go through all tuples $\beta\in M$ of parity different from $\alpha$
  and for each such $\beta$ we calculate $F_M(\beta\oplus v)$ for all $v\in V$. If
  $\min S<F(\beta\oplus v)<\max S$, we add $\beta\oplus v$ to $M_\alpha$. 
  By the argument in the previous paragraph, this procedure will eventually
  find and add to $M_\alpha$ all tuples $\gamma$ such that 
  $F_M(\gamma)\in U\setminus S_M$.
\end{proof}

\subsection{Local and binary $\Delta$-matroids}

We will avoid giving the definitions of local and binary $\Delta$-matroids.
Instead, we will rely on a result from \cite{Dalmau2003} saying that both of
these classes avoid a certain substructure. This will be enough to show that
both binary and local $\Delta$-matroids are efficiently coverable.
\begin{definition}
  Let $M,N$ be two $\Delta$-matroids where $M\subset \{0,1\}^V$. We say that $M$ contains $N$ as a
  \emph{minor} if we can get $N$ from $M$ by a sequence of the following
  operations: Choose $c\in\{0,1\}$ and $v\in V$ and take the $\Delta$-matroid we obtain by fixing the
    value at $v$ to $c$ and deleting $v$:
    \begin{align*}
      M_{v=c}=&\{\beta\in \{0,1\}^{V\setminus\{v\}}\colon 
      \exists \alpha \in M,\,
	\alpha(v)=c
	\wedge \forall u\neq v,\, \alpha(u)=\beta(u)\}.
    \end{align*}
\end{definition}

\begin{definition}
The interference $\Delta$-matroid is the ternary $\Delta$-matroid given by the
tuples $\{(0,0,0)$, $(1,1,0)$, $(1,0,1)$, $(0,1,1)$, $(1,1,1)\}.$
We say that a
$\Delta$-matroid $M$ is \emph{interference free} if it does not contain any minor 
isomorphic (via renaming variables or flipping the values 0 and 1 of some
variables) to the interference $\Delta$-matroid.
\end{definition}

\begin{lemma} \label{lemma:OddNeighbors} If $M$ is an interference-free
  $\Delta$-matroid and $\alpha,\beta\in M$ are
such that $|\alpha\symdiff \beta|$ is odd, then we can find $v\in \alpha
\symdiff \beta$ so that $\alpha\oplus v\in M$.
\end{lemma}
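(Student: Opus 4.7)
My plan is to prove this by induction on $|\alpha \symdiff \beta|$, obtaining an interference minor as the contradiction in the inductive step. The base case $|\alpha \symdiff \beta| = 1$ is immediate: if $u$ is the unique element of $\alpha \symdiff \beta$, then $\alpha \oplus u = \beta \in M$.

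For the inductive step, suppose $|\alpha \symdiff \beta| = 2k+1 \geq 3$ and, aiming for a contradiction, that $\alpha \oplus v \notin M$ for every $v \in \alpha \symdiff \beta$. Pick any $u \in \alpha \symdiff \beta$; the $\Delta$-matroid exchange axiom at $u$ produces some $v \in \alpha \symdiff \beta$ with $\alpha \oplus \{u,v\} \in M$. Since $\alpha \oplus u \notin M$ we must have $v \neq u$, so $\alpha' := \alpha \oplus u \oplus v \in M$ with $|\alpha' \symdiff \beta| = 2k-1$ still odd. Applying the induction hypothesis to $\alpha'$ and $\beta$, I obtain $w \in \alpha' \symdiff \beta \subseteq (\alpha \symdiff \beta) \setminus \{u, v\}$ with $\alpha' \oplus w = \alpha \oplus u \oplus v \oplus w \in M$.

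Next I pass to the minor $M'$ on $\{u,v,w\}$ obtained by fixing every other variable to its $\alpha$-value, and (using the flipping freedom permitted in the definition of interference-free) flipping any of $u,v,w$ whose $\alpha$-value is $1$. Under this normalization $\alpha$ contributes $(0,0,0) \in M'$ and $\alpha \oplus u \oplus v \oplus w$ contributes $(1,1,1) \in M'$, while the contradiction hypothesis forces $(1,0,0),(0,1,0),(0,0,1) \notin M'$.

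To finish I show that $M'$ must then coincide with the interference $\Delta$-matroid. Applying the exchange axiom to $(1,1,1)$ and $(0,0,0)$ in each of the three coordinates, and using that all three weight-$1$ tuples are excluded, each application must deliver one of the weight-$2$ tuples; a short case check confirms that in fact all three of $(1,1,0),(1,0,1),(0,1,1)$ must lie in $M'$. Since these five tuples together with the three excluded ones partition $\{0,1\}^3$, we conclude $M' = \{(0,0,0),(1,1,0),(1,0,1),(0,1,1),(1,1,1)\}$, exhibiting the interference matroid as a minor of $M$ and contradicting interference-freeness. The only mildly delicate point is this final case analysis verifying the forced presence of the weight-$2$ tuples; everything else is bookkeeping around the exchange axiom.
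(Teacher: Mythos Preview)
Your proof is correct and follows essentially the same route as the paper: reduce to a three-variable minor containing $(0,0,0)$ and $(1,1,1)$ but none of the weight-one tuples, then force all weight-two tuples via the exchange axiom to exhibit the interference matroid. The only cosmetic differences are that the paper packages the reduction as ``pick $\beta'$ at minimal odd distance and apply the exchange axiom $k$ times'' rather than explicit induction, and bases its minor at $\alpha\oplus U$ rather than at $\alpha$; your choice of basing the minor at $\alpha$ is arguably cleaner, since the contradiction hypothesis directly excludes the weight-one tuples.
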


\begin{proof}
Let us take $\beta'\in M$ so that $\alpha\symdiff\beta'\subset \alpha\symdiff
\beta$ and $|\alpha\symdiff \beta'|$ is
odd and minimal possible. If $|\alpha\symdiff \beta'|=1$, we are done. Assume thus that
$|\alpha\symdiff \beta'|=2k+3$ for some $k\in \en_0$. Applying the $\Delta$-matroid property on $\alpha$ and $\beta'$
(with $\alpha$ being the tuple changed) $k$ many times, we get a set of $2k$
variables $U\subset \alpha\symdiff \beta'$ such that $\alpha\oplus U\in M$ (since $\beta'$ is at
minimal odd distance from $\alpha$, in each step we need to switch exactly two variables of
$\alpha$). 
  
  Let the three
variables in $\alpha\symdiff \beta'\setminus U$ be $x$, $y$, and $z$ and
  consider the $\Delta$-matroid $P$ on
$x,y,z$ we get from $M$ by fixing the values of all $v\not\in\{x,y,z\}$ to
those of $\alpha\oplus U$ and deleting these variables afterward. Moreover, we
switch 0s and 1s so that the triple corresponding to
$(\alpha(x),\alpha(y),\alpha(z))$ is
$(0,0,0)$. 
We claim that $P$ is the interference $\Delta$-matroid: It contains the triple $(0,0,0)$
(because of $\alpha\oplus U$) and $(1,1,1)$ (as witnessed by $\beta'$) and does not
contain any of the triples $(1,0,0)$, $(0,1,0)$, or $(0,0,1)$ (for then $\beta'$
would not be at minimal odd distance from $\alpha$). Applying the $\Delta$-matroid
property on $(1,1,1)$ and $(0,0,0)$ in each of the three variables then
necessarily gives us the tuples $(0,1,1)$, $(1,0,1)$, and $(1,1,0)\in P$.
\end{proof}

\begin{corollary} Let $M$ be an interference-free $\Delta$-matroid. 
If $M$ contains at least one even tuple then the set
$\Even(M)$ of all even tuples of $M$ forms a $\Delta$-matroid. 
The same holds for $\Odd(M)$ the set of all odd tuples of $M$. In particular,
$M$ is efficiently coverable by the even $\Delta$-matroids $\Even(M)$ and $\Odd(M)$.
\end{corollary}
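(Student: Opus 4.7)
The plan is to verify the two $\Delta$-matroid axioms for $\Even(M)$ (and symmetrically for $\Odd(M)$) using the interference-freeness hypothesis, and then check the three conditions of Definition~\ref{def:coverable} by taking $M_\alpha$ to be $\Even(M)$ or $\Odd(M)$ according to the parity of $\alpha$. Nonemptiness of $\Even(M)$ is given; constant parity of its tuples is automatic, so once we show it is a $\Delta$-matroid, it is automatically an even $\Delta$-matroid.

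To prove the exchange axiom for $\Even(M)$, fix $\alpha,\beta\in\Even(M)$ and $v\in\alpha\symdiff\beta$. Since $M$ is a $\Delta$-matroid, there is some $u\in\alpha\symdiff\beta$ with $\alpha\oplus\{u,v\}\in M$. If $u\ne v$, then $\alpha\oplus u\oplus v$ has the same (even) parity as $\alpha$ and so lies in $\Even(M)$, and we are done. The only problematic case is $u=v$, i.e.\ the axiom in $M$ is witnessed only by the odd tuple $\alpha\oplus v$. Here I would apply Lemma~\ref{lemma:OddNeighbors} to the tuples $\alpha\oplus v$ and $\beta$, whose symmetric difference has odd size because they have opposite parities. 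The lemma produces $u'\in(\alpha\oplus v)\symdiff\beta$ with $(\alpha\oplus v)\oplus u'\in M$. Because $v\notin(\alpha\oplus v)\symdiff\beta$, we have $u'\ne v$ and $u'\in\alpha\symdiff\beta$, and $\alpha\oplus\{u',v\}=\alpha\oplus v\oplus u'\in M$ is even, so it lies in $\Even(M)$. The analogous argument (with the roles of even and odd swapped) shows $\Odd(M)$ is a $\Delta$-matroid whenever it is nonempty.

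For efficient coverability, set $M_\alpha:=\Even(M)$ when $\alpha$ is even and $M_\alpha:=\Odd(M)$ otherwise. Condition~1 of Definition~\ref{def:coverable} is exactly what we just proved. For Condition~2, observe that in Definition~\ref{def:even-neighbors} the requirement $u\ne v$ forces two even-neighbors to share parity; inductively, every $\beta\in M$ reachable from $\alpha$ has the same parity as $\alpha$ and therefore lies in $M_\alpha$. Condition~3 is vacuous, since $M_\alpha\subseteq M$ by construction, so $M_\alpha\setminus M=\emptyset$. Finally, given $M$ as a list of tuples, producing the sublists $\Even(M)$ and $\Odd(M)$ takes linear time, so the class is efficiently coverable in the sense of Definition~\ref{def:efficiently-coverable}. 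The only real obstacle is the edge case $u=v$ in the exchange axiom; it is precisely interference-freeness, funnelled through Lemma~\ref{lemma:OddNeighbors}, that rescues the argument there.
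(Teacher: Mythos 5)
Your proof is correct and follows essentially the same route as the paper: apply the $\Delta$-matroid axiom of $M$, and in the problematic case where the only witness is $u=v$ (so $\alpha\oplus v\in M$), invoke Lemma~\ref{lemma:OddNeighbors} on $\alpha\oplus v$ and $\beta$ to recover an even exchange. Your explicit verification of the three coverability conditions (parity preservation under even-neighbors and the vacuity of the third condition since $M_\alpha\subseteq M$) is exactly what the paper leaves implicit with ``the covering result immediately follows.''
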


\begin{proof}
  We show only that $\Even(M)$ is a $\Delta$-matroid; the case of $\Odd(M)$ is
  analogous and the covering result immediately follows.

  Take $\alpha,\beta\in \Even(M)$ and let $v$ be a variable $v$ such that $\alpha(v)\neq
  \beta(v)$. We want
  $u\neq v$ so that $\alpha(u)\neq \beta(u)$ and $\alpha\oplus u \oplus v\in M$. Apply the
  $\Delta$-matroid property of $M$ to $\alpha$ and $\beta$, changing the tuple
  $\alpha$. If we get $\alpha\oplus v\oplus u\in M$ for some $u$, we are done, so
  let us assume that we get $\alpha\oplus v\in M$ instead. But then we
  recover as follows: The tuples $\alpha\oplus v$ and $\beta$ have different parity, so
  by Lemma~\ref{lemma:OddNeighbors} there exists a variable $u$ so that $(\alpha\oplus v)(u)\neq
  \beta(u)$ (i.e. $u\in \alpha\symdiff \beta\setminus\{v\}$) and $\alpha\oplus
  v\oplus u\in M$.
\end{proof}

It is mentioned in \cite{Dalmau2003} (Section 4) that the interference
$\Delta$-matroid is among the forbidden minors for both local and binary
(minors B1 and L2) $\Delta$-matroids. Thus both of those classes are
efficiently coverable.

\begin{acks}
Most of this work was done while the authors were with IST
  Austria. This work was supported by \grantsponsor{ERC}{European Research
  Council}{https://erc.europa.eu/} under the
European Unions Seventh Framework Programme (FP7/2007-2013)/ERC grant
  agreement no \grantnum{ERC}{616160}.
\end{acks}

\bibliographystyle{ACM-Reference-Format}
\bibliography{citations}


\begin{thebibliography}{24}


\ifx \showCODEN    \undefined \def \showCODEN     #1{\unskip}     \fi
\ifx \showDOI      \undefined \def \showDOI       #1{#1}\fi
\ifx \showISBNx    \undefined \def \showISBNx     #1{\unskip}     \fi
\ifx \showISBNxiii \undefined \def \showISBNxiii  #1{\unskip}     \fi
\ifx \showISSN     \undefined \def \showISSN      #1{\unskip}     \fi
\ifx \showLCCN     \undefined \def \showLCCN      #1{\unskip}     \fi
\ifx \shownote     \undefined \def \shownote      #1{#1}          \fi
\ifx \showarticletitle \undefined \def \showarticletitle #1{#1}   \fi
\ifx \showURL      \undefined \def \showURL       {\relax}        \fi
\providecommand\bibfield[2]{#2}
\providecommand\bibinfo[2]{#2}
\providecommand\natexlab[1]{#1}
\providecommand\showeprint[2][]{arXiv:#2}

\bibitem[\protect\citeauthoryear{Berge}{Berge}{1962}]%
        {berge-graph-theory}
\bibfield{author}{\bibinfo{person}{Claude Berge}.}
  \bibinfo{year}{1962}\natexlab{}.
\newblock \bibinfo{booktitle}{\emph{The Theory of Graphs And Its
  Applications}}.
\newblock \bibinfo{publisher}{Methuen}, \bibinfo{address}{London}, Chapter
  Matching in the general case, \bibinfo{pages}{171--186}.
\newblock


\bibitem[\protect\citeauthoryear{Bouchet}{Bouchet}{1989}]%
        {bouchet-matchings}
\bibfield{author}{\bibinfo{person}{Andr\'e Bouchet}.}
  \bibinfo{year}{1989}\natexlab{}.
\newblock \showarticletitle{Matchings and $\Delta$-matroids}.
\newblock \bibinfo{journal}{\emph{Discrete Applied Mathematics}}
  \bibinfo{volume}{24}, \bibinfo{number}{1} (\bibinfo{year}{1989}),
  \bibinfo{pages}{55 -- 62}.
\newblock
\showISSN{0166-218X}
\urldef\tempurl%
\url{https://doi.org/10.1016/0166-218X(92)90272-C}
\showDOI{\tempurl}


\bibitem[\protect\citeauthoryear{Bouchet and Cunningham}{Bouchet and
  Cunningham}{1995}]%
        {bouchet-cunningham-1995}
\bibfield{author}{\bibinfo{person}{Andr\'e Bouchet} {and}
  \bibinfo{person}{William~H. Cunningham}.} \bibinfo{year}{1995}\natexlab{}.
\newblock \showarticletitle{Delta-Matroids, Jump Systems, and Bisubmodular
  Polyhedra}.
\newblock \bibinfo{journal}{\emph{SIAM Journal on Discrete Mathematics}}
  \bibinfo{volume}{8}, \bibinfo{number}{1} (\bibinfo{year}{1995}),
  \bibinfo{pages}{17--32}.
\newblock
\urldef\tempurl%
\url{https://doi.org/10.1137/S0895480191222926}
\showDOI{\tempurl}
\showeprint{https://doi.org/10.1137/S0895480191222926}


\bibitem[\protect\citeauthoryear{Bulatov}{Bulatov}{2011}]%
        {Bulatov03:conservative}
\bibfield{author}{\bibinfo{person}{Andrei Bulatov}.}
  \bibinfo{year}{2011}\natexlab{}.
\newblock \showarticletitle{Complexity of Conservative Constraint Satisfaction
  Problems}.
\newblock \bibinfo{journal}{\emph{ACM Trans. Comput. Logic}}
  \bibinfo{volume}{12}, \bibinfo{number}{4}, Article \bibinfo{articleno}{24}
  (\bibinfo{date}{July} \bibinfo{year}{2011}), \bibinfo{numpages}{66}~pages.
\newblock
\showISSN{1529-3785}
\urldef\tempurl%
\url{https://doi.org/10.1145/1970398.1970400}
\showDOI{\tempurl}


\bibitem[\protect\citeauthoryear{Bulatov}{Bulatov}{2006}]%
        {bulatov-2006-three-csp}
\bibfield{author}{\bibinfo{person}{Andrei~A. Bulatov}.}
  \bibinfo{year}{2006}\natexlab{}.
\newblock \showarticletitle{A Dichotomy Theorem for Constraint Satisfaction
  Problems on a 3-element Set}.
\newblock \bibinfo{journal}{\emph{J. ACM}} \bibinfo{volume}{53},
  \bibinfo{number}{1} (\bibinfo{date}{Jan.} \bibinfo{year}{2006}),
  \bibinfo{pages}{66--120}.
\newblock
\showISSN{0004-5411}
\urldef\tempurl%
\url{https://doi.org/10.1145/1120582.1120584}
\showDOI{\tempurl}


\bibitem[\protect\citeauthoryear{Bulatov}{Bulatov}{2017}]%
        {bulatov-dichotomy}
\bibfield{author}{\bibinfo{person}{Andrei~A. Bulatov}.}
  \bibinfo{year}{2017}\natexlab{}.
\newblock \showarticletitle{A dichotomy theorem for nonuniform CSPs}. In
  \bibinfo{booktitle}{\emph{Proceedings of the 58th Annual IEEE Symposium on
  Foundations of Computer Science}}. \bibinfo{publisher}{IEEE},
  \bibinfo{pages}{319--330}.
\newblock
\urldef\tempurl%
\url{https://doi.org/10.1109/FOCS.2017.37}
\showDOI{\tempurl}


\bibitem[\protect\citeauthoryear{Cai, Lu, and Xia}{Cai et~al\mbox{.}}{2011}]%
        {Holant-problem-journal-version}
\bibfield{author}{\bibinfo{person}{Jin-Yi Cai}, \bibinfo{person}{Pinyan Lu},
  {and} \bibinfo{person}{Mingji Xia}.} \bibinfo{year}{2011}\natexlab{}.
\newblock \showarticletitle{Computational Complexity of Holant Problems}.
\newblock \bibinfo{journal}{\emph{SIAM J. Comput.}} \bibinfo{volume}{40},
  \bibinfo{number}{4} (\bibinfo{date}{July} \bibinfo{year}{2011}),
  \bibinfo{pages}{1101--1132}.
\newblock
\showISSN{0097-5397}
\urldef\tempurl%
\url{https://doi.org/10.1137/100814585}
\showDOI{\tempurl}


\bibitem[\protect\citeauthoryear{Dalmau and Ford}{Dalmau and Ford}{2003}]%
        {Dalmau2003}
\bibfield{author}{\bibinfo{person}{Victor Dalmau} {and} \bibinfo{person}{Daniel
  Ford}.} \bibinfo{year}{2003}\natexlab{}.
\newblock \bibinfo{booktitle}{\emph{Mathematical Foundations of Computer
  Science 2003: 28th International Symposium, MFCS 2003, Bratislava, Slovakia,
  August 25-29, 2003. Proceedings}}.
\newblock \bibinfo{publisher}{Springer Berlin Heidelberg},
  \bibinfo{address}{Berlin, Heidelberg}, Chapter Generalized Satisfiability
  with Limited Occurrences per Variable: A Study through Delta-Matroid Parity,
  \bibinfo{pages}{358--367}.
\newblock
\showISBNx{978-3-540-45138-9}
\urldef\tempurl%
\url{https://doi.org/10.1007/978-3-540-45138-9_30}
\showDOI{\tempurl}


\bibitem[\protect\citeauthoryear{Dvo\v{r}\'ak and Kupec}{Dvo\v{r}\'ak and
  Kupec}{2015}]%
        {dvorak-kupec-planar-csp}
\bibfield{author}{\bibinfo{person}{Zden\v{e}k Dvo\v{r}\'ak} {and}
  \bibinfo{person}{Martin Kupec}.} \bibinfo{year}{2015}\natexlab{}.
\newblock \showarticletitle{On Planar {B}oolean {CSP}}.
\newblock In \bibinfo{booktitle}{\emph{ICALP '15}}. \bibinfo{series}{Lecture
  Notes in Computer Science}, Vol.~\bibinfo{volume}{9134}.
  \bibinfo{publisher}{Springer Berlin Heidelberg}, \bibinfo{address}{Berlin,
  Heidelberg}, \bibinfo{pages}{432--443}.
\newblock
\showISBNx{978-3-662-47671-0}
\urldef\tempurl%
\url{https://doi.org/10.1007/978-3-662-47672-7_35}
\showDOI{\tempurl}


\bibitem[\protect\citeauthoryear{Edmonds}{Edmonds}{1965}]%
        {Edmonds:65}
\bibfield{author}{\bibinfo{person}{Jack Edmonds}.}
  \bibinfo{year}{1965}\natexlab{}.
\newblock \showarticletitle{Path, trees, and flowers}.
\newblock \bibinfo{journal}{\emph{Canadian J. Math.}}  \bibinfo{volume}{17}
  (\bibinfo{year}{1965}), \bibinfo{pages}{449--467}.
\newblock


\bibitem[\protect\citeauthoryear{Feder}{Feder}{2001}]%
        {feder-delta-matroids-fanout}
\bibfield{author}{\bibinfo{person}{Tom\'as Feder}.}
  \bibinfo{year}{2001}\natexlab{}.
\newblock \showarticletitle{Fanout limitations on constraint systems}.
\newblock \bibinfo{journal}{\emph{Theoretical Computer Science}}
  \bibinfo{volume}{255}, \bibinfo{number}{1--2} (\bibinfo{year}{2001}),
  \bibinfo{pages}{281--293}.
\newblock
\showISSN{0304-3975}
\urldef\tempurl%
\url{https://doi.org/10.1016/S0304-3975(99)00288-1}
\showDOI{\tempurl}


\bibitem[\protect\citeauthoryear{Feder and Ford}{Feder and Ford}{2006}]%
        {feder-ford-matroids}
\bibfield{author}{\bibinfo{person}{Tom\'as Feder} {and} \bibinfo{person}{Daniel
  Ford}.} \bibinfo{year}{2006}\natexlab{}.
\newblock \showarticletitle{Classification of Bipartite {B}oolean Constraint
  Satisfaction through Delta-Matroid Intersection}.
\newblock \bibinfo{journal}{\emph{SIAM Journal on Discrete Mathematics}}
  \bibinfo{volume}{20}, \bibinfo{number}{2} (\bibinfo{year}{2006}),
  \bibinfo{pages}{372--394}.
\newblock
\urldef\tempurl%
\url{https://doi.org/10.1137/S0895480104445009}
\showDOI{\tempurl}
\showeprint{http://dx.doi.org/10.1137/S0895480104445009}


\bibitem[\protect\citeauthoryear{Feder and Vardi}{Feder and Vardi}{1999}]%
        {feder-vardi}
\bibfield{author}{\bibinfo{person}{Tom\'as Feder} {and}
  \bibinfo{person}{Moshe~Y. Vardi}.} \bibinfo{year}{1999}\natexlab{}.
\newblock \showarticletitle{The computational structure of monotone monadic
  {SNP} and constraint satisfaction: a study through {D}atalog and group
  theory}.
\newblock \bibinfo{journal}{\emph{SIAM J. Comput.}} \bibinfo{volume}{28},
  \bibinfo{number}{1} (\bibinfo{year}{1999}), \bibinfo{pages}{57--104}.
\newblock
\showISSN{1095-7111}
\urldef\tempurl%
\url{https://doi.org/10.1137/S0097539794266766}
\showDOI{\tempurl}


\bibitem[\protect\citeauthoryear{Geelen, Iwata, and Murota}{Geelen
  et~al\mbox{.}}{2003}]%
        {Geelen2003377}
\bibfield{author}{\bibinfo{person}{James~F. Geelen}, \bibinfo{person}{Satoru
  Iwata}, {and} \bibinfo{person}{Kazuo Murota}.}
  \bibinfo{year}{2003}\natexlab{}.
\newblock \showarticletitle{The linear delta-matroid parity problem}.
\newblock \bibinfo{journal}{\emph{Journal of Combinatorial Theory, Series B}}
  \bibinfo{volume}{88}, \bibinfo{number}{2} (\bibinfo{year}{2003}),
  \bibinfo{pages}{377 -- 398}.
\newblock
\showISSN{0095-8956}
\urldef\tempurl%
\url{https://doi.org/10.1016/S0095-8956(03)00039-X}
\showDOI{\tempurl}


\bibitem[\protect\citeauthoryear{Hopcroft and Tarjan}{Hopcroft and
  Tarjan}{1974}]%
        {hopcroft-tarjan-planarity}
\bibfield{author}{\bibinfo{person}{John Hopcroft} {and} \bibinfo{person}{Robert
  Tarjan}.} \bibinfo{year}{1974}\natexlab{}.
\newblock \showarticletitle{Efficient Planarity Testing}.
\newblock \bibinfo{journal}{\emph{J. ACM}} \bibinfo{volume}{21},
  \bibinfo{number}{4} (\bibinfo{date}{Oct.} \bibinfo{year}{1974}),
  \bibinfo{pages}{549--568}.
\newblock
\showISSN{0004-5411}
\urldef\tempurl%
\url{https://doi.org/10.1145/321850.321852}
\showDOI{\tempurl}


\bibitem[\protect\citeauthoryear{Istrate}{Istrate}{1997}]%
        {Istrate97lookingfor}
\bibfield{author}{\bibinfo{person}{Gabriel Istrate}.}
  \bibinfo{year}{1997}\natexlab{}.
\newblock \bibinfo{booktitle}{\emph{Looking for a version of {S}chaefer's
  dichotomy theorem when each variable occurs at most twice}}.
\newblock \bibinfo{type}{{T}echnical {R}eport}.
  \bibinfo{institution}{University of Rochester}, \bibinfo{address}{Rochester,
  NY, USA}.
\newblock


\bibitem[\protect\citeauthoryear{Jensen and Korte}{Jensen and Korte}{1982}]%
        {jensen-korte}
\bibfield{author}{\bibinfo{person}{Per~M. Jensen} {and}
  \bibinfo{person}{Bernhard Korte}.} \bibinfo{year}{1982}\natexlab{}.
\newblock \showarticletitle{Complexity of Matroid Property Algorithms}.
\newblock \bibinfo{journal}{\emph{SIAM J. Comput.}} \bibinfo{volume}{11},
  \bibinfo{number}{1} (\bibinfo{year}{1982}), \bibinfo{pages}{184--190}.
\newblock
\urldef\tempurl%
\url{https://doi.org/10.1137/0211014}
\showDOI{\tempurl}
\showeprint{https://doi.org/10.1137/0211014}


\bibitem[\protect\citeauthoryear{Kazda, Kolmogorov, and Rol\'inek}{Kazda
  et~al\mbox{.}}{2017}]%
        {kazda-kolmogorov-rolinek-soda}
\bibfield{author}{\bibinfo{person}{Alexandr Kazda}, \bibinfo{person}{Vladimir
  Kolmogorov}, {and} \bibinfo{person}{Michal Rol\'inek}.}
  \bibinfo{year}{2017}\natexlab{}.
\newblock \showarticletitle{Even Delta-Matroids and the Complexity of Planar
  {B}oolean {CSP}s}. In \bibinfo{booktitle}{\emph{Proceedings of the
  Twenty-Eighth Annual ACM-SIAM Symposium on Discrete Algorithms (SODA'17)}}.
  \bibinfo{publisher}{{SIAM}}, \bibinfo{pages}{307--326}.
\newblock
\urldef\tempurl%
\url{https://doi.org/10.1137/1.9781611974782.20}
\showDOI{\tempurl}


\bibitem[\protect\citeauthoryear{Lov\'asz}{Lov\'asz}{1978}]%
        {lovasz-matroids}
\bibfield{author}{\bibinfo{person}{L. Lov\'asz}.}
  \bibinfo{year}{1978}\natexlab{}.
\newblock \showarticletitle{The matroid matching problem}. In
  \bibinfo{booktitle}{\emph{Algebraic Methods in Graph Theory, Proceedings of a
  Conference Held in Szeged}}. \bibinfo{pages}{495--517}.
\newblock


\bibitem[\protect\citeauthoryear{Moret}{Moret}{1988}]%
        {Moret:1988:PNP:49097.49099}
\bibfield{author}{\bibinfo{person}{Bernard M.~E. Moret}.}
  \bibinfo{year}{1988}\natexlab{}.
\newblock \showarticletitle{Planar {NAE3SAT} is in {P}}.
\newblock \bibinfo{journal}{\emph{SIGACT News}} \bibinfo{volume}{19},
  \bibinfo{number}{2} (\bibinfo{date}{June} \bibinfo{year}{1988}),
  \bibinfo{pages}{51--54}.
\newblock
\showISSN{0163-5700}
\urldef\tempurl%
\url{https://doi.org/10.1145/49097.49099}
\showDOI{\tempurl}


\bibitem[\protect\citeauthoryear{Schaefer}{Schaefer}{1978}]%
        {Schaefer78:complexity}
\bibfield{author}{\bibinfo{person}{Thomas~J. Schaefer}.}
  \bibinfo{year}{1978}\natexlab{}.
\newblock \showarticletitle{The Complexity of Satisfiability Problems}. In
  \bibinfo{booktitle}{\emph{Proceedings of the Tenth Annual ACM Symposium on
  Theory of Computing}} \emph{(\bibinfo{series}{STOC '78})}.
  \bibinfo{publisher}{ACM}, \bibinfo{address}{New York, NY, USA},
  \bibinfo{pages}{216--226}.
\newblock
\urldef\tempurl%
\url{https://doi.org/10.1145/800133.804350}
\showDOI{\tempurl}


\bibitem[\protect\citeauthoryear{Tsang}{Tsang}{1993}]%
        {books/daglib/0076790}
\bibfield{author}{\bibinfo{person}{Edward P.~K. Tsang}.}
  \bibinfo{year}{1993}\natexlab{}.
\newblock \bibinfo{booktitle}{\emph{Foundations of constraint satisfaction}}.
\newblock \bibinfo{publisher}{Academic Press}, \bibinfo{address}{London and San
  Diego}.
\newblock
\showISBNx{978-0-12-701610-8}


\bibitem[\protect\citeauthoryear{Tutte}{Tutte}{1947}]%
        {tutte-matchings}
\bibfield{author}{\bibinfo{person}{W.~T. Tutte}.}
  \bibinfo{year}{1947}\natexlab{}.
\newblock \showarticletitle{The Factorization of Linear Graphs}.
\newblock \bibinfo{journal}{\emph{Journal of the London Mathematical Society}}
  \bibinfo{volume}{s1-22}, \bibinfo{number}{2} (\bibinfo{year}{1947}),
  \bibinfo{pages}{107--111}.
\newblock
\urldef\tempurl%
\url{https://doi.org/10.1112/jlms/s1-22.2.107}
\showDOI{\tempurl}


\bibitem[\protect\citeauthoryear{Zhuk}{Zhuk}{2017}]%
        {zhuk-dichotomy}
\bibfield{author}{\bibinfo{person}{Dmitriy Zhuk}.}
  \bibinfo{year}{2017}\natexlab{}.
\newblock \showarticletitle{A proof of CSP dichotomy conjecture}. In
  \bibinfo{booktitle}{\emph{Proceedings of the 58th Annual IEEE Symposium on
  Foundations of Computer Science}}. \bibinfo{publisher}{IEEE},
  \bibinfo{pages}{331--342}.
\newblock
\urldef\tempurl%
\url{https://doi.org/10.1109/FOCS.2017.38}
\showDOI{\tempurl}


\end{thebibliography}
\end{document}